\definecolor{DarkGray}{rgb}{0.1,0.1,0.5}
\newcommand{\ket}[1]{{|#1\rangle}}
\newcommand{\lket}[1]{{\ket{\overline{#1}}}}
\newcommand{\abs}[1]{{\lvert #1\rvert}}	% since the delimiters do not scale, it might be a good idea to add a dummy {} at the end, so \abs{big expression}^2 has the superscript at a low height
\newcommand{\binomial}[2]{\ensuremath{\left(\begin{smallmatrix}#1 \\ #2 \end{smallmatrix}\right)}}
\def\N {{\bf N}}
\DeclareMathOperator{\poly}{\operatorname{poly}}
\newcounter{sprows}
\newlength{\spheight}
\newlength{\spraise}
\newlength{\commentslength}
\newcommand{\rem}[1]{}
\newtheorem{theorem}{Theorem}[section]
\newtheorem{lemma}[theorem]{Lemma}
\newtheorem{claim}[theorem]{Claim}
\newtheorem{definition}[theorem]{Definition}
\newfont{\subsubsecfnt}{ptmri8t at 10pt}
\renewcommand{\subparagraph}[1]{\smallskip{\subsubsecfnt #1.}}
\numberwithin{equation}{section} % makes Eq. numbers (section.number)
\newcommand{\eqnref}[1]{\hyperref[#1]{{(\ref*{#1})}}}
\newcommand{\thmref}[1]{\hyperref[#1]{{Theorem~\ref*{#1}}}}
\newcommand{\lemref}[1]{\hyperref[#1]{{Lemma~\ref*{#1}}}}
\newcommand{\corref}[1]{\hyperref[#1]{{Corollary~\ref*{#1}}}}
\newcommand{\defref}[1]{\hyperref[#1]{{Definition~\ref*{#1}}}}
\newcommand{\secref}[1]{\hyperref[#1]{{Section~\ref*{#1}}}}
\newcommand{\figref}[1]{\hyperref[#1]{{Figure~\ref*{#1}}}}
\newcommand{\tabref}[1]{\hyperref[#1]{{Table~\ref*{#1}}}}
\newcommand{\remref}[1]{\hyperref[#1]{{Remark~\ref*{#1}}}}
\newcommand{\appref}[1]{\hyperref[#1]{{Appendix~\ref*{#1}}}}
\newcommand{\claimref}[1]{\hyperref[#1]{{Claim~\ref*{#1}}}}
\newcommand{\propref}[1]{\hyperref[#1]{{Proposition~\ref*{#1}}}}
\newcommand{\exampleref}[1]{\hyperref[#1]{{Example~\ref*{#1}}}}
\newcommand{\conjref}[1]{\hyperref[#1]{{Conjecture~\ref*{#1}}}}
\begin{document}

\def\malig{\text{mal}}
\def\benign{\text{ben}}
\def\chiIn{\chi_{\text{in}}}
\def\chiOut{\chi_{\text{out}}}
\def\zetaIn{\zeta_{\text{in}}}
\def\zetaOut{\zeta_{\text{out}}}

\newcommand{\eventFont}[1]{{\text{\sf{#1}}}}
\def\out{{\eventFont{out}}}
\def\incorrect{{\eventFont{incorrect}}}
\def\good{{\eventFont{good}}}
\def\Good{{\text{Good}}}
\def\bad{{\eventFont{bad}}}
\def\best{{\eventFont{best}}}
\def\acceptX{\eventFont{accept}}
\def\acceptZ{\eventFont{accept}}
\def\reject{\neg \eventFont{accept}}
\def\rejectX{\neg \acceptX}
\def\rejectZ{\neg \acceptZ}
\def\goodZ{\good_Z}
\def\bestZ{\best_Z}
\def\badZ{\bad_Z}
\def\goodX{\good_X}
\def\badX{\bad_X}
\def\accept{\eventFont{accept}}
\def\kBest{k_{\text{best}}}
\def\kGood{k_{\text{good}}}
\def\goodEC{\good_{\text{EC}}}
\def\badEC{\bad_{\text{EC}}}
\def\gammaMin{\gamma_\text{min}}
\def\gammaMax{\gamma_\text{max}}
\def\gammaTH{\gamma_{\text{th}}}
\def\Pevent{\mathcal{P}}
\def\fail{\eventFont{fail}}

\def\poly{\mathcal{Q}}
\def\PrC{\mathcal{P}}
\def\kMin{k_\text{min}}
\def\event{\text{event}}
\def\kMax{k_\text{max}}

\def\threshOverlap{1.32 \times 10^{-3}}
\def\threshSteane{1.24 \times 10^{-3}}

\title{Fault-tolerant ancilla preparation and noise threshold lower bounds for the $23$-qubit Golay code}
\author{%
Adam Paetznick\thanks{School of Computer Science and Institute for Quantum Computing, University of Waterloo}
\and Ben W.~Reichardt\thanks{Electrical Engineering Department, University of Southern California}
}
\date{}

\maketitle

\begin{abstract}
In fault-tolerant quantum computing schemes, the overhead is often dominated by the cost of preparing codewords reliably.  This cost generally increases quadratically with the block size of the underlying quantum error-correcting code.  In consequence, large codes that are otherwise very efficient have found limited fault-tolerance applications.  Fault-tolerant preparation circuits therefore are an important target for optimization.  

We study the Golay code, a $23$-qubit quantum error-correcting code that protects the logical qubit to a distance of seven.  In simulations, even using a na{\"i}ve ancilla preparation procedure, the Golay code is competitive with other codes both in terms of overhead and the tolerable noise threshold.  We provide two simplified circuits for fault-tolerant preparation of Golay code-encoded ancillas.  The new circuits minimize error propagation, reducing the overhead by roughly a factor of four compared to standard encoding circuits.  By adapting the malignant set counting technique to depolarizing noise, we further prove a threshold above $\threshOverlap$ noise per gate.  
\end{abstract}

\section{Introduction} \label{sec:introduction}

A main obstacle to building a quantum computer is noise.  A fault-tolerance threshold theorem implies that reliable quantum computation is possible in principle~\cite{AharonovBenOr99, Kitaev96b}.  So long as the noise is weak enough, the probability that a computation executes correctly can be made arbitrarily close to one at the cost of increased circuit complexity, i.e., overhead. Fault-tolerant quantum circuit constructions typically aim to maximize the tolerable noise rate while maintaining modest overhead.

A quantum fault-tolerance scheme generally works by encoding data into a quantum error-correcting code and alternating steps of fault-tolerant computation and error correction (\figref{fig:ft-ckt-fragment}). The error-correction step, intended for recovery from accumulated noise, is normally much more complicated than the computation step.  Therefore error correction is the dominant factor in determining the scheme's resource overhead, and is usually the major bottleneck in determining the highest tolerable noise rate or ``noise threshold."  In particular, the details of how error correction is implemented are more important than the properties of the underlying quantum error-correcting~code.  

\def\ecBox{\push{\fbox{\raisebox{0em}[1em][0.5em]{EC}}}}

\begin{figure}
\centering
\includegraphics[scale=1]{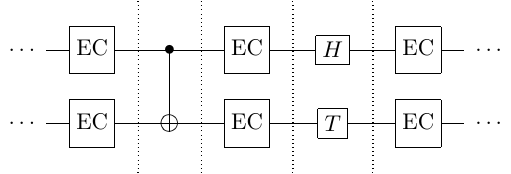}
\caption{\label{fig:ft-ckt-fragment}
The circuit fragment above shows an example of alternating rounds of fault-tolerant error-correction~(EC) and computation.  The wires represent encoded data blocks and the gate symbols (CNOT, H, T) represent encoded operations.}
\end{figure}

For example, with the nine-qubit Bacon-Shor code, a fault-tolerant logical controlled-not (CNOT) gate between two code blocks can be implemented using nine physical CNOT gates, whereas an optimized error-correction method uses $24$ physical CNOT gates~\cite{AliferisCross07subsystem}.  For larger quantum error-correcting codes, the asymmetry between computation and error correction is greater still.  With the $23$-qubit Golay code, a fault-tolerant logical CNOT gate requires only $23$ physical CNOT gates, whereas a standard error-correction method uses $2400$ physical CNOT gates.  

Larger quantum error-correcting codes, with higher distance and possibly higher rates, can still outperform smaller codes.  Separate numerical studies by Steane~\cite{Steane03} (see also~\cite{Steane04computer}) and Cross, DiVincenzo and Terhal~\cite{CrossDiVincenzoTerhal} have each compared fault-tolerance schemes based on a variety of codes.  They identify larger codes that, compared to the seven-qubit Steane code and the nine-qubit Bacon-Shor code, can tolerate higher noise rates with comparable resource requirements.  In particular, their estimates single out the Golay code as a top performer.  

We give an optimized fault-tolerant error-correction procedure for the Golay code that uses only $640$ CNOT gates, while also being highly parallelizable.  Our derivation is based on two main ideas.  First, we simplify Steane's Latin-rectangle-based scheme for preparing encoded $\ket 0$ states~\cite{Steane02}, by taking advantage of overlaps between the code's stabilizers. Second, we reduce the overall number of encoded $\ket 0$ states required for error correction by carefully tracking the exact propagation of errors. Both ideas are generally applicable to other large quantum error-correcting codes.

We then prove a lower bound on the threshold for depolarizing noise of $\threshOverlap$ noise per gate.  This result is an order of magnitude improvement over the best previous lower bound for the Golay code~\cite{AliferisCross07subsystem} based on an adversarial noise model. It is also about $5$ percent better than the lower bound due to~\cite{Aliferis2009} also for depolarizing noise, but based on a dramatically different fault-tolerance scheme. Our proof uses malignant set counting~\cite{AliferisGottesmanPreskill05}, extensively tailored for our specific error correction circuits and for depolarizing noise.  Instead of assuming adversarial noise at higher levels of code concatenation, the counting procedure keeps track of multiple types of malignant events to create a transformed stochastic noise model for each level, allowing for a more accurate analysis.

\subsection{Fault-tolerant error correction}

There are exceptions to the common paradigm, sketched in \figref{fig:ft-ckt-fragment}, of alternating computational and error correction steps.  In a scheme proposed by Knill, for example, error-correction and computation are performed simultaneously by teleporting into specially prepared ancilla states~\cite{Knill04schemes}.  Zalka~\cite{Zalka97} has suggested balancing the costs of computation and error correction by having multiple computation steps between error-correction rounds, but error propagation between code blocks makes such a scheme challenging to analyze precisely.  Surface-code-based quantum fault-tolerance schemes make a more radical change: they implement encoded gates using gradual code deformation, during which error correction occurs frequently.  However, while these schemes appear very promising~\cite{RaussendorfHarringtonGoyal05ftoneway, RaussendorfHarrington06onewaythreshold}, they have proved difficult to analyze precisely and rigorously~\cite{DennisKitaevLandahlPreskill01topological}.  

A variety of error-correction techniques have been studied, and three broad categories are so-called Shor-type~\cite{Shor96}, Steane-type~\cite{Steane97} and Knill-type~\cite{Knill04schemes} error correction.  This is only a rough categorization, and it leaves significant room for introducing new ideas and optimization within or beyond these categories; see, e.g.,~\cite{Reichardt04, DiVincenzoAliferis06slow, AliferisCross07subsystem}. The Shor-, Steane- and Knill-type error-correction schemes rely on the use of ancillary qubits to extract error information from the data blocks.  Before interacting with the data, the ancilla qubits need to be prepared in an entangled state. (Surface-code-based schemes are again an exception, and the nine-qubit Bacon-Shor code is an exception at the first level of code concatenation.)  
  
As a concrete example, consider Steane-type error-correction. Arbitrary errors can be written as linear combinations of tensor products of Pauli errors: the identity $I = \binomial{1&0}{0&1}$, a bit-flip error $X = \binomial{0&1}{1&0}$, a phase-flip error $Z = \binomial{1&0}{0&-1}$, and both bit- and phase-flip errors $Y = \binomial{0&-i}{i&0} = i X Z$.  Each tensor product can itself be decomposed as a product of a $Z$-error part---a tensor product of $I$ and $Z$ operators---and an $X$-error part---a tensor product of $I$ and $X$ operators.  Steane error-correction works by correcting $Z$ and $X$ errors separately.  First, $Z$ errors are copied from the data to an encoded $\ket{0}$ ancilla by transversal CNOT gates, i.e., CNOT gates from each qubit of the ancilla block to the corresponding qubit of the data.  $X$ errors are similarly copied onto an encoded ancilla state $\ket{+} = \frac{1}{\sqrt 2}(\ket 0 + \ket 1)$.  The ancillas are then measured in order to determine a correction.
  
Preparing ancilla states can be complicated, particularly because errors in the preparation circuit can spread through the ancilla block.  For example, a single physical fault may lead to errors on multiple ancilla qubits.  The code is limited by its distance and cannot necessarily protect against such correlated errors. As a result, the ancilla states themselves must be checked for errors.  

The complexity of verifying prepared ancillas against errors grows quickly as the code distance increases. For large codes, verification of encoded $\ket 0$ and $\ket +$ is accomplished by using additional identically prepared ``auxiliary'' ancillas.  In a manner similar to error correction, errors from the initial ancilla are copied onto the auxiliary ancillas and then the auxiliary ancillas are measured.  If measurements imply the presence of an error, then all of the ancillas are discarded and the entire process begins anew.  Otherwise, the ancilla is accepted and may be used for error correction. Of course, the auxilliary ancillas may also contain errors.  These errors can spread to the initial ancilla and invalidate the verification.  Thus the auxiliary ancillas must {also} be checked for errors by yet more ancillas.  The end result is a series of recursive verifications that involves many encoded ancillas, and dominates the overall overhead cost of error correction.

To maximize efficiency, preparation and verification circuits may be constructed using a pipeline architecture in which part of the computer is dedicated to preparing many ancillas in parallel. Even so, ancilla production constitutes the majority of the space requirement for a fault-tolerant quantum circuit. In~\cite{IsalovicNemanjaPatelKubiatowicz08}, for example, the ancilla pipeline is estimated to take up to $68$ percent of the entire circuit footprint.

For the Golay code, this recursive verification technique requires twelve encoded ancillas and at least $1177$ CNOT gates.  One such circuit is shown in \figref{fig:TwelveAncillaVerifyCkt}. Variants of this circuit have been used in previous studies of the Golay code, including in~\cite{Steane03} and~\cite{CrossDiVincenzoTerhal}.  The construction of this circuit implicitly assumes a kind of worst-case error behavior in which all possible codeword preparation circuits propagate errors in the same way.  However, DiVincenzo and Aliferis~\cite{DiVincenzoAliferis06slow} have observed that different preparation circuits exhibit different error propagation behavior, and this can be exploited.  By considering many different preparation circuits, we observe that some circuits give favorable combinations of correlated errors, and thus require fewer error verification steps, substantially reducing overhead.  In \secref{sec:ancilla-verification} we provide two circuits that require only four encoded $\ket{0}$ ancillas and as few as $297$ CNOT gates.  One of these circuits is specified by Figures~\ref{fig:FourAncillaVerifyCkt} and~\ref{fig:overlap-prep-ckt}, and \tabref{tbl:overlap-permutations}.  

\begin{figure}
\centering
\includegraphics[width=14.2cm]{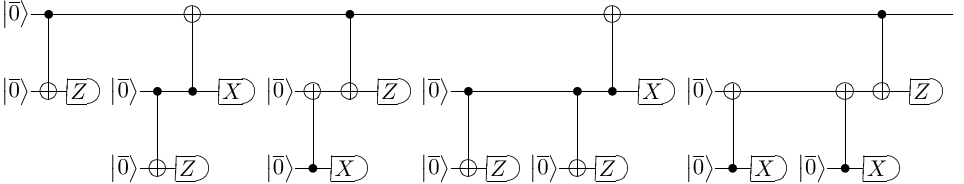}
\caption{This circuit produces a single Golay encoded $\ket 0$ state that is ready to be used in fault-tolerant error correction.  Each of the twelve encoded $\ket{0}$ ancillas, denoted $\lket{0}$, is identically prepared using the Steane Latin rectangle method (see \secref{sec:RandomSteane}).  The wires represent $23$-qubit code blocks and the indicated CNOT and measurement operations are transversal.
} \label{fig:TwelveAncillaVerifyCkt}
\end{figure}

The overhead required to prepare a fault-tolerant ancilla depends on the probability that any errors are detected.  \tabref{tbl:verify-compare} shows estimates of the probability that all of the verification stages accept along with the corresponding expected resource requirements for the different verification circuits. For depolarizing noise rates near $p = 10^{-3}$, our circuits reduce both the expected number of qubits and the expected number of CNOT gates by roughly a factor of four over the twelve-ancilla circuit.  A more detailed analysis of the acceptance probability and overhead is given in \secref{sec:Overhead}.

\subsection{Code concatenation and the noise threshold}
\label{sec:codeConcat-noiseThresh}

We consider fault-tolerant, noisy simulations constructed by compiling an ideal quantum circuit into a sequence of \emph{rectangles}, each of which contains an encoded operation and a trailing error correction~(TEC).  Following~\cite{AliferisGottesmanPreskill05}, we define a rectangle to be \emph{correct} if the action of the rectangle followed by an ideal decoder, i.e., a decoder containing no errors, is equivalent to the action of an ideal decoder followed by an ideal implementation of the corresponding gate.  If a rectangle is not correct then it is \emph{incorrect}.  In other words, a correct rectangle effectively acts as an encoded version of the intended gate.  If all rectangles are correct then a simple induction argument shows that the compiled, noisy circuit successfully simulates the original ideal circuit.  

\begin{figure}
\centering
\includegraphics[scale=1]{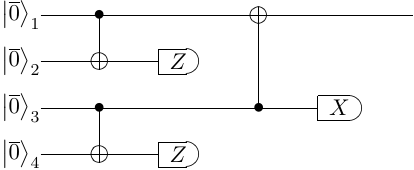}
\caption{Our simplified ancilla preparation and verification circuit uses only four encoded $\ket 0$ ancillas.  The ancillas are prepared using different encoding circuits, shown in \figref{fig:overlap-prep-ckt} and \tabref{tbl:overlap-permutations}.}  
\label{fig:FourAncillaVerifyCkt}
\end{figure}

\begin{figure}
\centering
\includegraphics[scale=1]{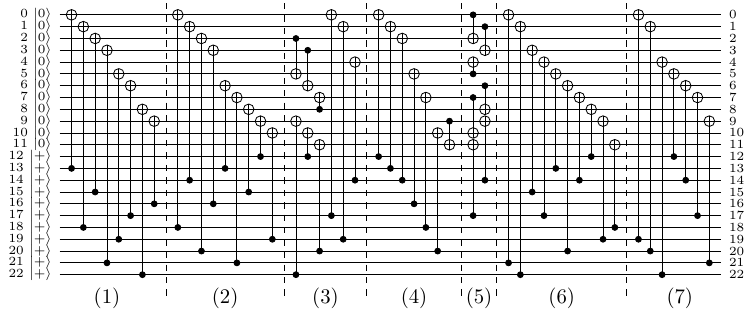}
\caption{An optimized circuit for preparing $\ket 0$ encoded in the Golay code uses $57$ CNOT gates applied in seven rounds.  Gates in the same round are applied in parallel.  The construction is detailed in \secref{sec:Overlap}.}
\label{fig:overlap-prep-ckt}
\end{figure}

\begin{table}
\centering
\small
\begin{tabular}{c|c}
\hline\hline
Ancilla & Qubit permutation \tabularnewline
\hline
$\lket{0}_2$ &  (0, 20, 13, 7, 12, 14, 1)(2, 11)(3, 19, 5, 4, 8, 22, 6, 15, 10, 16, 9, 18, 21, 17) %(20, 0, 11, 19, 8, 4, 15, 12, 22, 18, 16, 2, 14, 7, 1, 10, 9, 3, 21, 5, 13, 17, 6) 
\tabularnewline
$\lket{0}_3$ &  (0, 14, 6, 12, 16, 2, 11, 22, 17, 21, 9, 20, 5, 7, 3, 13, 18, 4, 15, 1, 10, 8, 19) %(14, 10, 11, 13, 15, 7, 12, 3, 19, 20, 8, 22, 16, 18, 6, 1, 2, 21, 4, 0, 5, 9, 17) 
\tabularnewline
$\lket{0}_4$ & (0, 12, 4, 17, 9, 6, 1)(2, 10, 18, 22, 21, 16, 13)(3, 11, 20, 15, 7, 19, 5)(8)(14) %(12, 0, 10, 11, 17, 3, 1, 19, 8, 6, 18, 20, 4, 2, 14, 7, 13, 9, 22, 5, 15, 16, 21)
\\ \hline \hline
\end{tabular}
\caption{The first ancilla in \figref{fig:FourAncillaVerifyCkt} is prepared using the circuit of \figref{fig:overlap-prep-ckt}.  The other three ancillas are prepared in the same way, except with the qubits rearranged according to the above permutations.  %\comment{What is the notation?}
} \label{tbl:overlap-permutations}
\end{table}

\begin{table}
\centering 
\begin{tabular}{c|cccccccc}
\hline \hline 
Verification & $\Pr$[accept] & E[\# qubits] & min \# CNOTs & E[\# CNOTs] \tabularnewline
\hline
Steane-$12$ & $0.419 \pm 0.001$ & $5183 \pm 14.2$ & $1177$ & $1782 \pm 4.9$ \tabularnewline
Steane-$4$ & $0.648 \pm 0.002$ & $1413 \pm  3.7$ & $377$ & $497.6 \pm 1.3$ \tabularnewline
Overlap-$4$ & $0.633 \pm 0.002$ & $1399 \pm 3.8$ & $297$ & $399.4 \pm 1.1$ \tabularnewline
\hline \hline 
\end{tabular}
\caption{Estimates of the acceptance probability and overhead for the twelve-ancilla fault-tolerant ancilla preparation circuit and our two optimized circuits, at a depolarizing noise rate of $p = 10^{-3}$. The Steane-$4$ circuit is based on ancillas prepared according to~\tabref{tab:randomizedSchedules}.  Overlap-$4$ is based on ancillas prepared according to~\figref{fig:overlap-prep-ckt} and~\tabref{tbl:overlap-permutations}. The column labeled Pr[accept] gives the probability that all auxiliary ancilla measurements in the verification circuit detect no errors.  The next column, E[qubits], gives the expected number of physical qubits required to produce one verified encoded $\ket{0}$.  This is calculated recursively, by computing the expected number of qubits needed to pass each verification step.  The last two columns specify, respectively, the minimum number of CNOT gates and the expected number of CNOT gates required to produce a single verified ancilla.  
} \label{tbl:verify-compare}
\end{table}

For a fixed stochastic noise model and a fixed quantum error-correcting code, the probability that a rectangle is correct is a constant and therefore the probability that all rectangles are correct will generally be exponentially small in the number of gates in the circuit being simulated.  To achieve a constant success probability, a process known as code concatenation~\cite{Knill1996} is often used. In a concatenated fault tolerant simulation, each gate is first compiled into a rectangle, called a level-one rectangle ($1$-Rec), as described above.  Then, a level-two rectangle ($2$-Rec) is constructed by compiling each physical gate of the $1$-Rec into a rectangle.  This process is repeated as many times as desired.  The end result is a circuit composed of a hierarchy of rectangles.  

At each level $k$ of concatenation, the probability that the $k$-Rec is correct increases relative to level $k-1$ so long as the strength of the noise is below a certain value called the \emph{threshold}. The threshold is calculated by upper bounding the probability that each type of rectangle is incorrect.  In~\cite{AliferisGottesmanPreskill05} the upper bound is obtained by counting \emph{malignant} sets of locations inside an object called the \emph{extended rectangle}, or exRec, which consists of the rectangle together with its leading error correction~(LEC).  A set of locations is considered malignant if there exists some fixed combination of nontrivial Pauli errors acting on that set of locations that causes the rectangle to be incorrect.

Malignant set counting works for a broad class of noise models including so-called adversarial noise in which locations fail independently at random, but the error at each failing location is chosen by an adversary and may be correlated with errors at other failing locations.  For more restricted noise models such as depolarizing noise, however, malignant set counting is overly pessimistic.  Roughly, this is because the definition of a malignant set is independent of the underlying noise model and, therefore, malignant set counting ignores a large amount of information.  

In Sections~\ref{sec:Counting} and~\ref{sec:calc-threshold} we outline a modified malignant set counting technique that more accurately computes the threshold for depolarizing noise acting on fault-tolerant simulations constructed with our error-correction circuits.  Our counting method introduces two new ideas.  First, for computational efficiency, we break up the exRec into a hierarchy of components and count $X$ and $Z$ errors separately to keep the total number of error combinations small.  This technique allows us to analyze larger subsets of faulty locations than would otherwise be possible. Second, we construct at each level a transformed noise model, similar to depolarizing noise, by separately accounting for multiple types of malignant events.  ExRecs at each level of code concatenation behave in a self-similar manner under the transformed noise, thus admitting a straightforward threshold calculation.

\section{The Golay code}

The Golay code is a perfect CSS $[[23,1,7]]$ quantum error-correcting code (see e.g.,~\cite{Steane03}).  It has eleven $X$ and eleven $Z$ stabilizer generators.  The code is self-dual, and the $X$ and $Z$ stabilizer generators can both be given by the following eleven $23$-character strings: 
\begin{equation} \label{eq:golay-stabilizers}
\begin{tabular}{c@{\:}c@{\:}c@{\:}c@{\:}c@{\:}c@{\:}c@{\:}c@{\:}c@{\:}c@{\:}c@{\:}c@{\:}c@{\:}c@{\:}c@{\:}c@{\:}c@{\:}c@{\:}c@{\:}c@{\:}c@{\:}c@{\:}c@{\:}}
. &1 &. &. &1 &. &. &1 &1 &1 &1 &1 &. &. &. &. &. &. &. &. &. &. &1\\
1 &. &. &1 &. &. &1 &1 &1 &1 &1 &. &. &. &. &. &. &. &. &. &. &1 &.\\
. &1 &1 &. &1 &1 &1 &. &. &. &1 &1 &. &. &. &. &. &. &. &. &1 &. &.\\
1 &1 &. &1 &1 &1 &. &. &. &1 &1 &. &. &. &. &. &. &. &. &1 &. &. &.\\
1 &1 &1 &1 &. &. &. &1 &. &. &1 &1 &. &. &. &. &. &. &1 &. &. &. &.\\
1 &. &1 &. &1 &. &1 &1 &1 &. &. &1 &. &. &. &. &. &1 &. &. &. &. &.\\
. &. &. &1 &1 &1 &1 &. &1 &1 &. &1 &. &. &. &. &1 &. &. &. &. &. &.\\
. &. &1 &1 &1 &1 &. &1 &1 &. &1 &. &. &. &. &1 &. &. &. &. &. &. &.\\
. &1 &1 &1 &1 &. &1 &1 &. &1 &. &. &. &. &1 &. &. &. &. &. &. &. &.\\
1 &1 &1 &1 &. &1 &1 &. &1 &. &. &. &. &1 &. &. &. &. &. &. &. &. &.\\
1 &. &1 &. &. &1 &. &. &1 &1 &1 &1 &1 &. &. &. &. &. &. &. &. &. &.
\end{tabular}
\end{equation}
Here, the $1$s in a row either all indicate $Z$ operators or all indicate $X$ operators, and dots indicate identity operators.  %\footnote{Our presentation is taken from~\cite{ReichardtThesis06}.}  
For example, the first line indicates that $I \otimes Z \otimes I \otimes I \otimes Z \otimes \cdots \otimes Z$ and $I \otimes X \otimes I \otimes I \otimes X \otimes \cdots \otimes X$ are stabilizers.  Note that each stabilizer generator has weight eight.  We index the qubits left to right, from $0$ to $22$.

The stabilizer generators partition the group of Pauli errors on $23$-qubits into $2^{24}$ cosets.  In particular, there are $2^{23-11}=2^{12}$ inequivalent $X$ errors (tensor products of $X$ and $I$) and $2^{12}$ inequivalent $Z$ errors (tensor products of $Z$ and $I$).

The Golay code is preserved by qubit permutations in a symmetry group known as the Mathieu group $M_{23}$.  %\footnote{This symmetry is inherited from the classical $23$-bit Golay code.  See, e.g.,~\cite{Huffman1998} pp.\ 1411.  Generators for this group can be obtained at~\cite{Gang}.}  
This is a four-transitive group of order $23 \cdot 22 \cdot 21 \cdot 20 \cdot 48$.  It is generated by a cyclic shift, and by the permutation 
$$
(2, 16, 9, 6, 8)(3, 12, 13, 18, 4)(7, 17, 10, 11, 22)(14, 19, 21, 20, 15)
\enspace,
$$
in cycle notation.  See~\cite[pp.~1411]{Huffman1998} and~\cite{Gang}.

\section{Ancilla preparation and verification} \label{sec:ancilla-verification}

Our error-correction circuits require multiple encoded $\ket{0}$ and $\ket{+}$ states.  The Golay code is self-dual, so encoded $\ket +$ is prepared by taking the dual of the $\ket 0$ circuit in the natural way (i.e., swapping $\ket 0$ and~$\ket +$ and reversing the direction of each CNOT gate).  There are many possible ways to encode~$\ket 0$.  Our most efficient preparation circuit is shown in \figref{fig:overlap-prep-ckt}, and prepares encoded $\ket 0$ in the Golay code with a total of $57$ CNOT gates.  This circuit provides an efficient means of preparing encoded ancillas, but it is not fault-tolerant on its own.  

We define strict fault-tolerance as follows: 

\begin{definition}
An ancilla encoded into a code with distance $d$ is \emph{strictly fault-tolerant} if for all $k \leq \lfloor d/2 \rfloor$, any error of probability order $k$ propagates to an error of weight at most $k$.  
\end{definition}
  
The circuit in \figref{fig:overlap-prep-ckt} is clearly not strictly fault-tolerant because, for example, with first order probability a faulty CNOT gate may produce an error of weight two when an $X$-error occurs on both its control qubit and its target qubit.  We call this error, and any other for which the weight of the resulting error is larger than the probability order with which it occurs, ``correlated''.  

To achieve strict fault tolerance, additional encoded ancillas are prepared.  Errors from the original ancilla are copied onto the additional ancillas which are then measured.  If no errors are detected, the ancilla is accepted and may be used for error correction on the data.  Otherwise, the ancilla is rejected, all of the prepared ancillas are discarded, and the procedure is restarted.  In \figref{fig:FourAncillaVerifyCkt}, two pairs of ancillas are prepared.  One of the ancillas from each pair is checked for $X$ errors.  If neither check detects an error, then one of the two remaining ancillas is used to check the other for $Z$ errors.  

Care must be taken in preparing the additional ancillas, however.  For example, say that two encoded ancillas are identically prepared.  Assume that a single failure occurs in the first ancilla and propagates through the preparation circuit to produce a weight three error.  Then the same single failure in the other ancilla will produce the \emph{same} weight three error.  When the error from the first ancilla is copied to the second, the two errors will cancel each other and no error will be detected.  This is a second order event that results in a weight-three error.

Therefore, we seek to prepare encoded ancillas that produce different correlated error sets.  In this section we provide two related methods for constructing fault-tolerant ancillas encoded in the Golay code using the circuit in \figref{fig:FourAncillaVerifyCkt}.  In \secref{sec:RandomSteane} we analyze the correlated errors produced by preparation circuits constructed with the standard Latin rectangle method and provide a randomized method for finding ancillas with different correlated error sets.  In \secref{sec:Overlap} we describe a new preparation circuit specific to the Golay code and again provide a randomized method for finding ancillas with different correlated error sets.

\subsection{Randomized method for preparing encoded \texorpdfstring{$\ket 0$}{0}} \label{sec:RandomSteane}

The standard method for preparing encoded states for CSS stabilizer codes, including the Golay code, is to construct and solve a partial Latin rectangle based on the stabilizer generators \cite{Steane02}.  To prepare $\ket{0}$ in the Golay code, consider the eleven stabilizer generators that are tensor products of Pauli $X$ operators.  These stabilizer generators form an $11\times 23$ binary matrix in which the $X$ operators in the tensor product are represented as $1$s, as in Eq.~\eqnref{eq:golay-stabilizers}.  Gaussian elimination is performed until the matrix is of the form 
\begin{equation}
{\scriptstyle{11}} \left\{ \left(
\begin{array}{c|c}
\raisebox{0ex}[1.5ex]{\mbox{\ensuremath{\overbrace{I}^{11}}}} & 
\raisebox{0ex}[1.5ex]{\mbox{\ensuremath{\overbrace{A}^{12}}}}
\end{array}
\right)\right.
\end{equation}
The first eleven qubits, called ``control" qubits, are prepared as $\ket{+}$, and the remaining ``target" qubits are prepared as $\ket{0}$.  The matrix $A$ represents a partial Latin rectangle, the solution to which is used to schedule rounds of CNOT gates from control to target qubits.  

\begin{table}
\centering 
\begin{tabular}{c|cccccccc}
\hline \hline 
Weight: & 
0 & 1 & 2 & 3 & 4 & 5 & 6 & 7 \tabularnewline
\hline
Number of $X$ errors: &
1 & 23 & 253 & 1771 & 1771 & 253 & 23 & 1 \tabularnewline
Number of $Z$ errors: & 
1 & 23 & 253 & 1771 & 0 & 0 & 0 & 0 \tabularnewline
\hline \hline 
\end{tabular}
\caption{The number of errors on Golay encoded $\ket{0}$ by Hamming weight.  All $Z$ errors are correctable so there are no $Z$ errors of weight greater than three.}
\label{tbl:ketZero-errorWeights}
\end{table}

An $X$ error in the preparation circuit can propagate to other qubits only if it occurs on a control qubit, and then only through the $X$ stabilizer being created from that control qubit.  Thus single faults can create up to $22$ weight-two errors (for each of the eleven $X$ stabilizers, either $IIIIIIXX$ or $IIXXXXXX \sim XXIIIIII$), $22$ weight-three errors and eleven weight-four errors ($IIIIXXXX$ for each stabilizer).  

A single $X$ fault, i.e., a fault resulting in an $X$ error, cannot break the verification circuit in \figref{fig:FourAncillaVerifyCkt}.  If it creates a correlated error on the first ancilla, that error will be detected on the second ancilla, and both will be discarded.  Four or more $X$ faults also cannot break the verification circuit because we only seek fault tolerance up to order three.  

Two $X$ faults can break the verification circuit only if there is one failure in each ancilla preparation that propagates to an error of weight at least three---necessarily the same error so that it is undetected.  To obtain a crude estimate for how likely this is to occur, pretend that the correlated errors created by a random preparation circuit are uniformly distributed among all errors of the same weights.  The number of errors on encoded $\ket{0}$ for each weight are given in \tabref{tbl:ketZero-errorWeights}.  Then the probability that two preparation circuits share no such correlated errors is estimated as 
\begin{equation*}
\frac{\binomial{1771-22}{22}}{\binomial{1771}{22}} \cdot \frac{\binomial{1771-11}{11}}{\binomial{1771}{11}} \approx 0.71
 \enspace .
\end{equation*}

Three $X$ errors can break the circuit if they lead to an undetected error of weight four or greater on the first ancilla.  Consider the case that there are two failures while preparing the first ancilla and one failure while preparing the second ancilla.  The number of different weight-four errors created with second-order probability (i.e., excluding those created with first-order probability) depends on the circuit. For ten random circuits, the smallest count we obtained was $688$ and the largest $735$, with an average of $711$.  Using this average value, we estimate that the probability of a random circuit succeeding against three $X$ errors is roughly $[\binomial{1771-711}{11} / \binomial{1771}{11}]^2 \approx 1.2 \cdot 10^{-5}$.  (Here the square is because we want the circuit to work against both the case of two failures in the first ancilla, one failure in the second, and vice versa.) Overall, we expect to have to try about $1.2 \cdot 10^5$ random pairs of preparation circuits before we find one that gives fully fault-tolerant $X$-error verification.

The result of $X$-error verification is a single ancilla free of correlated $X$ errors up to weight-three, but possibly containing correlated $Z$ errors.  The $Z$-error propagation can be analyzed in a manner similar to that used for $X$ errors.  A single failure in an $X$-error verified ancilla can produce roughly $60$ $Z$ errors of weight three. Again assuming a uniform distribution, the probability of finding two $X$-error verified ancillas that share no correlated $Z$ errors of weight three is $\binomial{1771-60}{60} / \binomial{1771}{60} \approx 0.12$.  In total, we expect to try about five $X$-error fault-tolerant pairs in order to find two pairs that are fully fault-tolerant for both $X$-error and $Z$-error verification, as $\binomial{5}{2} = 10$.  

To find fault-tolerant verification circuits in this way, one needs to be able to generate sufficiently random preparation circuits.  As the Latin rectangle procedure for finding encoding circuits is fully algorithmic, it can be randomized by starting with a random presentation of the Golay code.  Alternatively, one can begin with a fixed encoding circuit and randomly permute the seven rounds of CNOT gates (all of the CNOTs commute), or permute the qubits according to a random element of the symmetry group~$M_{23}$.  By trying roughly $10^5$ random pairs, we found $14$ pairs of ancillas that were fully fault-tolerant against $X$ errors.  Of the $\binomial{14}{2}$ combinations, six were also fully fault-tolerant against $Z$ errors.  \tabref{tab:randomizedSchedules} presents one such set.  

\begin{table}
\newcommand{\myfontsize}{\fontsize{7}{9}\selectfont}
\myfontsize
\def\colspace{{$\;\,$}}
\centering
\addtolength{\subfigcapskip}{0.1cm}
\begin{tabular}{c c}
\subtable[Ancilla 1]{
\begin{tabular}{c|r@{\colspace}r@{\colspace}r@{\colspace}r@{\colspace}r@{\colspace}r@{\colspace}r}
  & 1& 2& 3& 4& 5& 6& 7\\ \hline
 2& 0&22& 7&11& 4& 8&19\\
 3& 9&19& 4& 8& 7& 1& 6\\
10& 5& 1& 0& 6&14& 7& 9\\
12& 1& 0&14& 5&22&11& 4\\
13& 6& 8&22& 9& 0& 4& 5\\
15& 4& 5& 9&14&19&22& 7\\
16&14& 7& 5& 4&11& 6& 8\\
17& 8&11& 6&19& 5& 0& 1\\
18& 7& 9& 1&22& 8& 5&11\\
20&19& 6&11& 7& 1&14&22\\
21&11& 4&19& 0& 6& 9&14
\end{tabular}
}
&
\subtable[Ancilla 2]{
\begin{tabular}{c|r@{\colspace}r@{\colspace}r@{\colspace}r@{\colspace}r@{\colspace}r@{\colspace}r}
  & 1& 2& 3& 4& 5& 6& 7\\ \hline
 0& 5&16&17&22& 1&15& 9\\
 3&15& 2& 6& 5&17&16&11\\
 7& 1&22& 4&17& 2& 5& 6\\
 8& 6&13&16& 1&15& 4&17\\
10&22&11& 5&13&16& 6& 1\\
12& 9&17&13& 2& 6&22&16\\
14& 4& 6&11&15&13& 2&22\\
18&16& 1&15&11& 9&13& 2\\
19&17& 4& 1& 9&22&11&13\\
20&11&15& 9& 6& 4& 1& 5\\
21& 2& 5&22&16&11& 9& 4
\end{tabular}
}
\\
\subtable[Ancilla 3]{
\begin{tabular}{c|r@{\colspace}r@{\colspace}r@{\colspace}r@{\colspace}r@{\colspace}r@{\colspace}r}
  & 1& 2& 3& 4& 5& 6& 7\\ \hline
 1&21&16& 7&13&10&15& 0\\
 2&16& 7&12& 0&18&19&13\\
 3&13& 0&15&12&19&10&20\\
 4&12&21&18&20& 7&13&10\\
 5& 6&13&21&10& 0&18&19\\
 8&18&19&13&21&15&20&16\\
 9&19& 6&10&15&20& 7&21\\
11&20&12& 6& 7&13&16&15\\
14& 7&18&20&16&21& 0& 6\\
17& 0&15&19& 6&16&21&12\\
22&10&20&16&19& 6&12&18
\end{tabular}
}
&
\subtable[Ancilla 4]{
\begin{tabular}{c|r@{\colspace}r@{\colspace}r@{\colspace}r@{\colspace}r@{\colspace}r@{\colspace}r}
  & 1& 2& 3& 4& 5& 6& 7\\ \hline
 0& 1&16& 3&12&17&13&11\\
 2&22&18&14& 3&20&17& 6\\
 4& 3&20& 6& 1&12&22&13\\
 5& 6&14&16&20& 1&12&17\\
 7&20&22&17&13&16&18& 1\\
 8&16& 6&18&11& 3& 1&20\\
 9&18&12&13&16&14&20& 3\\
10&14&17&20&22&13&11&12\\
15&12&11& 1&17&18& 6&22\\
19&17&13&11&18& 6&16&14\\
21&11& 3&12& 6&22&14&16
\end{tabular}
}
\end{tabular}
\addtolength{\subfigcapskip}{-0.1cm}
\caption{
Four seven-round ancilla-preparation schedules.  In each table, the entry in row $i$, column $j$ specifies the target qubit of a CNOT gate with control qubit $i$ applied in round~$j$.  Using these schedules in the verification circuit of \figref{fig:FourAncillaVerifyCkt}, the output encoded $\ket 0$ state is fully fault-tolerant against both $X$ and $Z$~errors.  
} \label{tab:randomizedSchedules}
\end{table}

\subsection{Overlap method for preparing encoded \texorpdfstring{$\ket 0$}{0}} \label{sec:Overlap}

The above procedure for finding a fault-tolerant verification circuit uses ancilla preparation circuits constructed from the Latin rectangle method. We now show an alternative construction based on a modification of the Latin rectangle method.  By carefully analyzing the stabilizer generators of the Golay code we can reduce the number of CNOT gates required to prepare encoded $\ket 0$.  

To explain the optimization, first consider the Steane $[[7,1,3]]$ code.  A Latin rectangle-based encoding schedule, shown in \figref{fig:steane-prep}, needs nine CNOT gates.  An equivalent circuit requiring only eight CNOT gates is shown in \figref{fig:steane-overlap}. This circuit removes two of the CNOTs for which qubit six is a target and replaces them with a single CNOT from qubit five to qubit six in round three.  This works because in~\ref{fig:steane-prep} qubits five and six are both the targets of CNOTs from qubits one and three; the corresponding stabilizer generators overlap on qubits five and six.  

\begin{figure}
\centering
\leavevmode
\subfigure[]{
\includegraphics[scale=.75]{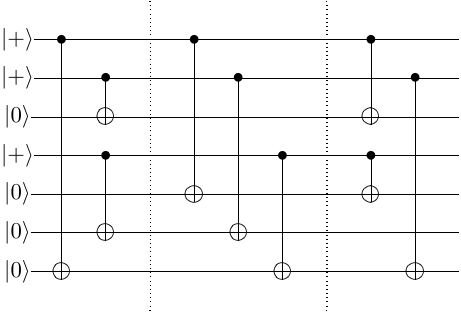}
\label{fig:steane-prep}
}
$\qquad\quad$
\subfigure[] {
\includegraphics[scale=.75]{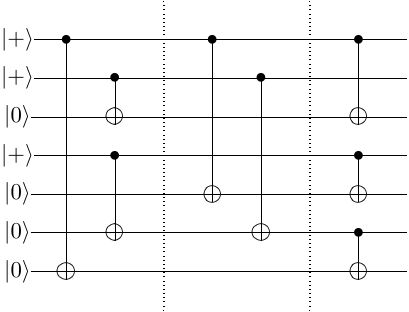}
\label{fig:steane-overlap}
}
\caption{Two alternative circuits for preparing encoded $\ket 0$ in the Steane code, a self-dual code with $X$ stabilizers $IIIXXXX$, $IXXIIXX$ and $XIXIXIX$.  The circuit in (a) follows Steane's Latin rectangle encoding method.  The circuit in (b) prepares the same state using one fewer CNOT gate.  The new CNOT gate has the same effect as the two removed gates.  
}
\end{figure}

The same technique of exploiting overlap between stabilizers extends to larger CSS codes.  For the Golay code, \figref{fig:overlap-prep-ckt} gives an encoding circuit with only $57$ CNOT gates, $20$ fewer than given by the Latin rectangle method.  

By reducing the number of CNOT gates, this circuit also reduces the number of correlated errors.  For example, a single failure in the Latin rectangle encoded circuits can cause up to $22$ weight-two errors, but a single failure in the new circuit can only cause up to $16$ weight-two errors.  The correlated error counts for first and second order are shown in \tabref{tbl:overlap-errorWeights}.  The smaller number of correlated errors means that it should be easier to find fault-tolerant circuits by randomization.  However, unlike Steane schedules the overlap schedule depends on a fixed code presentation and on a fixed round ordering, since the CNOT gates do not commute.  

\begin{table}
\centering 
\begin{tabular}{c|cccccccc|}
\hline \hline 
$X$-error weight: & 
2 & 3 & 4 & 5 & 6 & 7 \tabularnewline
\hline
Order 1: &
16 & 14 & 4 & 0 & 0 & 0 \tabularnewline
Order 2: & 
- & 493 & 400 & 35 & 2 & 0 \tabularnewline
\hline \hline 
\end{tabular}
\caption{Correlated $X$-error counts for the encoded $\ket{0}$ circuit in \figref{fig:overlap-prep-ckt}.} \label{tbl:overlap-errorWeights}
\end{table}

To obtain randomized overlap method encoding circuits, we use the permutation symmetry of the Golay code and permute the qubits of \figref{fig:overlap-prep-ckt} according to a pseudo-random element of the symmetry group~$M_{23}$.  By analyzing the correlated error sets of randomly permuted circuits, we have found many sets of fault-tolerant four-ancilla preparation circuits.  In fact, we have even found sets for which the order required for a weight-$k$ error to pass verification is at least $k+1$ (rather than $k$) for all $k \leq 2$. This reduces, for example, the probability of accumulating an uncorrectable error on the data block by first a weight-two error in $Z$-error correction and then another weight-two error in $X$-error correction.  One such set of four permutations is given in \tabref{tbl:overlap-permutations}.  

We briefly note that the overlap method, and the circuit in~\figref{fig:overlap-prep-ckt} in particular, may not be optimal.  Indeed the are equivalent circuits with fewer CNOT gates. However, \figref{fig:overlap-prep-ckt} is the smallest circuit we found that also preserves \emph{depth}. In the asymptotic setting, for arbitrarily large circuits of CNOT gates, our method bares resemblance to the algorithm presented in~\cite{Patel2003}.  Both methods exploit similarities across columns (or rows) of a matrix to eliminate CNOT gates.  Our method differs in that we use only the redundancy matrix rather than the full $n\times n$ linear transformation, and we exploit similarities between columns without first using Gaussian elimination to make the columns identical.  This way, making the optimization by hand, we are able to preserve circuit depth.

\section{Threshold analysis} \label{sec:threshold-analysis}

The remainder of this article focuses on analyzing the noise threshold and resource overhead for fault-tolerant quantum computation using the Golay code ancilla preparation and verification circuits from \secref{sec:ancilla-verification}.  Our threshold analysis relies on a malignant set counting technique that is tailored for a depolarizing noise model.  The counting technique is outlined in \secref{sec:Counting} and proof of the threshold lower bound is presented in \secref{sec:calc-threshold}.  (Some details are given in the appendices.)  Threshold calculation results for our circuits are discussed in \secref{sec:results-threshold}. Resource overhead is considered in~\secref{sec:Overhead}.  

\subsection{Noise model} \label{sec:NoiseModel}

We begin by defining the depolarizing noise model, a standard model used before in, e.g.,~\cite{Knill04NoisyDevices}.  We study noisy circuits constructed from the following physical operations: $\ket{0}$ and $\ket{+}$ initialization, a CNOT gate, and single-qubit measurement in the $Z$ and $X$ eigenbases.  Every qubit in the computer can be involved in at most one operation per discrete time step.  CNOT gates are allowed between arbitrary qubits, without geometry constraints.  Resting qubits are also subject to noise.  

\begin{definition}[Independent depolarizing noise with parameter $\gamma$]
\label{def:depolarizing-noise}
Noisy operations are modeled~by: 
\begin{enumerate}
\item
A noisy CNOT gate is a perfect CNOT gate followed by, with probability $\frac{16}{15} \gamma$, the simultaneous depolarization of the two involved qubits.  Equivalently, after applying the ideal CNOT gate, with probability $15\gamma$ a non-trivial two-qubit Pauli error drawn uniformly and independently from $\{I,X,Y,Z \}^{\otimes 2} \setminus \{I \otimes I\}$ is applied.    
\item
Noisy preparation of a $\ket 0$ state is modeled as ideal preparation of $\ket 0$, followed by application of an $X$ error with probability $4 \gamma$.  Similarly, noisy preparation of $\ket +$ is modeled as ideal preparation of $\ket +$ with probability $1-4\gamma$ and of $\ket - = Z \ket +$ with probability $4 \gamma$.
\item
Noisy $Z$-basis ($\ket 0, \ket 1$) measurement is modeled by applying an $X$ error with probability $4 \gamma$, followed by ideal $Z$-basis measurement.  Similarly, noisy $X$-basis ($\ket +, \ket -$) measurement is modeled as ideal measurement except preceded by a $Z$ error with probability $4 \gamma$.  
\item
A noisy rest operation is modeled as applying either the identity gate, with probability $1-12\gamma$, or with probability $4 \gamma$ each, one of the Pauli errors $X$, $Y$ or $Z$.  
\end{enumerate}
All locations fail independently of each other.  
Let $p = 15 \gamma$, the probability for a CNOT gate to fail.  
\end{definition}

To justify this noise model, note that the noise on a resting qubit is the one-qubit marginal of the CNOT gate noise.  The noise rate for preparation and measurement is lower, only $4 \gamma$, because any higher noise rate could be reduced to $4 \gamma + O(\gamma^2)$ by repeating the preparation or measurement operation using two qubits coupled by a CNOT.  

During error counting, $X$ and $Z$ errors are usually considered separately and the error probability is computed by omitting the $Z$ or $X$ part of each error, respectively.  For example, when considering only $X$ errors $XY$ is equivalent to $XX$, $XZ$ is equivalent to $XI$ and so on.  Thus, the marginal distribution of $X$ errors for a CNOT gate applies with probability $12 \gamma$ a uniformly random error from $\{ IX, XI, XX \}$.  Similarly, for $Z$ errors, the  marginal error distribution applies a random error from $\{ IZ, ZI, ZZ \}$.  For preparing $\ket 0$ or measuring in the $Z$ basis, no $Z$ errors are possible, and similarly no $X$ errors are possible for preparing $\ket +$ or measuring in the $X$ basis.  The marginal $X$- and $Z$-error distributions for a rest are to apply $X$ and $Z$, respectively, errors with probability $8 \gamma$.

\subsection{Counting malignant sets} \label{sec:Counting}

As we have shown, our two optimized ancilla preparation and verification circuits significantly reduce the overhead required for fault-tolerant ancilla preparation.  We would also like to know how these circuits impact the tolerable noise threshold.   With fewer verification stages our ancillas are slightly more likely to contain errors and so one might expect a lower noise threshold when compared to previous verification circuits.  On the other hand, the smaller size of our circuits makes it easier to give a tighter analysis.  
  
The threshold calculation is most limited by the exRec with the largest number of locations.  The Golay code admits transversal implementations of encoded Clifford group unitaries.  Universality can be achieved by injection and distillation~\cite{BravyiKitaev04magic,Knill04NoisyDevices}, which involves only Clifford group unitaries, and the single-qubit preparations and measurements that are already assumed by our model.  Therefore the largest exRec in our case is for the encoded CNOT gate, an exRec that consists of four Steane-type error corrections plus $23$ CNOT gates (see \figref{fig:cnot-exrec-components}).  \tabref{tab:MarginalCounts} gives a breakdown of the number of locations for our preparation circuits, and the total number of locations in the CNOT exRec.  

Monte Carlo simulations of circuits using the Golay code~\cite{Steane03,CrossDiVincenzoTerhal} indicate that the depolarizing noise threshold should be on the order of $p = 10^{-3}$.  Unfortunately, it is not straightforward to prove such a high threshold using malignant set counting.  For example, say that we check for malignancy all location subsets of size up to $\kGood$, and we assume that all larger subsets are malignant.  Then the estimate we obtain for the probability of an incorrect rectangle is at least $\sum_{k=\kGood+1}^n \binomial{n}{k} p^k (1-p)^{n-k}$.  For $n = 5439$ locations and $p = 10^{-3}$, this term drops below $10^{-3}$ only for $\kGood \geq 14$.  However, there are more than $10^{41}$ subsets of size at most $14$, so checking them one at a time is computationally intractable.  
  
Instead of checking each set for malignancy, one can sample random sets of locations in order to estimate the fraction that are malignant.  This technique, called malignant set sampling, can provide threshold estimates with statistical confidence intervals.  However, both malignant set counting and sampling techniques study the threshold for worst-case adversarial noise, and may be overly conservative for a more physically realistic, non-adversarial noise model such as depolarizing noise.  For example, malignant set sampling results from~\cite{AliferisCross07subsystem} estimate a threshold of only $p \approx 10^{-4}$ for the Golay code.  

\begin{figure}
\centering
\begin{tabular}{c@{$\qquad$}c}
\raisebox{-.5in}{\setcounter{subfigure}{0}\subfigure[\label{fig:exRec}]{
\includegraphics[width=5cm]{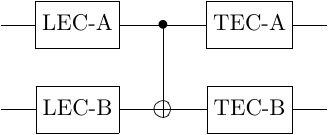}
}} 
& 
\raisebox{-.5in}{\setcounter{subfigure}{2}\subfigure[\label{fig:ancilla-verification}]{
\includegraphics[width=6.5cm]{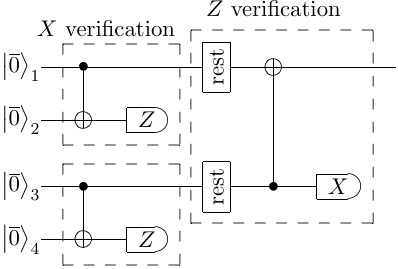}
}}
\\[2cm] \multicolumn{2}{c}{\setcounter{subfigure}{1}\subfigure[\label{fig:ECclean}]{
\includegraphics[width=13.5cm]{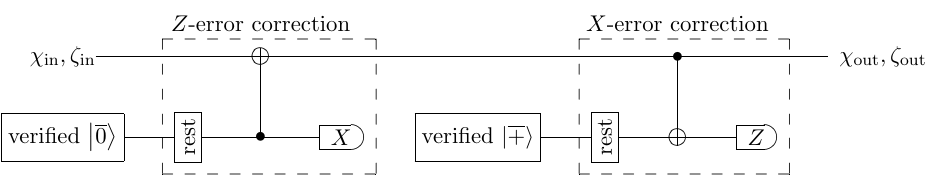}
}}
\end{tabular}
\caption{
Organization of a CNOT extended rectangle, or ``exRec."  
(a) The CNOT exRec includes four error corrections, two leading (LEC) and two trailing (TEC), and a transversal CNOT gate.  (b) Each error-correction component consists of separate $Z$ and $X$ error corrections.  $Z$-error correction requires a $\lket{0}$ state that has been verified against errors, and $X$-error correction requires a verified $\lket{+}$ ancilla state.  (c) A verified $\lket 0$ state is prepared by checking two pairs of prepared $\lket 0$ states against each other for $X$ errors, then, conditioned on no $X$ errors being detected, checking the results against each other for $Z$ errors.  Verified $\lket +$ is prepared by taking the dual of the $\lket 0$ circuit.  These components are discussed, in reverse order, in Sections~\ref{sec:X-verification} to~\ref{sec:counting-exRec}.  
} \label{fig:cnot-exrec-components}
\end{figure}

\begin{table}
\centering
\begin{tabular}{c|cccc|c|c}
\hline \hline
$\lket 0$ preparation & \multicolumn{4}{c|}{Location type} &  &CNOT exRec \\ 
circuit & CNOT & Prep. & Meas. & Rest & Total & total \\
\hline
Steane & 77 & 23 & 0 & 6 & 106 & 5439 \\
Overlap & 57 & 23 & 0 & 38 & 118 & 5823 \\
\hline \hline
\end{tabular}
\caption{Location counts for preparing encoded $\ket 0$ in the Golay code. Encoded $\ket{0}$ ancillas are prepared with either the pseudorandomly constructed Steane preparation circuits (\tabref{tab:randomizedSchedules}), or the overlap preparation circuits (\figref{fig:overlap-prep-ckt} and \tabref{tbl:overlap-permutations}).  The last column shows the total number of locations inside the CNOT exRec shown in~\figref{fig:cnot-exrec-components}, including the transversal CNOT operation and four error corrections.}
\label{tab:MarginalCounts}
\end{table}

We therefore present an alternative to malignant set counting that is tailored to circuits based on the Golay code and depolarizing noise.  Roughly, we divide the exRec into a hierarchy of components and sub-components.  We then compute an upper bound on the probability of each error a component may produce, essentially by checking location sets up to a certain small size.  At the exRec level, we synthesize the component error bounds into upper bounds on the probability that the rectangle is incorrect.  The resulting error probabilities are treated as an effective transformed noise model for the encoded circuit.  With some care, the transformed noise model can be fed recursively back into the procedure to determine an effective noise model for the next level of encoding, and so on.  

Effectively, dividing the exRec into components allows us to account efficiently for even very large location subsets.  Most large sets will be roughly evenly divided between the components, with only a small number of locations in each component.  

The remainder of this section outlines our modified malignant set counting technique.  Details of the threshold analysis are given in \secref{sec:calc-threshold}.

\subsubsection{Characterizing exRec components} \label{sec:basic-operation}

We will divide the exRec into its encoded operation and its error corrections.  The error corrections will each divide into $X$-error correction and $Z$-error correction, and further recursive divisions will continue until reaching the physical location~level.  

\begin{figure}
\centering
\includegraphics[scale=1]{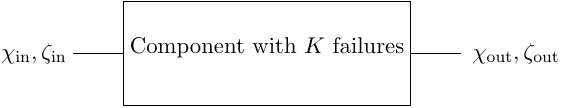}
\caption{A circuit component with input error $(\chiIn, \zetaIn)$ and output error $(\chiOut, \zetaOut)$} \label{fig:component}
\end{figure}

\def\xIn{x_\text{in}}
\def\zIn{z_\text{in}}
\def\xOut{x_\text{out}}
\def\zOut{z_\text{out}}

Each component in the hierarchy has input error $(\chiIn,\zetaIn)$, some number of internal failures~$K$, and output error $(\chiOut,\zetaOut)$ which depends on the internal failures and on the input error (see \figref{fig:component}).  For every error equivalence class on the inputs and outputs and for every~$k \in \N$, we would like to compute
\begin{equation}
  \label{eq:pr-component}
  \Pr\big[(\chiOut, \zetaOut)=(\xOut,\zOut), K=k \,\vert\, (\chiIn, \zetaIn)=(\xIn,\zIn)\big] \enspace ,
\end{equation}
the probability that there are exactly $k$ failures and the output error is $(\xOut, \zOut)$ conditioned on the input error $(\xIn, \zIn)$.  Here, the notation $(x, z)$ indicates an error equal to the product $xz$ where~$x$ is a tensor product of $X$ and $I$ operators and $z$ is a tensor product of $Z$ and $I$ operators.  

For components that are physical gate locations the probability in~\eqnref{eq:pr-component} is defined by the depolarizing noise model (\defref{def:depolarizing-noise}).  Larger components are analyzed by first analyzing each enclosed sub-component.  At the exRec level the LEC, transversal CNOT and TEC components provide all of the information necessary to determine the probability that the exRec is incorrect.  Indeed, we shall see in \secref{sec:counting-exRec} that they contain enough information to compute the probability for each \emph{way} that the exRec can be incorrect.  

There are, however, two logistical problems.  First, on each $23$-qubit code block, there are $2^{12}$ inequivalent $X$ errors and $2^{12}$ inequivalent $Z$ errors, and thus $2^{24}$ inequivalent Pauli errors total.  (For example, there are $2^{23}$ different tensor products of $X$ and $I$ operators, and the group of $X$ stabilizers has size $2^{11}$, leaving $2^{12}$ $X$-error equivalence classes.)  For a component involving two code blocks, this means we should compute for each~$k$ up to $(2^{24})^4$ quantities, one for each combination of input and output errors.  Second, since there are $\binomial{n}{k}$ size-$k$ subsets of $n$ locations and since each CNOT gate has $15$ different ways to fail, a computation that accounts for all possibilities scales roughly as $\binomial{n}{k} 15^k$.  Such a computation is feasible only for small~$k$ and small~$n$.  

The first problem can be solved by observing that in Steane error correction $X$ errors and $Z$ errors are corrected separately.  Furthermore, there are no Hadamard gates or other ways of transforming an $X$ error into a $Z$ error, or vice versa, so $X$ and $Z$ errors mostly propagate independently.  $X$ and $Z$ errors cannot be treated independently entirely, because $X$ and $Z$ failures are highly correlated in the depolarizing noise model, and the postselection steps in ancilla verification could amplify any initial dependencies.   Still, for most components, the $X$-error part of the output of a component depends only on the $X$-error part of the input and the $X$ failures that occur inside the component.  A similar observation holds for $Z$ errors.  Thus, expression~\eqnref{eq:pr-component} may be split into separate $X$ and $Z$ parts: 
\begin{subequations}\begin{align}
  \label{eq:prX-component}
  \Pr[\chiOut=\xOut, K_X &=k \vert \chiIn=\xIn] \\
  \label{eq:prZ-component}
  \Pr[\zetaOut=\zOut, K_Z &=k \vert \zetaIn=\zIn] \enspace .
\end{align}\end{subequations}
Here, the random variable $K_X$ is the number of failures inside the component that contain an $X$ when decomposed into a tensor product of Pauli operators.  The value $K_Z$ is similarly defined for~$Z$.  When considering $X$ and~$Z$ errors separately, the input and output of a two-block component contain at most $2^{24}$ inequivalent errors and the worst case combination is a large but manageable $2^{48}$ cases.

The second problem is eliminated by noting that, for a fixed $k$, the probability of an order-$k$ fault decreases rapidly as the size of the component decreases.  For example, for $p = 1\times 10^{-3}$, the probability of an order-ten fault in the exRec is about $0.027$.  However, the probability that all ten failures are located in a single error correction is only about $1.4 \times 10^{-6}$. Thus there is little gain in counting errors of order-ten or higher in the error correction component.  

In general, the probability that a component contains a fault of order greater than $\kGood$ can be bounded according to
\begin{equation}
  \Pr[K > \kGood] \leq \sum_{k=\kGood+1}^n \binomial{n}{k} (1-p)^{n-k} p^k \enspace .
  \label{eq:prBad-bound}
\end{equation}
(A tighter bound can be achieved by considering separate $k$ for each location type.  See \appref{sec:bounding-bad}.)  We will choose a value of $\kGood$ for each component and then pessimistically assume that all faults of order greater than $\kGood$ within the component cause the rectangle to be incorrect. For large enough values of~$\kGood$ the overall impact on the threshold is negligible.  There is a tradeoff here between running time and accuracy.  A larger value of $\kGood$ yields a more accurate bound on the probability that the rectangle is incorrect.  A smaller value of $\kGood$ is easier to compute.  We must choose for each component a suitable $\kGood$ that balances the two.  

In the end we are left with two sets of faults for each component, those of order at most~$\kGood$ and those of order greater than~$\kGood$.  Each fault in the first set is counted to obtain accurate estimates of \eqnref{eq:prX-component} and \eqnref{eq:prZ-component}.  When a fault from this set occurs we call it a \emph{good} event.  Faults in the second set are not counted and are instead bounded using \eqnref{eq:prBad-bound} and pessimistically added to the final incorrectness probability bounds for the exRec.  When a fault from this set occurs we call it a \emph{bad} event.  The probability that the rectangle is incorrect is then upper-bounded by 
\begin{equation*}
  \Pr[\incorrect] \leq \Pr[\incorrect, \good] + \Pr[\bad]
  \enspace .
\end{equation*}

In general, there are four quantities we need to upper bound for each component:
$\Pr[\chiOut=\xOut, K_X=k, \goodX \vert \chiIn]$, $\Pr[\zetaOut=\zOut, K_Z=k, \goodZ \vert \zetaIn]$, $\Pr[\badX]$, and $\Pr[\badZ]$.
% \begin{gather*}
%   \Pr[\chiOut=\xOut, K_X=k, \goodX \vert \chiIn], \; \Pr[\zetaOut=\zOut, K_Z=k, \goodZ \vert \zetaIn], \; \Pr[\badX], \;\\ \text{and } \Pr[\badZ] \enspace .
% \end{gather*}
The event $\goodX \equiv \neg \badX$ occurs when there is a set of $X$-error failures in the component that we choose to count.  It will usually depend only on~$\kGood$ in which case $\goodX \Leftrightarrow (K_X \leq \kGood)$. In some cases $\goodX$ may depend on a vector~$\vec{k}$ representing the number of $X$-error failures across multiple sub-components.  The event $\goodZ \equiv \neg \badZ$ is similarly defined for $Z$.  

In the remainder of this section we outline the procedure for computing the above quantities for each component of the CNOT exRec.  A more precise analysis is presented in \appref{sec:counting-detail}.

\subsubsection{\texorpdfstring{$X$}{X}-error verification} \label{sec:X-verification}

$X$-error verification requires two encoded $\lket 0$ states. The first is verified against the second for $X$ errors by applying transversal CNOT gates between the two code blocks and then measuring each qubit of the second block in the $Z$ eigenbasis ($\ket 0, \ket 1$ basis).  Conditioned on no $X$ errors being detected, the first code block is accepted.  See \figref{fig:ancilla-verification}.  

Letting $\acceptX$ denote the event that no $X$ errors are detected, we use Bayes's rule 
\begin{equation}
  \Pr[\text{event} \vert \acceptX]=\frac{\Pr[\text{event}, \acceptX]}{\Pr[\acceptX]}
\end{equation} 
to compute the conditional probabilities of different error events.  For an event $\chi$ involving only~$X$ errors, this calculation is straightforward.  

However, if the event is a $Z$ error $\zeta$, then the numerator $\Pr[\zeta=z, \acceptX]$ is difficult to compute as it mixes $X$ and $Z$ errors.  The obvious bound, $\Pr[\zeta=z, \acceptX] \leq \Pr[\zeta=z]$, is quite pessimistic because in the depolarizing noise model we expect $X$ errors to occur with $Z$ errors roughly half of the time, and so $X$-error verification should remove many $Z$ errors.  It is important to obtain an accurate count of $Z$ errors since they strongly influence the acceptance rate of the upcoming $Z$-error verification.  Therefore, we also count $X$ and $Z$ errors \emph{together} for very low-order faults and apply a correction to the $Z$-only counts.  Details of the correction are worked out in \appref{sec:X-verification-details}.

\subsubsection{\texorpdfstring{$Z$}{Z}-error verification} \label{sec:Z-verification}

$Z$-error verification is similar to $X$-error verification.  However, as shown in \figref{fig:ancilla-verification}, we add a pause, i.e., transversal rest operations, to allow the preceding $X$-error postselection to complete.  

Similar to $X$-error verification, all events are now conditioned on the event $\acceptZ$ of no~$Z$ errors being detected.  When considering an $X$-error event $\chi$, we generally use the pessimistic inequality $\Pr[\chi=x, \acceptZ] \leq \Pr[\chi = x]$.  This inequality is less of a problem than the similar inequality $\Pr[\zeta = z, \acceptX] \leq \Pr[\zeta = z]$ we encountered during $X$-error verification, for two reasons.  First, many $X$ errors have already been eliminated, so the probabilities start out much lower.  Second, overestimating the probability of an $X$ error now is less serious; since there are no remaining postselection steps, the distribution of errors will not need to be renormalized again.  Even so, we count $X$ and $Z$ errors together for very low-order faults, since it is relatively easy to do so.

\subsubsection{Error correction} \label{sec:counting-EC}

An error-correction component consists of $Z$-error correction and $X$-error correction, as shown in \figref{fig:ECclean}.  Each sub-component begins with a pause on the input verified ancilla state, to allow for the previous postselection to complete.  After extracting the error syndrome, the lowest-weight correction is computed, and this correction is applied by a change in the qubits' Pauli frames~\cite{Knill04NoisyDevices}.  

There are two types of error correction components, leading error correction (LEC) and trailing error correction (TEC).  For the LEC, we may assume that the input errors $\chiIn$ and $\zetaIn$ are both zero.  This because the probability that the rectangle is incorrect depends only on the syndrome of the output of the LEC and that syndrome depends only on the errors inside of the LEC~\cite{CrossDiVincenzoTerhal}.  That is, we do not care about the logical state at the output of the LEC, we care only that it is correctly manipulated by the rectangle.  For trailing error correction, we care only about the result of applying a logical decoder to the output.  In other words, we only need to know whether the output errors $\chiOut$ and $\zetaOut$ represent correctable errors or not.  The four relevant quantities are  
\begin{align*}
  \Pr[\chiOut=\xOut, K_X=k, \good \vert \chiIn=0]& &\Pr[D(\chiOut)=d, K_X=k, \good \vert \chiIn=\xIn] \\
  \Pr[\zetaOut=\zOut, K_Z=k, \good \vert \zetaIn=0]& &\Pr[D(\zetaOut)=d, K_Z=k, \good \vert \zetaIn=\zIn] 
\end{align*}
where $d \in \{0, 1\}$ and $D(e)$ identifies whether $e$ is a correctable error ($0$) or an uncorrectable error ($1$).  That is, $D(e)=1$ if and only if $e$ decodes to a nontrivial Pauli error.

\subsubsection{exRec} \label{sec:counting-exRec}

The CNOT exRec, shown in \figref{fig:exRec}, is divided into five components: two leading error corrections, a transversal CNOT, and two trailing error corrections.  At this level, we are interested in \emph{malignant} events---the events for which the rectangle is incorrect. More specifically, when a malignant event occurs we would like to know \emph{how} the rectangle is incorrect.  

Let $\ket{\psi_1}$ be the two-qubit state obtained by applying ideal decoders on the two blocks of the CNOT immediately following the LECs.  Similarly let $\ket{\psi_2}$ be the state obtained by applying ideal decoders immediately following the TECs. Then define $\malig_{IX}$ as the event that $(I\otimes X) U_{cnot} \ket{\psi_1} = \ket{\psi_2}$, where $U_{cnot}$ is the two-qubit unitary corresponding to the ideal CNOT gate.  Similarly define the events $\malig_{XI}$, $\malig_{XX}$, $\malig_{IZ}$, $\malig_{ZI}$, $\malig_{ZZ}$.  The event $\malig_E$ can be informally interpreted as the event in which the rectangle introduces a ``logical'' error $E$.  

The relevant quantities are $\Pr[M_X, K_X=k, \good]$ and $\Pr[M_Z, K_Z=k, \good]$ for $M_X \in \{ \malig_{IX}, \malig_{XI}, \malig_{XX} \}$ and $M_Z \in \{ \malig_{IZ}, \malig_{ZI}, \malig_{ZZ} \}$.  Since we count $X$ and $Z$ errors separately, it is not possible to compute logical $Y$ error quantities.  Our analysis will therefore double-count $Y$ errors.  Intuitively this is not a great loss, because the correlations between $X$ and $Z$ are much smaller at this level.  In the next section we show how to use these quantities to compute a lower bound on the threshold for depolarizing noise.

\subsection{Calculating the error threshold} \label{sec:calc-threshold}

As discussed in \secref{sec:codeConcat-noiseThresh}, the standard way of calculating the asymptotic error threshold involves finding subsets of faulty exRec locations (called ``malignant") for which some combination of Pauli errors at those locations causes the enclosed rectangle to be incorrect.  Our counting method is different.  We count subsets of faulty locations, but the counted information is synthesized into error probability upper bounds based on a particular noise model and error correction scheme.

In this section we outline an alternative method for rigorously lower bounding the noise threshold that is tailored specifically to the information obtained by our counting procedure.  The basic idea is to treat each level-one rectangle in the level-two simulation as a single ``location'' with a transformed noise model based on the malignant event upper bounds obtained in \secref{sec:Counting}.  In particular, we show how to treat each level-one exRec independently while maintaining valid upper bounds on the error probabilities.

\subsubsection{Calculating the pseudo-threshold}

One quantity that is particularly easy to calculate from our counts is the so-called pseudo-threshold~\cite{SvoreCrossChuangAho} for the CNOT location. The pseudo-threshold for location $l$ is defined as the solution to the equation $p = p_l^{(1)}$, where $p_l^{(1)}$ is the probability that the $1$-Rec for location $l$ is incorrect. We may compute a lower bound on the pseudo-threshold for CNOT by upper bounding 
\begin{equation}
  p_\text{cnot}^{(1)} \leq \Pr[\bad \vert \accept] + \sum_{k} \big( \Pr[\malig_X, K_X=k, \good] + \Pr[\malig_Z, K_Z=k, \good] \big) \enspace ,
\end{equation}
where $\malig_X \equiv (\malig_{IX} \vee \malig_{XI} \vee \malig_{XX})$, $\malig_Z \equiv (\malig_{IZ} \vee \malig_{ZI} \vee \malig_{ZZ})$ and $\accept$ is the event that all $X$-error and $Z$-error verifications in the CNOT exRec succeed.  
  
The pseudo-threshold is is of practical interest for cases in which a finite failure probability is acceptable and only a few levels of concatentation are desired.  For example, when the physical failure rate is sufficiently below the pseudo-threshold, the Golay code could be used to bootstrap into other codes with lower overhead.

The pseudo-threshold is useful to us for two reasons.  First, pseudo-threshold estimates have been calculated for a variety of fault-tolerant quantum circuits including circuits based on the Golay code~\cite{CrossDiVincenzoTerhal}, and therefore serve as a reference for our counting results.  Second, it was conjectured by~\cite{SvoreCrossChuangAho} that the pseudo-threshold is an upper bound on the asymptotic threshold.  It thus provides a reasonable target for our calculation of the asymptotic threshold lower bound, which requires a noise strength maximum to be specified (see, in particular, \appref{sec:bounding-polys}).

Pseudo-threshold results are listed in~\tabref{tab:threshold-results} and discussed in detail in~\secref{sec:results-threshold}.

\subsubsection{Asymptotic threshold analysis}
\label{sec:asymptotic-threshold}

The asymptotic noise threshold is defined as the largest value $\gammaTH$ such that, for all $\gamma < \gammaTH$, the probability that the fault-tolerant simulation succeeds can be made arbitrarily close to one by using sufficiently many levels of code concatenation.  To prove a lower bound on the threshold we must show, in particular, that the probability of an incorrect CNOT $k$-Rec decreases monotonically with~$k$ for all $\gamma < \gammaTH$.  Our counting technique gives an upper bound on the probability that a CNOT $1$-Rec is incorrect.  We now show how to upper bound incorrectness for level-two and higher and therefore lower bound $\gammaTH$.

Consider an isolated level-one CNOT exRec. Let $\Pr[\malig_E]$ be the probability that the malignant event $\malig_E$ occurs.  For this event, the enclosed $1$-Rec behaves as an encoded CNOT gate followed by a two-block error that, when ideally decoded, leaves a two-qubit error~$E$ on the decoded state. Then our counting technique provides upper bounds on $\Pr[\malig_E]$ for $E \in \{IX, XI, XX, IZ, ZI, ZZ\}$.  These upper bounds can be viewed as an error model for the CNOT $1$-Rec in which the correlations between $X$ and $Z$ errors are unknown.

We would now like to analyze the level-two CNOT exRec.  Ideally, we could treat each $1$-Rec in the level-two simulation as a single ``location'' and use the error model obtained from level-one to describe the probability of failure.  Then level-two analysis could proceed by feeding this ``transformed'' error model back into the counting procedure in order to compute $\Pr[\malig_E]$ for the CNOT $2$-Rec.

\def\ecBox{\push{\fbox{\raisebox{0em}[1em][0.5em]{EC}}}}

\begin{figure}
\centering
\includegraphics[scale=1]{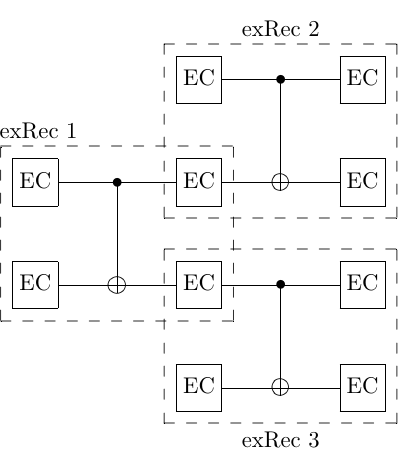}
\caption{Overlapping exRecs: exRec 1 shares one error correction with exRec 2 and one error correction with exRec 3.}
\label{fig:overlapping-exrecs}
\end{figure}

However, the transformed error model is based on analysis of an isolated level-one CNOT exRec.  A typical level-one simulation will contain many exRecs, and adjacent exRecs may share error corrections at which point they can no longer be considered independently.  For example, the CNOT exRec in \figref{fig:overlapping-exrecs} shares an error correction with both of the CNOT exRecs that follow it.  In~\cite{AliferisGottesmanPreskill05} (see also~\cite{Aliferis07thesis}) this problem is solved by the following procedure known as \emph{level reduction}: 

\begin{enumerate}
  \item Examine exRec $2$.  If the enclosed rectangle is incorrect then replace the entire \emph{exRec} with a faulty version of the associated (level-zero) gate.  Otherwise, replace the \emph{rectangle} with an ideal version of the associated gate.  
  \item Examine exRec $3$.  Follow the same procedure as for exRec $2$.  
  \item Examine exRec $1$.  Depending on the outcomes of exRec $2$ and exRec $3$, one or both of the TECs may have been removed.  The enclosed rectangle now consists of the encoded CNOT and any remaining TECs.  If the remains of rectangle $1$ are incorrect, exRec $1$ is replaced with a faulty level-zero gate.  Otherwise, the rectangle is replaced with an ideal level-zero gate.  
\end{enumerate}

Level reduction allows the level-two analysis to proceed by treating each $1$-Rec as a single \emph{independent} location.  The probability that a ``location'' fails in the level-two simulation is upper bounded by the probability that the corresponding $1$-Rec is incorrect. The reason that level reduction works when counting sets of malignant locations is because exRecs with incorrect rectangles are replaced with faulty gates in the same way regardless of the malignant event that actually occurs. The quantity used to bound incorrectness probability is strictly non-increasing as locations (i.e., TECs) are removed.  To see this, consider sets of exRec locations of size $k$ and denote the set of all such sets by~$S$. Let $M \subseteq S$ be those sets for which some combination of nontrivial errors at the $k$ locations causes the rectangle to be incorrect (i.e., the malignant sets).  The probability that the rectangle is incorrect due to failures at exactly $k$ locations is then no more than $\abs{M} p^k$.  If an error correction is removed from the exRec, some of the sets in $M$ now contain fewer than $k$ exRec locations. The remaining sets with $k$ exRec locations are those that do not contain a location in the removed error correction.  The number of such sets is at most $\abs{M}$ and so the original bound on the incorrectness probability still holds.  

\def\X{\eventFont{X}}
\def\I{\eventFont{I}}

The disadvantage to this approach for non-adversarial noise models is that it fails to consider all of the available information. In particular, for a fixed set of malignant locations it assumes the worst-case error for each location. The probability that a given set of $k$ locations is actually malignant can be significantly less than $p^k$.  To obtain a more accurate analysis of the second level, we would like to replace each incorrect $1$-Rec according to the malignant event that has actually occurred. 

Our transformed noise model of an isolated CNOT exRec provides upper bounds on the probability of each type of malignant event, but we must show that that the bounds still hold when exRecs overlap.  Unfortunately, the bounds almost certainly will \emph{not} hold.  Consider, for example, the control block of the CNOT exRec, shown in \figref{fig:cnot-block}. Assume that the error immediately preceding the transversal CNOT is correctable (the error itself is not important). Let $\X$ be the event that an uncorrectable $X$ error exists on the output of the TEC and $\I$ be the event that the error on the output is correctable.  In other words $\X \equiv (\malig_{XI} \lor \malig_{XX})$ and $\I \equiv \lnot \X$.  Then define $\X' \equiv \lnot \I'$ as the event that an uncorrectable $X$ error exists on the block following the transversal CNOT but before error correction.  $\Pr[\malig_{XI}]$ will be non-increasing when removing the trailing error correction only if $\Pr[\X'] \leq \Pr[\X]$.  On the other hand, $\Pr[\malig_{IX}]$ will be non-increasing only if $\Pr[\I'] \leq \Pr[\I]$.  Since $\Pr[\X]+\Pr[\I]=\Pr[\X']+\Pr[\I']=1$, both conditions are satisfied only if $\Pr[\X]=\Pr[\X']$ and $\Pr[\I]=\Pr[\I']$, which of course is highly unlikely.  

\begin{figure}
\centering
\includegraphics[scale=1]{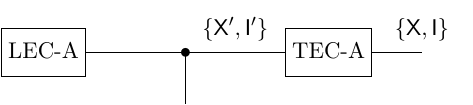}
\caption{Upper block of the CNOT exRec.  The error at the output of the TEC is either correctable ($\I$), or not ($\X$).  Similarly the error immediately preceding the TEC is either correctable ($\I'$) or not ($\X'$).}
\label{fig:cnot-block}
\end{figure}

In order to ensure a proper upper bound on each of the malignant event probabilities, we must calculate upper bounds for the complete exRec and for incomplete exRecs in which one or more trailing error corrections have been removed.  Calculations for the complete exRec were discussed in \secref{sec:Counting}.  Calculations for the incomplete exRecs are the same except that some of the TEC components are not considered.  Bounding the malignant event probability is a matter of finding a polynomial that bounds all four cases (see \appref{sec:bounding-polys}).  

Once proper bounds on the level-one malignant event probabilities are determined, we would like to plug the transformed error model into our counting procedure in order to determine the level-two error probabilities.  There are a few things to consider before doing so.  First, part of the counting strategy relies on using the correlations between $X$ and $Z$ errors in order to make corrections for over-counting that occurs during postselection.  The transformed error model, however, contains no such correlation information, so these corrections must be omitted.  Second, the CNOT malignant event upper bounds do not contain information about rest, preparation or measurement locations.  Level-one error models for these locations can be computed using the same counting strategy as the CNOT, but with an appropriately modified exRec.\footnote{Alternatively, they can be incorporated into the CNOT exRecs~\cite{AliferisCross07subsystem}.}  

Finally, in the depolarizing noise model, the error probabilities of each location are constant multiples of the noise strength~$\gamma$.  Our upper bounds on the malignant event probabilities, however, need not have any scalar relationship. For computer analysis, error probabilities must be re-normalized in terms of $\gamma$ and error weights recalculated as follows. Let $\mathcal{P}^{(1)}_E$ be our upper bound on the level-one malignant event $\malig_E$.  Then construct a polynomial $\Gamma^{(1)}$ and choose constants $\alpha_E$ such that 
\begin{equation}
  \mathcal{P}^{(1)}_E(\gamma) \leq \alpha_E \Gamma^{(1)}(\gamma)
\end{equation}
for all $E$.  The polynomial $\Gamma^{(1)}$ can be viewed as an effective noise strength ``reference'' for level-one.  $\Gamma^{(1)}(\gamma)$ is a function of $\gamma$, but we will usually denote it as $\Gamma^{(1)}$ for convenience of notation. Together with weights $\alpha_E$, $\Gamma^{(1)}$ defines a noise model similar to the depolarizing noise model defined in \secref{sec:NoiseModel}. See \appref{sec:transformed-noise-model} for details of the construction.  

Now the new error model is input into the counting procedure and upper bounds on the level-two error rates are computed.  Let $\mathcal{P}^{(2)}_E(\Gamma)$ be the upper bound computed for $\malig_E$ at level-two.  Then we have the following conditions on the level-one and level-two malignant event probabilities: 
\begin{equation}\begin{split}
  \Pr[\malig_E^{(1)}] &\leq \mathcal{P}^{(1)}_E(\gamma) \leq \alpha_E \Gamma^{(1)} \\
  \Pr[\malig_E^{(2)}] &\leq \mathcal{P}^{(2)}_E(\Gamma^{(1)})
  \enspace .
\end{split}\end{equation}
We also claim that $\mathcal{P}^{(2)}_E$ obeys the following property: 

\begin{claim}
  For $0 \leq \epsilon \leq 1$, $\mathcal{P}^{(2)}_E (\epsilon \Gamma^{(1)}(\gamma)) \leq \epsilon^4 \mathcal{P}^{(2)}_E (\Gamma^{(1)}(\gamma))$.
  \label{clm:P2-multiplicative}
\end{claim}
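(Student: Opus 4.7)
The plan is to reduce the inequality to the structural observation that $\mathcal{P}^{(2)}_E$, viewed as a polynomial in its argument, contains only monomials of degree at least four; once that is in hand, the bound follows from the elementary inequality $\epsilon^k \leq \epsilon^4$ for $k \geq 4$ and $0 \leq \epsilon \leq 1$. Concretely, applying the counting procedure of \secref{sec:Counting} at level two---treating each level-one rectangle as a single ``location'' drawn from the transformed noise model with effective strength $\Gamma^{(1)}$---the malignant-event bound is assembled by enumerating subsets of failing level-one locations together with additive contributions from bad-event tail bounds of the form \eqnref{eq:prBad-bound}. Collecting terms gives an expansion $\mathcal{P}^{(2)}_E(\Gamma) = \sum_{k \geq 0} c_k \Gamma^k$ with coefficients $c_k \geq 0$.

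Next I would argue that $c_k = 0$ for $k \leq 3$. The Golay code has distance $7$ and corrects up to three errors. The fault-tolerance of the level-one circuits, established in the preceding sections, guarantees that a single failure of a level-one rectangle produces at most one faulty qubit per encoded output block. Hence any combination of at most three failing level-one rectangles leaves a correctable error on every block entering an ideal decoder, ruling out all six malignant events $\malig_E$ for $E \in \{IX, XI, XX, IZ, ZI, ZZ\}$. Good-event contributions therefore first appear at multiplicity $k \geq 4$. For the bad-event contributions, the same degree floor holds provided each component's threshold $\kGood$ is chosen so that the tail sum in \eqnref{eq:prBad-bound} begins at $k \geq 4$, which is a design choice of the counting procedure that we enforce at level two.

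Combining these facts, for $0 \leq \epsilon \leq 1$,
\begin{equation*}
  \mathcal{P}^{(2)}_E(\epsilon \Gamma^{(1)}) = \sum_{k \geq 4} c_k \, \epsilon^k (\Gamma^{(1)})^k \leq \epsilon^4 \sum_{k \geq 4} c_k (\Gamma^{(1)})^k = \epsilon^4 \, \mathcal{P}^{(2)}_E(\Gamma^{(1)}),
\end{equation*}
using $\epsilon^k \leq \epsilon^4$ termwise together with $c_k \geq 0$ and $\Gamma^{(1)} \geq 0$. The main obstacle is the degree claim itself: one must trace carefully through the counting hierarchy---including the $X$/$Z$ separation, the postselection renormalizations inside ancilla verification, and the level-reduction accounting for overlapping exRecs---to confirm that no low-order monomial sneaks back in through a correction term or through a bad-event bound with $\kGood < 3$.
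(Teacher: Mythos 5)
Your overall strategy---reduce the claim to the statement that every monomial in $\mathcal{P}^{(2)}_E$ has degree at least four in $\Gamma^{(1)}$, then apply $\epsilon^k \le \epsilon^4$ termwise---is the right instinct, and your justification of the degree floor via strict fault tolerance (at most three failing level-one rectangles cannot produce an uncorrectable decoded error) matches the paper's reasoning. But there is a genuine gap at the step ``collecting terms gives an expansion $\mathcal{P}^{(2)}_E(\Gamma) = \sum_{k\ge 0} c_k \Gamma^k$ with $c_k \ge 0$.'' The quantity the counting procedure actually produces is not a polynomial in $\Gamma$: it has the form
\begin{equation*}
  \mathcal{P}^{(2)}_E = \frac{\Pr[\malig_E,\good]}{\Pr[\accept]} + \Pr[\bad \mid \accept] \enspace ,
\end{equation*}
where $\Pr[\accept]$ is a product of factors $1 - \sum_k c(k)\Gamma^k$ coming from the postselection steps in ancilla verification (which are still present at level two, since the level-two $\lket{0}$ and $\lket{+}$ preparations are built from level-one rectangles and verified by measurement). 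These denominators contain low-degree terms, so $\mathcal{P}^{(2)}_E$ is a sum of \emph{rational} functions of $\Gamma$, and the termwise inequality in your final display does not apply to it as written. You flag the ``postselection renormalizations'' as the main obstacle in your closing paragraph, but that is exactly the point where the proof has to do its work, and you leave it unresolved.

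The paper closes this gap with one additional observation: because the coefficients appearing in each denominator $1 - \sum_k c(k)\Gamma^k$ are non-negative, replacing $\Gamma$ by $\epsilon\Gamma$ with $0 \le \epsilon \le 1$ can only \emph{increase} the denominator, while the numerator (whose coefficients vanish for $k \le 3$ by strict fault tolerance, and are non-negative since there are no $XZ$ corrections at level two) picks up the factor $\epsilon^4$. Hence each term satisfies
\begin{equation*}
  \frac{\sum_{k} c(k)(\epsilon\Gamma)^k}{1 - \sum_{k} c(k)(\epsilon\Gamma)^k}
  \le \frac{\epsilon^4 \sum_{k\ge 4} c(k)\Gamma^k}{1 - \sum_{k} c(k)\Gamma^k} \enspace ,
\end{equation*}
and summing over terms gives the claim. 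Your argument could be salvaged along your original lines by expanding $1/(1-Q(\Gamma))$ as a geometric series with non-negative coefficients and arguing convergence for $Q(\Gamma)<1$, but the paper's direct manipulation of the ratio is simpler and avoids any convergence issue; either way, the treatment of the acceptance-probability denominators must be made explicit rather than deferred.
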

Proof of this claim is based on the form of the polynomials constructed by our counting technique and the fact that our circuits are strictly fault-tolerant.  Details of the proof are delegated to \appref{sec:P2-multiplicative-proof}.

We are now in a position to establish conditions for a noise threshold, i.e., the conditions under which the probability of a successful simulation can be made arbitrarily close to one. 

\begin{theorem}
 Let $M$ be the set of all level-one CNOT, preparation, measurement and rest malignant events consisting of: $\malig_{IX}$, $\malig_{XI}$, $ \malig_{XX}$, $\malig_{IZ}$, $\malig_{ZI}$, $\malig_{ZZ}$, $\malig_{X}^{\text{prep}}$, $\malig_{Z}^{\text{prep}}$,$\malig_{X}^{\text{meas}}$, $\malig_{Z}^{\text{meas}}$, $\malig_{X}^{\text{rest}}$ and $\malig_{Z}^{\text{rest}}$. Also let $\mathcal{P}^{(1)}_E$, $\mathcal{P}^{(2)}_E$ and $\Gamma^{(1)}$ be polynomials and $\alpha_E$ constants as discussed above.
 Then the tolerable noise threshold for depolarizing noise is lower bounded by the largest value $\gammaTH$ such that
 \begin{equation}
   \mathcal{P}^{(2)}_E(\Gamma^{(1)}(\gammaTH)) \leq \alpha_E \Gamma^{(1)}(\gammaTH)
 \end{equation}
 for all $\malig_E \in M$.  
 \label{thm:threshold}   
\end{theorem}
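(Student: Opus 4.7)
The plan is to argue by induction on the concatenation level $k$ that, whenever $\gamma < \gammaTH$, there exists an effective noise strength $\Gamma^{(k)}(\gamma)$ with $\Pr[\malig_E^{(k)}] \leq \alpha_E \Gamma^{(k)}(\gamma)$ for every $\malig_E \in M$, and with $\Gamma^{(k)}(\gamma) \to 0$ doubly exponentially in $k$. A union bound over the $N$ logical locations of the ideal circuit then makes the overall simulation failure probability at most $N \cdot \max_E \alpha_E \cdot \Gamma^{(k)}(\gamma)$, which can be driven below any desired target by taking $k$ large enough---this is exactly the threshold conclusion.

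The base case $k=1$ is the output of the counting procedure of \secref{sec:Counting}: by construction $\Pr[\malig_E^{(1)}] \leq \mathcal{P}^{(1)}_E(\gamma) \leq \alpha_E \Gamma^{(1)}(\gamma)$. For the inductive step, I would invoke the level-reduction procedure recalled in \secref{sec:asymptotic-threshold}: walk through the level-$(k+1)$ simulation exRec by exRec, replacing each rectangle by an ideal level-zero gate when it is correct and by a faulty level-zero gate bearing the appropriate Pauli error when it is not. Because the polynomials $\mathcal{P}^{(1)}_E$ are constructed to dominate the malignant-event probabilities uniformly over every configuration of present trailing error corrections (\appref{sec:bounding-polys}), each $1$-Rec can be replaced independently even when neighbouring exRecs overlap on a TEC. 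The resulting level-$k$ simulation is then built from level-zero locations whose failure probabilities obey the transformed noise model of \appref{sec:transformed-noise-model} with effective strength $\Gamma^{(k)}$. Because code, circuits and counting strategy are the same at every level, feeding this noise model back through the exRec-level counting yields $\Pr[\malig_E^{(k+1)}] \leq \mathcal{P}^{(2)}_E(\Gamma^{(k)})$, with $\mathcal{P}^{(2)}$ rather than $\mathcal{P}^{(1)}$ because the $X$-$Z$ correlation corrections are unavailable at higher levels. The same argument applies to the preparation, measurement and rest exRecs, using the appropriately modified extended rectangle for each.

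To close the induction, set $\epsilon_k := \Gamma^{(k)}(\gamma) / \Gamma^{(1)}(\gammaTH)$ and show that $\epsilon_{k+1} \leq \epsilon_k^4$. The hypothesis of the theorem is $\mathcal{P}^{(2)}_E(\Gamma^{(1)}(\gammaTH)) \leq \alpha_E \Gamma^{(1)}(\gammaTH)$, so once $\epsilon_k \leq 1$, \claimref{clm:P2-multiplicative} gives
\begin{equation*}
\Pr[\malig_E^{(k+1)}] \leq \mathcal{P}^{(2)}_E(\epsilon_k \Gamma^{(1)}(\gammaTH)) \leq \epsilon_k^4 \mathcal{P}^{(2)}_E(\Gamma^{(1)}(\gammaTH)) \leq \epsilon_k^4 \alpha_E \Gamma^{(1)}(\gammaTH) \enspace .
\end{equation*}
Defining $\Gamma^{(k+1)}(\gamma) := \epsilon_k^4 \Gamma^{(1)}(\gammaTH)$ preserves the inductive invariant and produces the recurrence $\epsilon_{k+1} \leq \epsilon_k^4 \leq 1$. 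Since $\epsilon_1 < 1$ for $\gamma < \gammaTH$ by monotonicity of $\Gamma^{(1)}$ near zero, iteration gives $\epsilon_k \leq \epsilon_1^{4^{k-1}}$ and hence the doubly exponential decay promised in the first paragraph.

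The main obstacle is not the induction itself but the bookkeeping behind the level-reduction step. Two points must be checked carefully: first, that the choice of $\mathcal{P}^{(1)}_E$ genuinely dominates the malignant-event probabilities across every subset of TECs that the level-reduction argument may leave behind, so that replacing each rectangle in isolation is valid in the presence of overlapping exRecs; and second, that the counting procedure, when fed the transformed noise model, produces a polynomial $\mathcal{P}^{(2)}_E$ satisfying the multiplicative scaling of \claimref{clm:P2-multiplicative}, so that the same polynomial can legitimately be iterated at every level of concatenation. These are precisely the points deferred to \appref{sec:bounding-polys}, \appref{sec:transformed-noise-model} and \appref{sec:P2-multiplicative-proof}; once they are in hand, the inductive recurrence above is routine.
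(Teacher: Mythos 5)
Your proposal is correct and follows essentially the same route as the paper: both rest on the transformed noise model, the self-similarity of the counting across levels (so that $\mathcal{P}^{(k+1)}_E = \mathcal{P}^{(2)}_E$), and \claimref{clm:P2-multiplicative} to drive the effective noise strength to zero. The only substantive difference is bookkeeping---you normalize against $\Gamma^{(1)}(\gammaTH)$ and iterate $\epsilon_{k+1}\le\epsilon_k^4$ to obtain the sharper doubly-exponential decay $\epsilon_1^{4^{k-1}}$, whereas the paper extracts a slack factor $\epsilon<1$ from the strict inequality on $(0,\gammaTH)$ and settles for the weaker (but sufficient) bound $\epsilon^{4k-3}$.
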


%\begin{proof}
\proof{
Assume that $\mathcal{P}^{(2)}_E(\Gamma^{(1)}) < \alpha_E \Gamma^{(1)}$, for all $\malig_E$ and $\gamma \in (0,\gammaTH)$.  Then, for a fixed $\gamma \in [0,\gammaTH)$, there exists some positive $\epsilon < 1$ such that, for all malignant events $\malig_E$, $\mathcal{P}^{(2)}_E(\Gamma^{(1)}) \leq \epsilon \alpha_E \Gamma^{(1)}$.  

By choosing $\Gamma^{(2)} := \epsilon \Gamma^{(1)}$ we obtain an effective noise model for level two in which the weights~$\alpha_E$ are unchanged.  Since our counting method depends only on the error weights, the polynomials that upper bound the level-three malignant events will be the same as the polynomials that upper bound the level-two malignant events. That is, $\mathcal{P}^{(3)}_E(\Gamma) = \mathcal{P}^{(2)}_E(\Gamma)$.
Thus,
\begin{align}
  \Pr[\malig^{(3)}_E] \leq \mathcal{P}^{(3)}_E(\Gamma^{(2)}) 
  = \mathcal{P}^{(2)}_E(\epsilon \Gamma^{(1)})
  < \epsilon^5 \alpha_E \Gamma^{(1)}
  \enspace ,
\end{align}
where the last inequality follows from \claimref{clm:P2-multiplicative}.
 Repeating this process $k$ times yields
\begin{equation}
\label{eq:PrE-recurrence}
  \Pr[\malig_E^{(k+1)}] 
  \leq \mathcal{P}^{(k+1)}_E(\Gamma^{(k)})
  < \epsilon^{4k-3} \alpha_E \Gamma^{(1)}
  \enspace ,
\end{equation}
which approaches zero in the limit of large $k$.  
}
%\end{proof}

Testing of the assumption $\mathcal{P}^{(2)}_E(\Gamma^{(1)}) < \alpha_E \Gamma^{(1)}$ over a fixed interval $(0,\gammaTH)$ is straightforward because, as discussed in \appref{sec:monotonicity}, all of our malignant event polynomials (including $\Gamma^{(1)}$) are monotone non-decreasing up to sufficiently large values of $\gamma$.

\subsection{Results: Threshold lower bounds} \label{sec:results-threshold}

Threshold results were obtained by implementing our counting technique as a collection of modules written in Python and C; the source code is available at~\cite{Paetznick}.  Rigorous threshold lower bounds for both of our four-ancilla preparation and verification circuits are given in~\tabref{tab:threshold-results}.  The main program takes as input the four-ancilla preparation circuits, the noise model, and the good and bad event settings.  It outputs, for each type of exRec and each malignant event, a polynomial representing an upper bound on the event probability.  See \figref{fig:malig-events}.  These polynomials are either evaluated directly to calculate the pseudo-threshold, or processed into a transformed error model and fed back into the program.

The Python modules are broken up according to the components described in \secref{sec:Counting}.  The main task for each component is, for each error equivalence class, to compute a weighted count of location sets that produce that error.  Counts for each component are obtained by first computing counts for all of its sub-components and then convolving the results.  Details are discussed in \appref{sec:computer-analysis-details}.

The most time-consuming part of the computation involves the CNOT exRec component.  Computing weighted counts for this component required a custom convolution with nearly four trillion combinations.  This part of the program was written in C to save time.  Even so, running the entire program to completion for a fixed ancilla preparation and verification schedule on $31$ cores in parallel took about four days.  

\begin{figure}
\centering
\subfigure[$X$-error malignant events]{
\includegraphics[width=6.79cm]{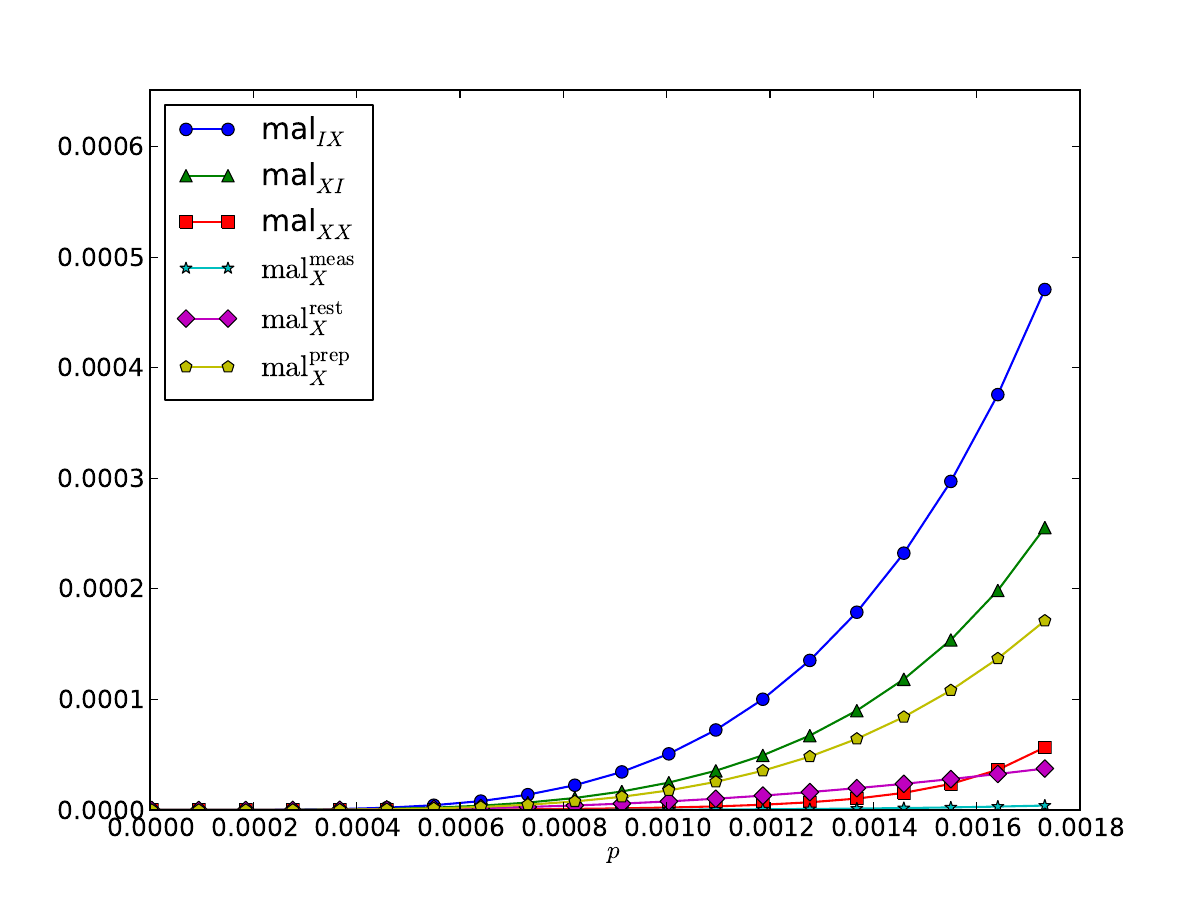}
\label{fig:malig-events-x}
}
\subfigure[$Z$-error malignant events]{
\includegraphics[width=6.79cm]{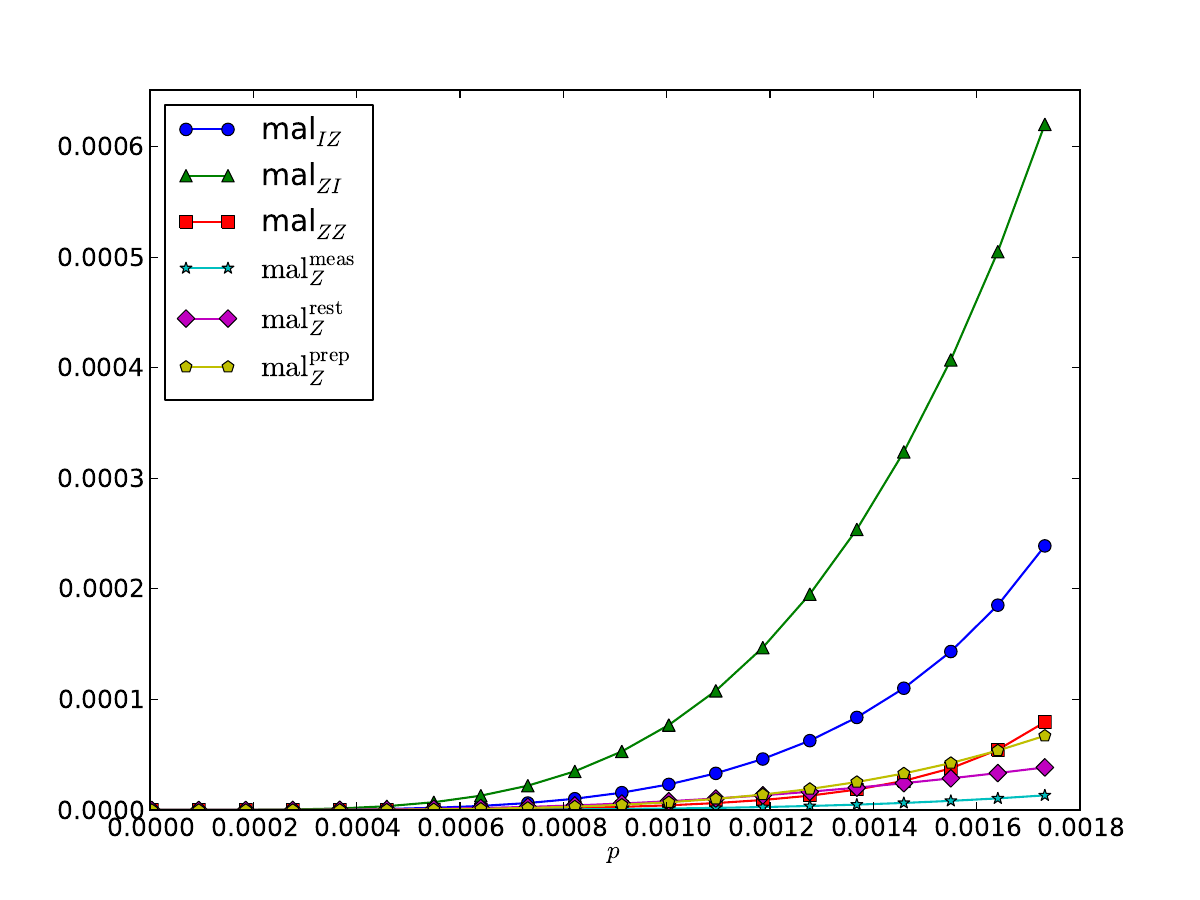}
\label{fig:malig-events-z}
}
\caption{These plots show upper bounds on probability of malignant events for the different level-one exRecs.  The $\malig_{IX}$, $\malig_{XI}$, $\malig_{XX}$, $\malig_{IZ}$, $\malig_{ZI}$ and $\malig_{ZZ}$ events all pertain to the CNOT exRec; the $\malig^{\text{prep}}_X$ and $\malig^{\text{prep}}_Z$ events correspond to the $\ket 0$ and $\ket +$ preparation exRecs, respectively; $\malig^{\text{meas}}_X$ and $\malig^{\text{meas}}_Z$ correspond to $Z$-basis and $X$-basis measurement exRecs; $\malig^{\text{rest}}_X$ and $\malig^{\text{rest}}_Z$ pertain to the rest exRecs.  Note that the upper bound on $\malig_{ZI}$ is significantly higher than that of its dual counterpart $\malig_{IX}$.  This is due largely to the arbitrary choice in error correction to correct $Z$ errors first and $X$ errors second.}
\label{fig:malig-events}
\end{figure}

\begin{table}
\centering
\begin{tabular}{c|c|c}
\hline\hline
Verification schedule & CNOT Pseudothreshold & Threshold \tabularnewline
\hline
Steane-$4$  & $1.72 \times 10^{-3}$ & $\threshSteane$ \tabularnewline
Overlap-$4$ & $1.73 \times 10^{-3}$ & $\threshOverlap$ \\
\hline \hline
\end{tabular}
\caption{Threshold lower bounds for circuits based on our four-ancilla preparation and verification schedules for the Golay code. Thresholds are given with respect to $p$ the probability that a physical CNOT gate fails, according to the depolarizing noise model defined in \secref{sec:NoiseModel}}.
\label{tab:threshold-results}
\end{table}

Our thresholds compare favorably to threshold results for similar circuits.  For a six-ancilla preparation and verification circuit, Aliferis and Cross~\cite{AliferisCross07subsystem} give a threshold estimate based on malignant set sampling of $p \approx 1 \times 10^{-4}$ for adversarial noise.  Our results beat this by an order of magnitude and provide strong evidence that our counting technique is an improvement over malignant set sampling and malignant set counting for the case of depolarizing noise.  
Our results also essentially close the gap with other analytical and Monte Carlo threshold estimates for depolarizing noise.  Using a closed form analysis, Steane~\cite{Steane03} estimated a threshold on the order of $10^{-3}$ for the Golay code with similar noise parameters.  Cross et al.~\cite{CrossDiVincenzoTerhal} estimated a pseudo-threshold of $2.25 \times 10^{-3}$ based on Monte Carlo simulations of a twelve-ancilla preparation and verification circuit.

Beyond circuits based on the Golay code, our results may be the highest rigorous threshold lower bounds known.  Aliferis and Preskill~\cite{Aliferis2009} prove a lower bound of $p \geq 1.25 \times 10^{-3}$. Their analysis applies to teleportation-based gates due to Knill~\cite{Knill04NoisyDevices} in which Bell pairs encoded into an error correcting code $C_2$ are prepared by first encoding each qubit of the $C_2$ block into an error \emph{detecting} code $C_1$ and performing error detection and postselection after each step of the $C_2$ encoding. Our best threshold is only about $5$ percent better, but apply to circuits that usually require far less overhead (see~\secref{sec:overhead-upper-bounds}).  This implies only that in the depolarizing noise model our analysis is more accurate, and not that our schemes tolerate more noise.

The limiting factor on the threshold value is the event $\malig_{ZI}$.  That is, $\malig_{ZI}$ is the event $E$ for which $\Pr[\malig_E^{(2)}] = \Pr[\malig_E^{(1)}]$ takes the smallest value of $p$.  In fact, the corresponding threshold values for nearly all $Z$-error malignant events are lower than threshold values for \emph{any} of the $X$-error events.  This asymmetry is due to the arbitrary order with which we perform error correction---$Z$ first, then $X$.  Some $X$ errors resulting from the leading $Z$-error correction will be corrected by the $X$-error correction that follows.  However, $Z$ errors resulting from the $X$-error correction may propagate through the encoded operation before arriving at the $Z$-error correction on the trailing end.  As a result, it is more likely for $Z$ errors on individual blocks to be combined by the CNOT gate and create an uncorrectable error.  Evidence of this effect can be seen in the level-one malignant event probabilities shown in \figref{fig:malig-events}.  

It should be possible to reduce such lopsided event probabilities by customizing the error correction order for each EC based on the specifics of the ancilla preparation circuits.  However, analyzing such a scheme would require consideration of up to $36$ different full or partial CNOT exRecs (two choices for each EC) instead of four and is likely to yield only a small improvement in the threshold.  Note that other small improvements could be made by, for example, eliminating measurement or rest exRecs at level-two.  For simplicity, these optimizations were not considered.

\section{Resource overhead}\label{sec:Overhead}
To evaluate the practical importance of our optimizations, we now analyze the resource requirements of our fault tolerance scheme.  In~\secref{sec:overhead-ancilla-prep} we use Monte Carlo simulation to compare overhead of our ancilla preparation and verification circuits to that of standard circuits.  In~\secref{sec:overhead-upper-bounds} we use threshold results from~\secref{sec:threshold-analysis} to compute upper bounds on the resource overhead of our scheme as a whole.

\subsection{Simulation of the ancilla preparation overhead} \label{sec:overhead-ancilla-prep}
One natural measure for the overhead is the number of CNOT gates used to ready an ancilla.  Another overhead measure, important given the difficulty of scaling quantum computers, is the space complexity, i.e., the number of qubits that must be dedicated to ancilla preparation in a pipeline so that an ancilla is always ready in time for error correction.  We consider both measures.  

\begin{figure}
\centering
\subfigure[\label{fig:sim-cnot-overhead}]{
\includegraphics[width=6.79cm]{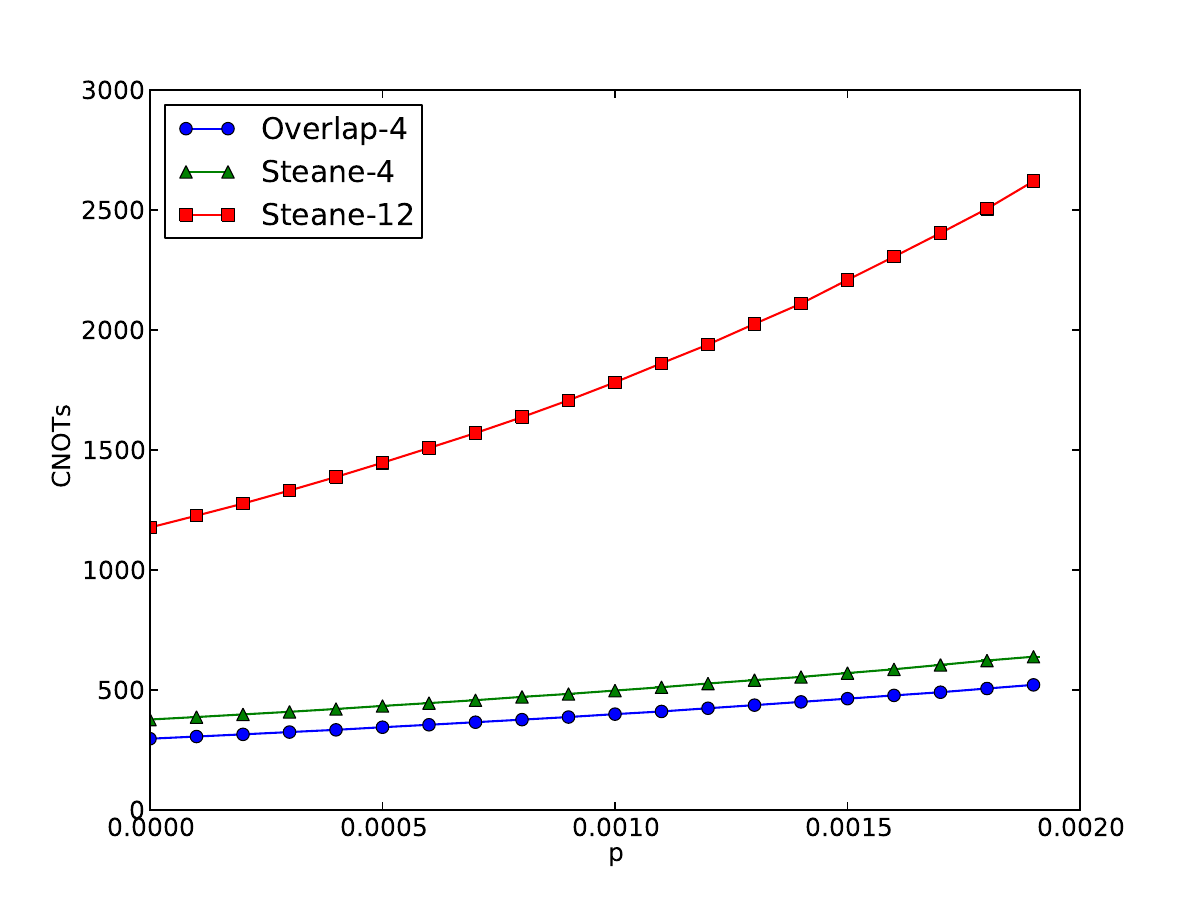}
}
\subfigure[\label{fig:sim-qubit-overhead}]{
\includegraphics[width=6.79cm]{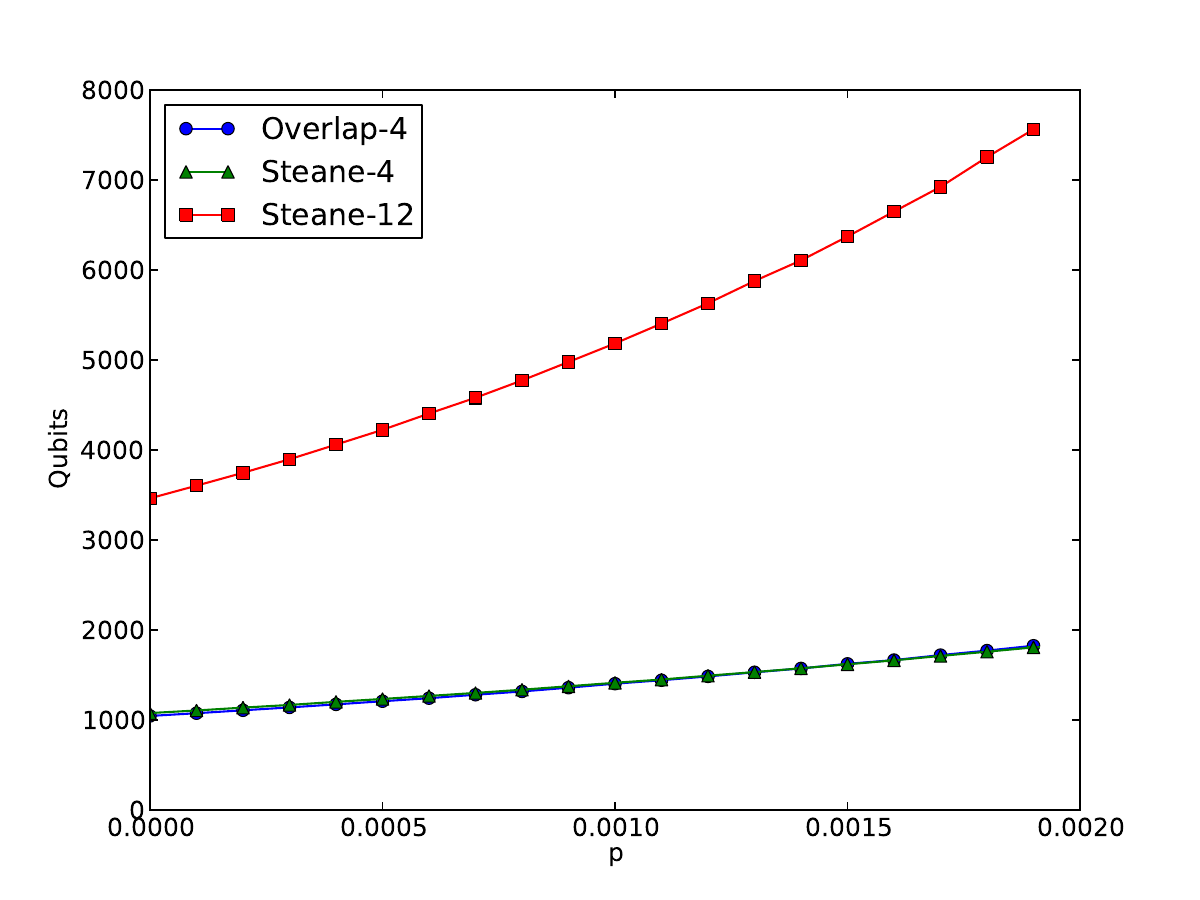}
}
\caption{Overhead estimates for the twelve-ancilla ancilla preparation and verification circuit and for each of our optimized circuits.  (a) Expected number of CNOT gates required to produce a verified encoded $\ket 0$.  (b) Number of qubits required to produce one verified encoded $\ket{0}$, in expectation, at every time step.  
Standard error intervals are too small to be seen here.}
\end{figure}

As listed in the third column of \tabref{tbl:verify-compare}, the overlap method-based four-ancilla preparation and verification circuit involves roughly a factor of four fewer CNOT gates than the standard twelve-ancilla circuit.  In fact, this understates the improvement.  The overhead also depends on the acceptance rates of each verification test.  For an ancilla to leave the twelve-ancilla circuit, it must pass eleven tests, compared to only three tests for the four-ancilla circuit.  The probability of passing all tests should be significantly higher for the optimized circuit, and so one expects the ratio between the \emph{expected} numbers of CNOT gates used by the two circuits to be greater than four.  

To estimate the expected overhead, each circuit was modeled and subjected to depolarizing noise in a Monte Carlo computer simulation.  We assumed that test results are available soon enough that a failed verification circuit can be immediately aborted; later test failures are therefore the most costly.  This assumption impacts the twelve-ancilla circuits the most, since there are many ways to construct the hierarchy of verifications.  The circuit shown in~\figref{fig:TwelveAncillaVerifyCkt} is a reasonable choice here because only six of the verification tests depend on results of previous tests.  Other circuits---see, e.g., \cite[Sec.~2.3.2]{ReichardtThesis06}---may contain as many as nine dependent tests.

Estimates of the expected number of CNOT gates required for each circuit are given in the last column of \tabref{tbl:verify-compare} for the CNOT depolarization rate $p = 10^{-3}$, and are plotted versus~$p$ in \figref{fig:sim-cnot-overhead}.  At $p = 10^{-3}$, the overlap method reduces the expected number of CNOT gates by roughly a factor of~$4.5$, compared to the twelve-ancilla circuit, and the improvement for our optimized Latin rectangle scheme is a factor of~$3.6$.  At lower error rates, the improvement is less.  To investigate the effects of different error parameters, we also considered setting the rest error rate to zero; in this case, the expected number of CNOT gates used in the overlap circuit further decreases by about~$11$ percent, compared to less than four percent for our other four-ancilla circuit and less than two percent for the twelve-ancilla circuit.  The larger improvement for the overlap circuit is due primarily to the fact that the overlap preparation method replaces many CNOT gates with rest locations. 

To evaluate the space overhead, we plot in \figref{fig:sim-qubit-overhead} the number of qubits required to produce a single verified encoded $\ket 0$, in expectation, per time step, for each of the preparation and verification circuits. Thus, for example, the space overhead for a pipeline to produce a single \emph{unverified} ancilla state is $8 \cdot 23 = 184$ qubits; at any given time step, one $23$-qubit block is initialized, and CNOT gates are applied to seven other blocks---one per round in, e.g., \figref{fig:overlap-prep-ckt}---so that one ancilla is prepared. (In fact, the overhead is slightly less than this since some of the qubits in the block can be prepared during rounds one and two.)  Estimates are calculated recursively by computing E[qubits] = (E[qubits]$_1$ + E[qubits]$_2$)/Pr[accept] for each verification step where the numerator is the expected number of qubits required required to prepare the two states used in that verification step and Pr[accept] is the probability that the verification measurement detects no errors.  The results at $p = 10^{-3}$ are given in the second column of \tabref{tbl:verify-compare}.  Both of our optimized schemes reduce the required space by a factor of~$3.6$ at~$p = 10^{-3}$.  

To judge the significance of these results, recall that the ancilla production pipeline can consume the majority of resources in a fault-tolerant quantum computer.  In the case of~\cite{IsalovicNemanjaPatelKubiatowicz08}, physical ancilla production space is proportional to the number of CNOT gates in the pipeline.  A factor of~$4.5$ reduction in the CNOT overhead for ancilla preparation should give, very roughly, about a $50$~percent improvement in the total footprint of the quantum computer.

\subsection{Resource overhead upper bounds} \label{sec:overhead-upper-bounds}

The calculations above provide an estimate of the improvement of our ancilla preparation schemes compared to the standard method.  We would also like to understand the overhead for our scheme as a whole.  In this section we calculate upper bounds on the number of physical gates and the number of physical qubits required to implement a single logical gate with a given effective error rate.

Our threshold analysis assumes that an infinite supply of ancilla qubits is available for use in error correction.  In order to bound the resource overhead we instead assume that some finite number of ancillas are available to each $k$-EC. Error correction proceeds normally unless all ancilla verifications fail. If the number of available ancillas is high enough, then the probability that all verifications fail will be small and the impact on the logical errors will be similarly small.

More precisely, our approach is as follows.  The ancilla verification circuit (\figref{fig:ancilla-verification}) is considered as a single unit.  Each level-$k$ $Z$-error correction consists of $m_k$ $\ket 0$ verifications performed in parallel plus a transversal rest, CNOT and $X$-basis measurement.  If all of the $m_k$ verifications fail, then $Z$-error correction is aborted and the data is left idle.  Level-$k$ $X$-error correction is similar.  For simplicity, if any of the error corrections are aborted, then we consider the entire top-level logical gate to have failed. 

% Let $m_k$ be the number of (level-$k$) ancilla verifications performed for a $X$ error correction (or equivalently $Z$ error correction) in a $k$-EC.  Then the total number $A^{(k)}$ of level-$0$ locations in a $k$-Rec is bounded by
% \begin{equation}
% \begin{split}
%   A^{(k)} 
%   &\leq 4 A^{(k)}_{\text{anc}} + 9 A^{(k-1)} \\
%   &= (4 (4 A_{\ket{0}} + 69) M_k + 9) A^{(k-1)}
%   \enspace ,
% \end{split}
% \end{equation}
% where $A_\ket{0}$ is the number of locations in the encoded $\ket 0$ preparation circuit (see~\tabref{tab:MarginalCounts}).

We need to bound the number of ancilla verifications for each level of concatenation.  First, bound $\Pr[\reject]^{(k)}$ the probability of rejecting at level $k$.
\begin{equation}
\begin{split}
  \Pr[\reject](\Pevent_1^{(k)}, \Pevent_2^{(k)}, \ldots) 
  &\leq \Pr[\reject](\epsilon^{4(k-2)-3} \alpha_1 \Gamma, \ldots) \\
  &\leq \epsilon^{4(k-2)-3} \Pr[\reject]^{(2)}
\end{split}
\end{equation}
The first inequality follows from~\eqnref{eq:PrE-recurrence}. The second follows from monotonicity and the form of $\Pr[\accept]$ (\appref{sec:monotonicity}).

Let $p_\text{target}$ be overall target error rate per logical gate, $\Pevent^{(k)} := \max_i \Pevent_i^{(k)}$, and let $K$ be the minimum level of concatenation that acheives $\Pevent^{(k)} < p_\text{target}$ assuming an unbounded number of ancilla.  If $\Pr[\fail^{(k)}]$ is the probability that a $Z$-error correction fails then we require that
\begin{equation}
 4 \Pr[\fail^{(k)}] \leq 4 \Pr[\reject]^{m_k} + 16 m_k A_{\text{ECZ}} \Pr[\fail^{(k-1)}] \leq \delta^{(k)}
 \enspace ,
\end{equation}
where $\delta^{(k)} = p_\text{target} - \Pevent^{(k)}$ and $A_{\text{ECZ}}$ is the number of locations in a $Z$-error correction circuit.
The first term bounds the probability that all $\ket{0}$ verifications fail.  The second term bounds the probability that a $(k-1)$-EC fails in any of the $m_k$ verifications.  Multiplication by four on the left-hand side accounts for the four $X$- and $Z$-error corrections in a CNOT rectangle.  We must divide $\delta^{(k)}$ between the two terms.  We could try to optimize the division, but the overhead is dominated by the number of concatenation levels $K$ and so the division is not terribly important. Instead, we simply choose $\Pr[\reject^{(k)}]^{m_k} \leq \delta^{(k)} / 4k$.
Solving for $m_k$ we obtain,
\begin{equation}
 m_k \leq \frac{\log{(\delta^{(k)} / 4k)}}{\log{\Pr[\reject^{(k)}]}}
 \enspace .
\end{equation}
We may then compute $m_{k-1}$ using
\begin{equation}
 \delta^{(k-1)} = \frac{\delta^{(k)} - \delta^{(k)} / 4k}{16 m_k A_{\text{ECZ}}}
 \enspace .
\end{equation}
Once we have a bound on $m_k$ for every $k \leq K$, the total gate overhead $g(k)$ for a $k$-Rec can be computed recursively by $g(k) \leq (2 m_k A_\text{EC} + 23) \cdot g(k-1)$.

% 
% 
% If we prepare $M_k$ ancillas in parallel then the probability that all of them are rejected is at most $\left( \Pr[\reject]^{(k)} \right)^{M_k}$.  One way to bound $M_k$ is to allow the probability that all ancillas are rejected to be the same order as a logical gate failure.  That is, set $\left( \Pr[\reject]^{(k)} \right)^{M_k} \propto \max{\Pevent_i}$.  However, since the failure rate is exponential in $M_k$, we may be more relaxed.  Let $p_\text{target}$ be overall target error rate per logical gate, $\Pevent^{(k)} := \max_i \Pevent_i^{(k)}$, and let $K$ be the minimum level of concatenation that acheives $\Pevent^{(k)} < p_\text{target}$ assuming an unbounded number of ancilla. Then we require only that $4A^{K-k} \left( \Pr[\reject]^{(k)} \right)^{M_k} \leq \delta_\text{target} / K$ where $\delta_\text{target} = p_\text{target} - \Pevent^{(k)}$.  Solving for $M_k$ we obtain (for $k \geq 2$)
% \begin{equation}
% \begin{split}
%   M_k 
%   &\leq \frac{\log (\delta_\text{target}/(4K A^{K-k}))}{\log (\Pr[\reject]^{(k)})}% \\
% %  &\leq \frac{\log (\delta_\text{target}) - \log(K) - (K-k)\log(4A)}{\log(\epsilon^{4(k-2)-3}) + \log(\Pr[\reject]^{(2)})}
%   \enspace .
% \end{split}
% \end{equation}
% Similarly for $k=1$ we have
% \begin{equation}
%   M_1 \leq \frac{\log (\delta_\text{target}/(4K A^{K-1}))}{\log(\Pr[\reject]^{(1)})}
%   \enspace .
% \end{equation}
% Here $A$ is the maximum number of locations in a $1$-Rec.

\begin{figure}
\label{fig:gate-overhead}
\centering
\subfigure[Golay scheme with overlap ancilla preparation]{
\label{fig:gate-overhead-overlap}
\includegraphics[width=6.79cm]{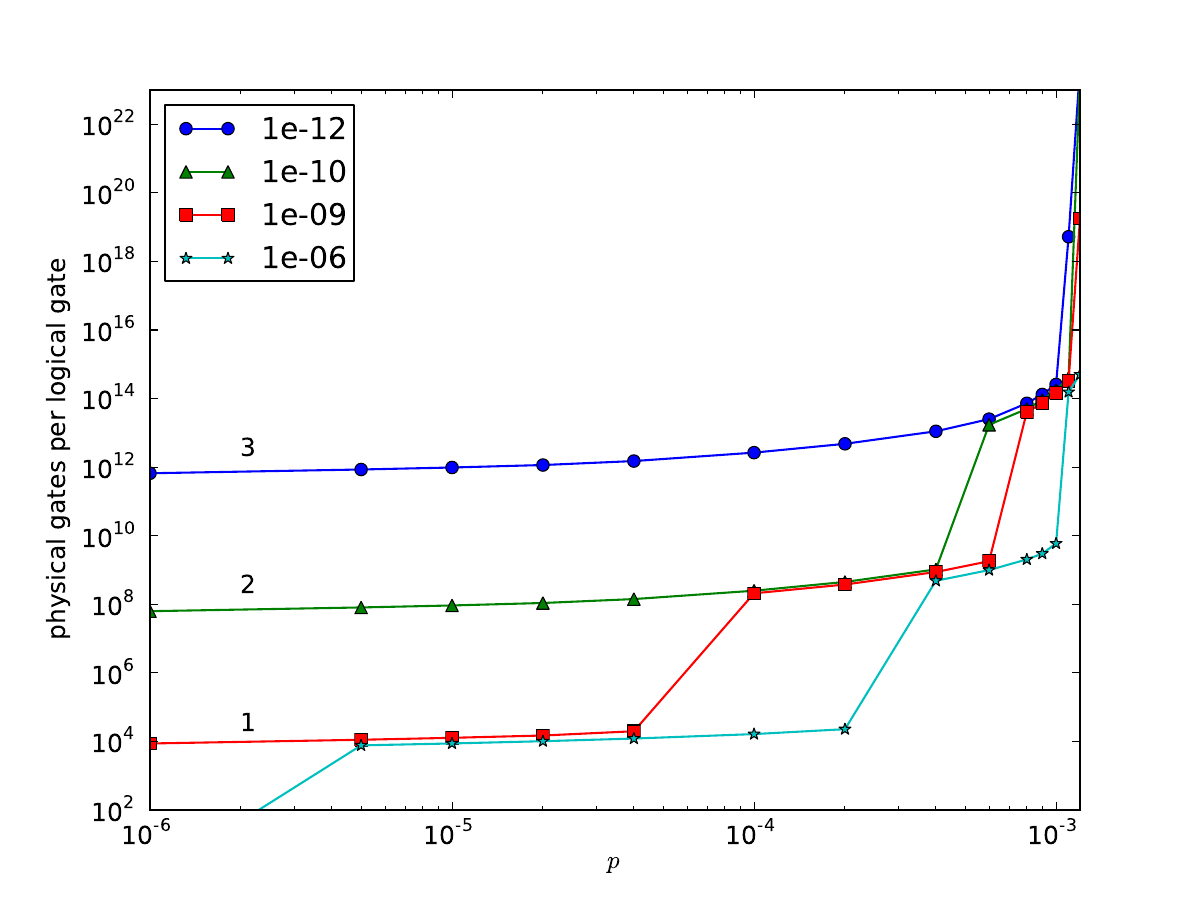}
}
\subfigure[{$[[4,2,2]]$} Fibonacci scheme]{
\label{fig:gate-overhead-fibonacci}
\includegraphics[width=6.79cm]{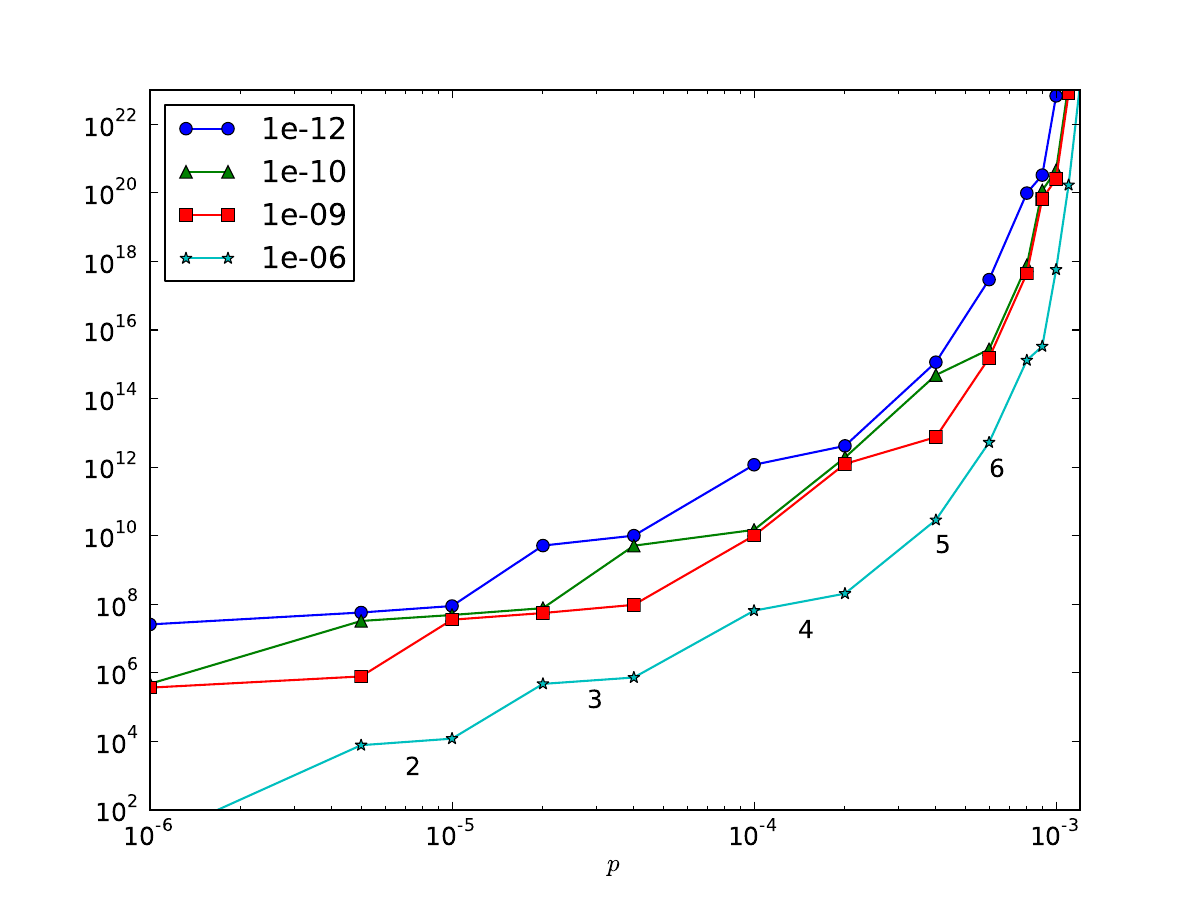}
}
\caption{\label{fig:gate-overhead-plot}
Gate overhead upper bounds for (a) our Golay scheme with overlap ancilla preparation and (b) the Fibonacci scheme presented in~\cite{Aliferis2009}. Each plot shows the number of physical gates required to implement a logical gate with target error rates $p_\text{target} \in \{ 10^{-12}, 10^{-10}, 10^{-9}, 10^{-6} \}$. Black text labels indicate the required level of concatenation and colored lines are a guide for the eye.}
\end{figure}

\begin{figure}
\centering
\subfigure[Golay scheme with overlap ancilla preparation] {
\includegraphics[width=6.79cm]{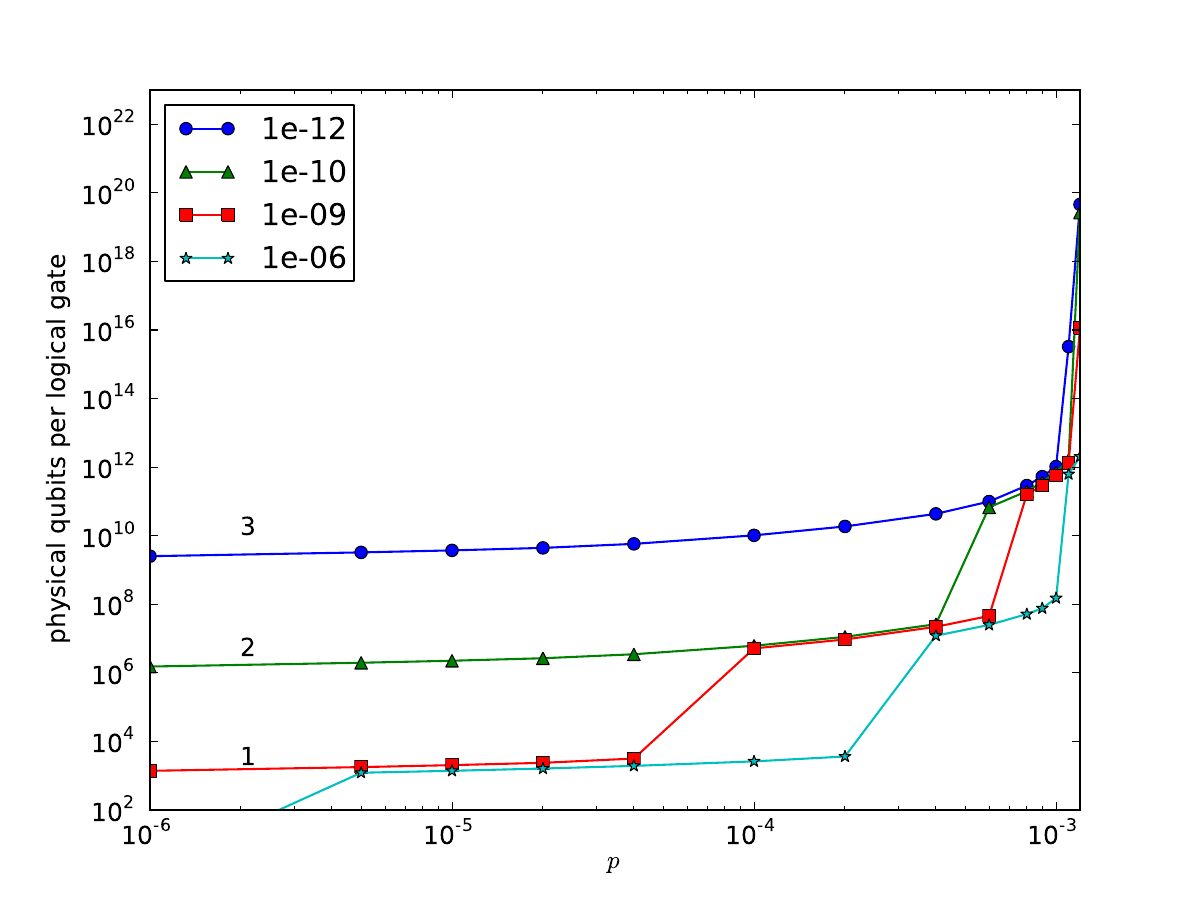}
}
\subfigure[{$[[4,2,2]]$} Fibonacci scheme] {
\includegraphics[width=6.79cm]{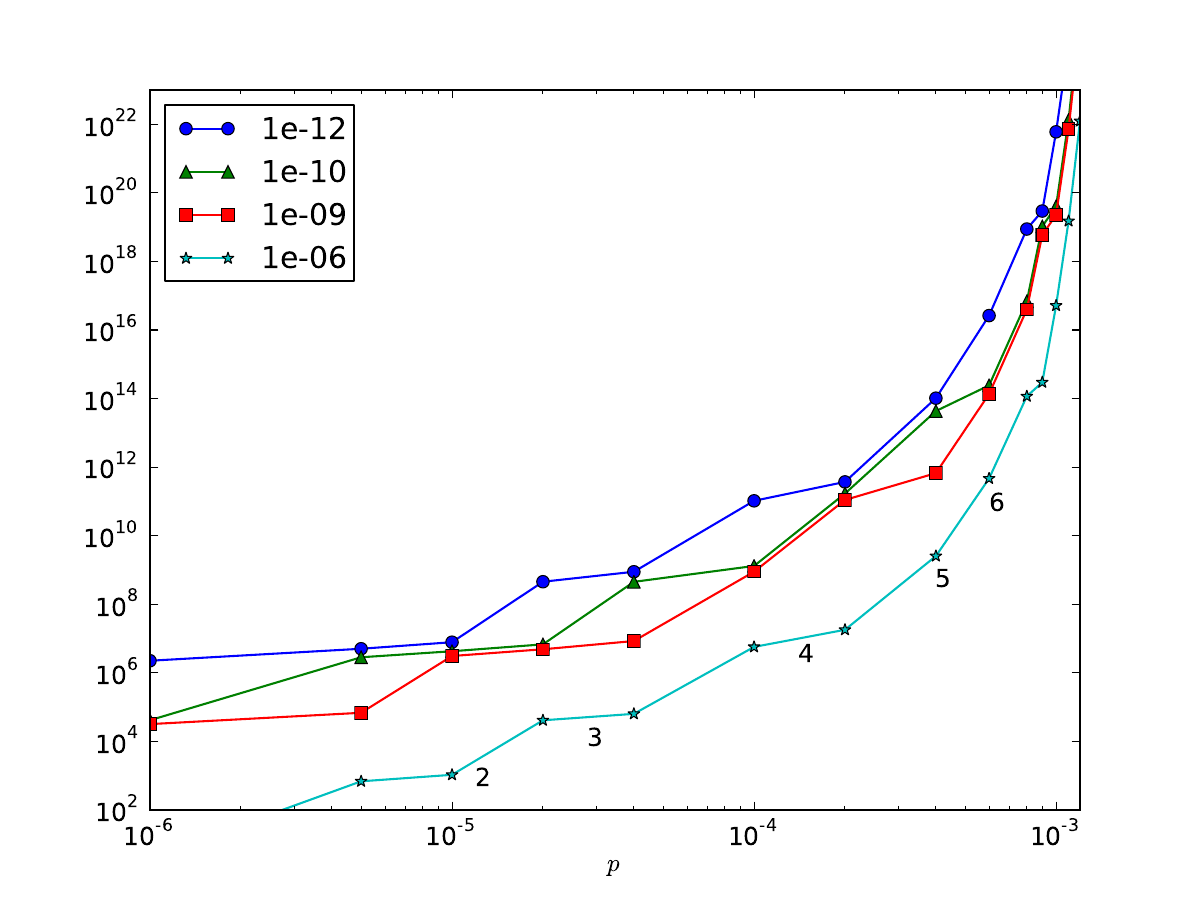}
}
\caption{\label{fig:qubit-overhead-plot}
Qubit overhead upper bounds.  Plots are formatted identically to~\figref{fig:gate-overhead-plot}.
}
\end{figure}

Gate overhead upper bounds for the overlap-based scheme are shown in~\figref{fig:gate-overhead-overlap}.  The overhead increases dramatically as the target logical error rate decreases.  However, compared to similar upper bounds for the Fibonacci scheme---which has a similar threshold lower bound~\cite{Aliferis2009}---our scheme is better for a wide range of error rates often by several orders of magnitude.  One reason for the improved overhead is that our scheme is based on a code with higher distance than the Fibonacci scheme which uses the $[[4,2,2]]$ error detecting code. The logical error rate for our Golay scheme falls faster and thus requires fewer levels of concatentation.

Bounds on qubit overhead may be obtained from the gate overhead.  Our threshold analysis requires that all ancillas  be ready on-demand without delay---i.e., each $k$-Rec has depth three, independent of $k$.  We, therefore, pessimistically assume that once a qubit is measured it cannot be re-used within the same rectangle.  The qubit overhead then depends only on the gate overhead and the qubit-gate ratio for $\ket 0$ verification.  Using a ratio of $8 \cdot 23 / (A_\text{EC} - 46)$ we obtain $q(k) \leq 23^k + 0.15^k g(k)$.  Therefore, the level-$k$ qubit overhead is roughly $k$ orders of magnitude lower than the level-$k$ gate overhead.

The qubit-gate ratio for Bell-state preparation in the Fibonacci scheme is relatively large ($\approx 0.6$ for levels three and above).  Therefore, similar to gate overhead, qubit overhead for the Golay scheme compares favorably to the Fibonacci scheme for a wide range of noise parameters. See~\figref{fig:qubit-overhead-plot}.

The drawback of using a larger code is that the increase in overhead from one level of concatenation to the next is much larger.  This makes it harder to ``tune'' the overhead parameters to some specific error rates.  For example, for $p_\text{target}=10^{-10}$ and $p = 10^{-6}$ our scheme requires two levels of concatenation and about $10^8$ physical gates per logical gate.  For the same error rates, the Fibonacci scheme requires three levels of concatenation, but fewer than $10^6$ gates.

% Qubit overhead scales in the same way as the gate overhead. See~\figref{fig:qubit-overhead-plot}.  Roughly, the number of qubits smaller than the number of gates by a factor of $10^k$ where $k$ is the number of levels of concatenation.  This is explained by the fact that the ratio of locations to qubits required for ancilla preparation is approximately $164/23 \approx 7$.  For levels two and higher, the ratio is improved to $14$ because some qubits are re-used.

Finally, note that bounds for our scheme when $p_\text{target}=10^{-12}$ are a bit loose due to a constant offset that is added during the transformed noise model construction (\appref{sec:bounding-polys}).  In our computer analysis, these offsets were on the order of $\epsilon \approx 10^{-13}$.  In principle, this offset does not affect the actual error rates; rather it is an artifact of our construction.

\section{Future work}

We have shown two alternative circuits for the fault-tolerant preparation of Golay encoded ancillas.  Our circuits require a total of only four encoded ancillas, and thus out outperform the previous best known circuits in terms of overhead. We have also demonstrated a new malignant set counting technique and threshold analysis tailored specifically for depolarizing noise.  With this technique, we have proved a tolerable noise threshold of $\threshOverlap$, which may be the highest rigorous threshold known.  

There are a number areas for future work.
First, using the overlap encoding method given in \secref{sec:Overlap}, we were able to reduce the number of CNOT gates when compared to the standard encoding procedure.  However, the circuit given in \figref{fig:overlap-prep-ckt} was constructed by hand and is not necessarily optimal.  It would be ideal have an algorithm for finding a preparation circuit with the fewest number of CNOT gates, while maintaining circuit depth bounds.  Second, our techniques for optimizing verification circuit overhead could be applied to other large codes.  For example, the self-dual BCH $[[47, 1, 11]]$ and concatenated Steane $[[49, 1, 9]]$ codes (see, e.g., \cite{GrasslBeth99}) are similar in nature to the Golay code and may yield similar results.  Variations of the verification procedure could also be analyzed.  Overhead might be reduced by, for example, attempting to correct some errors detected during verification rather than always scrapping the entire preparation.

Another possible avenue of interest is to apply our malignant set counting technique to other types of fault-tolerant error correction methods.  In particular, the teleportation-based schemes due to Knill are enticing candidates.  Simulations indicate that these schemes can tolerate a depolarizing noise rate as high as $p = 1\%$.
% ;the best rigorous lower bound is $p\geq 1.25 \times 10^{-3}$~\cite{Aliferis2009}.  
By dividing the exRec into small components, we may be able to count larger sets of faulty locations and obtain a tighter bound. However, teleportation-based schemes introduce new analytical challenges.  In particular, unlike error-correction gadgets, error-detection gadgets are non-deterministic. Our analysis assumed a trivial input syndrome to the exRec.  For non-deterministic exRecs, all possible input syndromes must be considered.  Additionally, error events must be conditioned on acceptance of \emph{all} error-detection gadgets, not just those inside of the exRec.

\subsection*{Acknowledgements}

We would like to thank Ashley Stephens for helpful feedback. Research conducted at the Institute for Quantum Computing, University of Waterloo, supported by Mitacs and NSERC.

\bibliographystyle{alpha-eprint}
\bibliography{bibliography}

\appendix

\section{Component counting}
\label{sec:counting-detail}

\subsection{Bounding bad events}
\label{sec:bounding-bad}

Bad fault events are defined, in part, by establishing some limit $k_\good$ on the number of failures~$K$ within the component.  In the $X$-error case, $\ket{+}$ preparation and $X$-basis measurement locations are ignored (they cannot produce $X$ errors).  For a component containing $n_c$ CNOT gates, $n_r$ rests, and $n_p + n_m = n_{pm}$ $\ket{0}$ preparations and $Z$-basis measurements, let $A_{\vec n}$ and $\beta_{\vec{n}}(\vec{k})$ and  be defined as 
\begin{equation}\begin{split}
  A_{\vec{n}} &= (1-12\gamma)^{n_c}(1-8\gamma)^{n_r}(1-4\gamma)^{n_{pm}} \\
  \beta_{\vec{n}}(\vec{k}) &= \binom{n_c}{k_c}\binom{n_r}{k_r}\binom{n_{pm}}{k_{pm}}  \enspace .
\end{split}
\label{eq:component-AB}
\end{equation}
Then the probability of more than $k_\good$ $X$ failures is
\begin{equation}\begin{split}
  \Pr[k_\good < K_X < k_\text{max}] 
  & = A_{\vec{n}} \!\!\!\!\! \sum_{k_\good < \abs{\vec{k}} < k_\text{max}}
    \!\!\!\!\! \beta_{\vec{n}}(\vec{k})
    \left( \frac{12\gamma}{1-12\gamma} \right)^{k_c} 
    \left( \frac{8\gamma}{1-8\gamma} \right)^{k_r}
    \left( \frac{4\gamma}{1-4\gamma} \right)^{k_{pm}} \\
  & \leq A_{\vec{n}}\sum_{k_\good < \abs{\vec{k}} < k_\text{max}} 
    \beta_{\vec{n}}(\vec{k})
    \left( \frac{\gamma}{1-12\gamma} \right)^{\abs{\vec k}}
    12^{k_c} 8^{k_r} 4^{k_{pm}}
  \enspace . 
\end{split}
\label{eq:PrBad}
\end{equation}
The sums are over all possible failure partitions $\vec{k}=(k_c,k_r,k_{pm})$ for which $\abs{\vec{k}}=k_c + k_r + k_{pm}$ is in the correct range.  The failure partition represents the number of CNOT failures  $k_c$, rest failures~$k_r$, and preparation and measurement failures $k_{pm} = k_p + k_m$. The upper bound $k_\text{max}$ is used to avoid double counting between components and sub-components, and may be omitted in which case take $k_\text{max} = n_c + n_r + n_{pm} + 1$.  For components with a large number of locations, it is convenient to approximate \eqnref{eq:PrBad} by evaluating the sum up to only a fixed number of failures $k' = k_\good + \text{const.}$ and then upper bounding the rest of the sum by
\begin{equation}
  \Pr[k' < K_X] 
  \le \sum_{\vec{k}: \, \abs{\vec{k}} = k' + 1}
    \beta_{\vec{n}}(\vec{k})
    (12\gamma)^{k_c}(8\gamma)^{k_r}(4\gamma)^{k_{pm}}
  \enspace .
\label{eq:prBad-bound-without-kmax}
\end{equation}  
This bound is easy to compute, so long as the number of sub-components is reasonable, and is better than the simpler bound $\binomial{n}{k'+1}(12\gamma)^{k'+1}$.  

The $Z$-error case is completely analogous except that now $\ket +$ and $X$-basis measurement locations are counted and $\ket 0$ preparation and $Z$-basis measurement locations are ignored.  All of the equations remain the same.

\subsection{\texorpdfstring{$X$}{X}-error verification} \label{sec:X-verification-details}

\begin{figure}
\centering
\leavevmode
\includegraphics[scale=1]{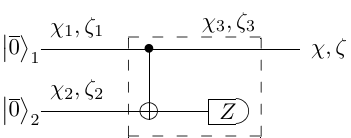}
\caption{$X$-error verification. The $X$-error verification component consists of two Golay encoded $\lket{0}$ preparations, a transversal CNOT and a transversal $Z$-basis measurement. $X$ ($\chi_i$) and $Z$ ($\zeta_i$) errors for each part are labeled.  The output is an $X$ error $\chi$ and $Z$ error $\zeta$.}
\label{fig:X-error-verification}
\end{figure}

The $X$-error verification component is illustrated in \figref{fig:X-error-verification}.  It includes three sub-components that fail independently.  Let $(\chi_1, \zeta_1)$ be the $X$ and $Z$ errors resulting from the first $\lket 0$ preparation and $(\chi_2, \zeta_2)$ be the errors from the second $\lket 0$ preparation.  Let $(\chi_3, \zeta_3)$ be the errors on the remaining transversal CNOT and $Z$-basis measurement locations; we will denote by $(\chi_3', \zeta_3')$ the portion of these errors on the control (upper) code block, and $(\chi_3'', \zeta_3'')$ the errors on the target (lower) code block.  Denote the final output errors by $(\chi, \zeta)$.

Define sub-component $j$ to be $\badZ^{(j)}$ if it contains five or more $Z$ failures.  If the sub-component is not $\badZ^{(j)}$, it is $\goodZ^{(j)}$.  Similarly define $\badX^{(j)}$ and $\goodX^{(j)}$ for $X$ failures.

Define the $X$-error verification component to be ``$\badZ$'' if any of the sub-components are $\badZ^{(j)}$, or there are more than six $Z$ failures.  If the component is not $\badZ$, it is $\goodZ$.  Similarly define $\badX$ and $\goodX$ for $X$ failures.
 Define the $X$-error verification component to be ``$\best$'' if there are fewer than four failures of any kind.

The quantities that we will compute for an $X$-error verification component are:
\begin{gather*}
 \Pr[(\chi, \zeta)=(x,z), K = k, \best \vert \acceptX],\; \Pr[\chi=x, K_X=k, \goodX \vert \acceptX],\\
 \Pr[\zeta=z, K_Z \leq k, \goodZ \vert \acceptX],\; \Pr[\badX \vert \acceptX] \text{ and } \Pr[\badZ \vert \acceptX]
 \enspace .
\end{gather*}
Here, for example, the first quantity is the probability of $X$-error $x$ and $Z$-error $z$ occurring with exactly~$k$ failures and the $\best$ event, conditioned on $X$-error verification accepting (the event $\acceptX$).  The second and third quantities are similar, except tracking only $X$ or $Z$ errors, respectively.  

Begin by placing a lower bound on the probability of the event $\acceptX$ that no $X$ errors are detected.
Define $\out(x) := \{ \vec{x} : x_1 x_3' \equiv x\}$, $\out(z) := \{ \vec{z} : z_1 z_2 z_3' \equiv z\}$ and $\accept_X := \{ \vec{x} : x_1 x_2 x_3'' \equiv 0 \}$.  
Use 
\begin{equation}
  \label{eq:prAcceptX}
  \Pr[\acceptX] \ge 
  \Pr[\acceptX, \goodX] =
  1 - \Pr[\rejectX, \goodX] - \Pr[\badX] \\
\end{equation}
and
\begin{equation}
  \Pr[\rejectX, \goodX] \leq 
  \sum_{\substack{
        k \leq 6, \abs{\vec k}=k \\
        \vec{x} \not\in \accept_X
  }}
  \prod_{j=1}^3 \Pr[\chi_j=x_j, K_{X,j}=k_j, \goodX^{(j)}]
 \enspace .
 \label{eq:verifyX-pr-reject}
\end{equation}
Here, the sum is over all possible divisions $\vec k = (k_1, k_2, k_3)$ of the number of failures among the three sub-components, and of $X$ errors that lead to a nontrivial syndrome measurement; it is a discrete convolution of the error probabilities. \figref{fig:pr-accept-lower-bounds} shows computed lower bounds on $\Pr[\accept]$ for one particular ancilla preparation and verification circuit.  

The calculations we relate here and in the sequel are generally dictated by constraints of combinatorial complexity.  In Eq.~\eqnref{eq:verifyX-pr-reject}, for instance, the number of terms in the sum is, na{\"i}vely, on the order of $(2^{12})^4 \cdot 3^6 \approx 2 \times 10^{17}$, since there are $2^{12}$ inequivalent $X$ errors on a single code block.  Summing so many terms would be infeasible.  In fact, though, the number of inequivalent $X$ errors produced by an ancilla preparation circuit with $k \leq 2$ faults is much less than $2^{12}$.  For ancillas prepared using~\figref{fig:overlap-prep-ckt}, there are $58$ inequivalent $X$ errors created with $k = 1$, and $1225$ created for $k = 2$.  The number of inequivalent $X$ errors for the transversal CNOT scales as $\binomial{23}{k} 3^k$.  Since the number of possible partitions of $k$ faults into $m$ components is $O(k^m)$, the worst case partition with $\abs{\vec k} = 6$, $\vec{k} = (1,2,3)$, involves only about $3 \times 10^9$ error combinations.  The bound in~\eqnref{eq:verifyX-pr-reject} can therefore be computed with relative ease.  This combinatorial analysis is very similar for the other equations below.  

We similarly compute for $k \in \{0, 1, 2, 3\}$, 
\begin{align}
  \label{eq:Pr-bestX-acceptX}
  \Pr[(\chi, \zeta)=(x,z), K=k, \best, \acceptX]
  &=\!\!\!\!\!\!\!\!\!\!\! \sum_{\substack{
        \abs{\vec k}=k \\
        \vec{x} \in \out(x) \cap \accept_X \\
        \vec{z} \in \out(z)         
     }}
     \!\!\!\! \prod_{j=1}^3 \Pr[(\chi_j, \zeta_j)=(x_j,z_j), K_j = k_j]
      \enspace ,
\end{align}
and for $k \in \{0, \ldots, 6\}$, 
\begin{equation} \label{eq:Pr-goodX-acceptX}
  \Pr[\chi=x, K_X=k, \goodX, \acceptX] = \!\!\!\!\!\!\!
   \sum_{\substack{
        \abs{\vec{k}}=k \\
        \vec{x} \in \out(x) \cap \accept_X
   }}
   \prod_{j=1}^3 \Pr[\chi_j=x_j, K_{X,j}=k_j, \goodX^{(j)}]
\enspace .
\end{equation}

It is more difficult to compute $\Pr[\zeta=z, K_Z=k, \goodZ, \acceptX]$ accurately.  The na{\"i}ve bound $\Pr[\zeta=z, K_Z=k, \goodZ, \acceptX] \leq \Pr[\zeta=z, K_Z=k, \goodZ]$ is quite poor, since $X$-error verification catches many $Z$ faults that occur with $X$ faults, i.e., as a $Y$ fault.  The problem, though, is that we lack $X$-error information with which to determine whether verification is successful or not.  Therefore we use instead the bound 
\begin{equation}
\label{eq:pr-z-good-acceptX-corrected}
  \Pr[\zeta=z, K_Z \leq k, \goodZ \vert \acceptX]
  \leq \frac{\left(
    \begin{aligned}
      &\Pr[\zeta=z, K_Z \leq k, \goodZ]\\
      &\quad -\Pr[\zeta=z, K \leq k, \best, \rejectX]
    \end{aligned}
    \right)}
    {\Pr[\acceptX]}
  \enspace ,
\end{equation}
which holds because $\best$ is a subset of $\goodZ$.  The numerator can be decomposed as
\begin{align} 
  \Pr[\zeta=z, K_Z \leq k, \goodZ] - \Pr[\zeta=z, K \leq k, \best, \rejectX]
   &= \sum_{k'=0}^k \PrC(z,k')
\end{align}  
where
\begin{align} 
  \PrC(z,k)
  &:= \Pr[\zeta=z, K_Z=k, \goodZ] 
   - \sum_x \Pr[(\chi, \zeta)=(x,z), K=k, \best, \rejectX]
   \enspace .
\label{eq:correction-k-term}
\end{align}  

The first term of~\eqnref{eq:correction-k-term} represents the pessimistic assumption that all $Z$ errors pass verification under the $Z$-only noise model. It does not require any $X$-error information and may be computed in the same way as Eqs.~\eqnref{eq:Pr-bestX-acceptX} and~\eqnref{eq:Pr-goodX-acceptX}: 
\begin{align}
  \Pr[\zeta=z, K_Z=k, \goodZ]
  &= \sum_{\substack{
        \abs{\vec{k}}=k \\  
        \vec{z} \in \out(z)
     }}
     \prod_{j=1}^3 \Pr[\zeta_j=z_j, K_Z=k_j, \goodZ^{(j)}]
 \enspace .
\label{eq:pr-goodZ-uncorrected}
\end{align}
The second term uses the full $XZ$ noise model and corrects the over-counting of the first term by subtracting off most of correlated $Z$ errors that are rejected. It is nearly identical to \eqnref{eq:Pr-bestX-acceptX} except that the \emph{rejected} errors are counted instead of the accepted errors.  It is computed as
\begin{align}
  \Pr[(\chi, \zeta)=(x,z), K=k, \best, \rejectX]
  &= \!\!\!\!\!\!\!\!\! \sum_{\substack{
        \abs{\vec{k}}=k \\  
        \vec{x} \in \out(x) \setminus \accept_X \\
        \vec{z} \in \out(z)
     }}
     \!\!\!\!\!\! \prod_{j=1}^3 \Pr[(\chi_j, \zeta_j)=(x_j,z_j), K_j=k_j]
 .
 \label{eq:pr-best-correction}
\end{align}

To get a quantitative estimate of the significance of this correction, we show in \tabref{tbl:verifyX-corrections} the sum of~\eqnref{eq:pr-goodZ-uncorrected} and~\eqnref{eq:pr-best-correction} over all nontrivial $Z$ errors for $p=10^{-3}$.  From this table, we compute a ratio $\Pr[\zeta \neq 0, \best, \rejectX] / \Pr[\zeta \neq 0, \goodZ]$ of about $0.57$, indicating that, as expected, the correction cuts the probability of a $Z$ error roughly in half.  First-order quantities account for most of the correction. Third-order quantites are negligible, providing further justification for our choice of $\kBest=3$.

We see from~\figref{fig:pr-accept-lower-bounds} that the lower bound on $Z$-error verification acceptance at $p=10^{-3}$ is about $0.84$.  The correction eliminates better than half of the $Z$ errors going into $Z$-error verification, so we crudely estimate a lower bound \emph{without} the correction of about $0.63$, a decrease by a factor of~$1.3$.  There are four $Z$-error verifications of encoded $\ket 0$ in the (full) exRec and four similar $X$-error verifications of encoded $\ket +$.  Thus, in the normalization factor alone, the correction reduces upper bounds on the malignant event probabilities (see~\eqnref{eq:PrMalig}) by roughly a factor of $1.3^8 \approx 8$.  The savings is less, of course, as $p$ decreases.

Finally, bound the probability of the $\badX$ event by $\Pr[\badX] \leq \Pr[K_X > 6] + \sum_j \Pr[\bad_X^{(j)}]$, and use $\Pr[\badX \vert \acceptX] \leq \Pr[\badX] / \Pr[\acceptX]$.  The probability of the $\badZ$ event is similarly bounded.  

{
\setlength{\tabcolsep}{5pt}
\begin{table}
\centering
\small
\begin{tabular}{c|cccccc}
\hline\hline
& $k=1$ & $k=2$ & $k=3$ & $k=4$ & $k=5$ & $k=6$ \tabularnewline
\hline
$\Pr[\zeta \neq 0, K_Z=k, \goodZ]$        & $0.1228$ & $0.0101$ & $0.0005$ & $2\times 10^{-5}$ & $6 \times 10^{-7}$ & $1 \times 10^{-8}$ \tabularnewline
$\Pr[\zeta \neq 0, K=k, \best, \rejectX]$ &  $0.0614$ & $0.0140$ & $0.0012$ & - & - & - \tabularnewline
\hline \hline
\end{tabular}
\caption{This table shows the (un-normalized) probability that a non-trivial $Z$ error occurs during $X$-error verification of $\lket{0}_1$ for the Overlap-$4$ verification schedule, evaluated at $p=10^{-3}$.  The first row gives upper bounds on the probability of a nontrivial $Z$-error under the $Z$-only noise model, assuming that all $Z$ errors pass verification.  The second row gives lower bounds on the correction applied in \eqnref{eq:pr-z-good-acceptX-corrected} based on the full depolarizing noise model.} 
\label{tbl:verifyX-corrections}
\end{table}
}

\subsection{\texorpdfstring{$Z$}{Z}-error verification} \label{sec:Z-verification-details}

\begin{figure}
\centering
\includegraphics[scale=1]{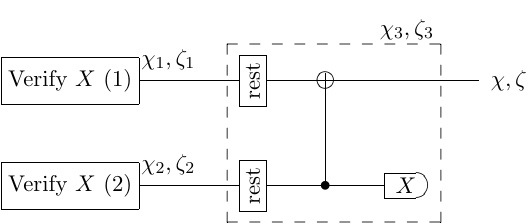}
\caption{$Z$-error verification.  The $Z$-error verification component consists of two $X$-error-verified ancillas, two transversal rest operations, a transversal CNOT and a transversal $X$-basis measurement.} 
\label{fig:Z-error-verification}
\end{figure}

The $Z$-error verification component is illustrated in \figref{fig:Z-error-verification} with its three labeled independent sub-components.  Define the component to be $\badX$ if any of the sub-components are $\badX$ or there are more than seven $X$ failures.  The $\badX$ events for sub-components one and two are defined in \appref{sec:X-verification-details} and $\badX^{(3)}$ occurs when component three contains more than four $X$ failures.  Thus $\Pr[\badX] \leq \Pr[K > 7] + \sum_{j=1}^3 \Pr[\badX^{(j)}]$.  Similarly define $\badZ$ for $Z$ failures.  Also define $\out(x) := \{ \vec{x} : x_1 x_2 x_3' \equiv x \}$, $\out(z) := \{ \vec{z} : z_1 z_3' \equiv z \}$ and $\accept_Z := \{ \vec{z} : z_1 z_2 z_3'' \equiv 0 \}$.

Begin by placing a lower bound on the acceptance probability $\Pr[\acceptZ]$ conditioned on acceptance of both $X$-error verifications.  As in the $X$-error verification component, we use an estimate based on the good events only: 
\begin{equation}\begin{split} \label{eq:prAcceptZ}
  \Pr[\acceptZ \vert \accept^{(1,2)}] &\ge \Pr[\acceptZ, \goodZ \vert \accept^{(1,2)}] \\ &= 1 - \Pr[\rejectZ, \goodZ \vert \accept^{(1,2)}] - \Pr[\badZ \vert \accept^{(1,2)}]
\end{split}
\end{equation}
\begin{equation}
\begin{split}
  \Pr[\rejectZ, \goodZ \vert \accept^{(1,2)}]
  &= \sum_{\substack{
              \vec{z} \not\in \accept_Z \\
              \abs{\vec{k}} \leq 7
        }}
     \left(\begin{split}
     \Pr[\zeta_3&=z_3, K_{Z,3}=k_3, \goodZ^{(3)}] \\
     \cdot \prod_{j=1}^2 &
        \begin{split}
          &\Pr[\zeta_j=z_j, K_{Z,j}=k_j, \goodZ^{(j)}\\ 
          &\qquad\qquad \vert \accept^{(j)}]
        \end{split}
     \end{split}
\right) \\
  &\leq \sum_{\substack{
              \vec{z} \not\in \accept_Z \\
              \abs{\vec{k}} \leq 7                        
        }}
        \left(\begin{split}
        &\Pr[\zeta_3=z_3, K_{Z,3}=k_3,\goodZ^{(3)}]\\
        &\cdot \prod_{j=1}^2 \frac{\PrC_j(z_j,k_j)}{\Pr[\accept^{(j)}]}
        \end{split} \right)                 
 \enspace ,
\end{split}\end{equation}
where $\PrC_j$ is defined according to~\eqnref{eq:correction-k-term}.

Now consider $Z$ errors, the simpler case.  We have
\begin{equation}
\begin{split}
&\Pr[\zeta=z, K_Z \leq k, \goodZ, \accept \vert \accept^{(1,2)}] 
 \leq \\
 &\qquad\qquad\qquad
 \sum_{\substack{
        \vec{z} \in \out(z) \cap \text{\accept}_Z \\
        \abs{\vec{k}} \leq \min\{k,7\}
     }}
     \left(\begin{split}
     \Pr[\zeta_3&=z_3, K_{Z,3}=k_3, \goodZ^{(3)}] \\
     \cdot \prod_{j=1}^2 &\frac{\PrC_j(z_j,k_j)}{\Pr[\accept^{(j)}]} 
     \end{split}\right)
 \enspace .
\end{split}
\end{equation}
We upper-bound $\Pr[\zeta=z, K_Z \leq k, \goodZ \vert \acceptZ, \accept^{(1,2)}]$ by the minimum of one and the ratio of $\Pr[\zeta=z, K_Z \leq k, \goodZ, \acceptZ \vert \accept^{(1,2)}]$ divided by the previously computed lower bound on $\Pr[\acceptZ \vert \accept^{(1,2)}]$.  

Next consider $X$ errors.  Under the $X$-only noise model, we have no information about $Z$ errors and we must pessimistically assume that all $X$ errors pass verification (i.e., $\Pr[\chi,\acceptZ]=\Pr[\chi]$).  In reality, some of the $X$ errors will occur with $Z$ errors and will be rejected.  In the same way that corrections were applied for $Z$-error counts during $X$-error verification, we apply low-order corrections to the $X$-error counts by considering $X$ and $Z$ errors together.  In a similar manner to Eq.~\eqnref{eq:pr-z-good-acceptX-corrected} we have
\begin{equation}\begin{split}
\label{eq:pr-x-good-acceptZ-corrected}
  &\Pr[\chi=x, \goodX, \acceptZ \vert \acceptX^{(1,2)}]
   \\
   &\qquad\qquad\qquad \leq \frac {\Pr[\chi=x, \goodX, \acceptX^{(1,2)}] - 
               \Pr[\chi=x, \best, \rejectZ, \acceptX^{(1,2)}]}
              {\Pr[\acceptX^{(1)}]\Pr[\acceptX^{(2)}]}
  \enspace .
\end{split}\end{equation}
The numerator terms on the right-hand side can be computed as 
\begin{align}
\label{eq:verifyZ-correction-term1}
&\Pr[\chi=x, \goodX, \acceptX^{(1,2)}] 
 = \sum_{\substack{
        \vec{x} \in \out(x) \\
        \abs{\vec{k}} \leq 7
     }}
     \prod_{j=1}^3 \Pr[\chi_j=x_j, K_{X,j}=k_j, \goodX^{(j)}, \acceptX^{(j)}] \\
\label{eq:verifyZ-correction-term2}
&\Pr[(\chi, \zeta)=(x,z), \best, \rejectZ, \acceptX^{(1,2)}] = \\
&\qquad\qquad\qquad\qquad\qquad 
    \sum_{\substack{
        \vec{z} \in \out(z) \setminus \accept_Z \\
        \vec{x} \in \out(x),
        \abs{\vec k} \leq 1
     }}
     \prod_{j=1}^3 \Pr[(\chi_j, \zeta_j)=(x_j, z_j), K_j=k_j, \acceptX^{(j)}] \nonumber 
\enspace .
\end{align}
Equation \eqnref{eq:pr-x-good-acceptZ-corrected} is then upper bounded by upper bounding \eqnref{eq:verifyZ-correction-term1}, lower bounding $\Pr[\acceptX^{(1)}]$ and $\Pr[\acceptX^{(2)}]$, and computing \eqnref{eq:verifyZ-correction-term2} with equality, all of which can be accomplished with quantities from \appref{sec:X-verification-details}. 

This correction is less significant than the similar correction applied in $X$-error verification.  By the time $Z$-error verification occurs, many of the $X$-errors have already been eliminated by $X$-error verification.  The $X$ errors that do pass verification are less correlated with $Z$ errors; the correction eliminates only about $39$ percent of the nontrivial $X$ errors compared to about $57$ percent for the analogous $X$-error verification correction.  Furthermore, this correction has no effect on normalization because acceptance at this stage depends only on $Z$ errors, and there are no further postselection steps in the exRec.

\subsection{Error correction}
\label{sec:EC-detail}

\def\malig{\text{mal}}
\def\benign{\text{ben}}
\def\chiIn{\chi_{\text{in}}}
\def\chiOut{\chi_{\text{out}}}
\def\zetaIn{\zeta_{\text{in}}}
\def\zetaOut{\zeta_{\text{out}}}

\begin{figure}
\centering
\includegraphics[width=14.2cm]{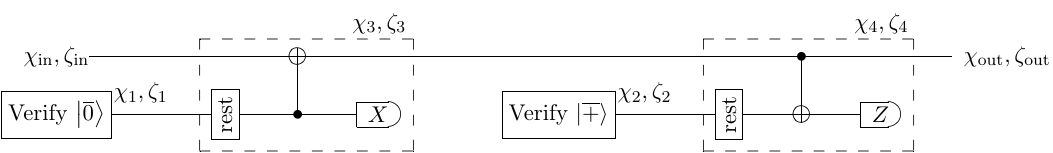}
\caption{The error correction component.  Error correction consists of a $Z$-error correction followed by an $X$-error correction.  The $Z$-error correction requires a verified encoded $\ket 0$ ancilla and the $X$-error correction requires a verified encoded $\ket +$ ancilla.} \label{fig:EC}
\end{figure}

The error correction component (\figref{fig:EC}) contains four independent sub-components. 
The $\badX$ events for sub-components one and two are defined in \appref{sec:Z-verification-details}.  The $\badX$ event for sub-components three and four occur when there are more than four failures inside of the sub-component.  Define the error correction component to be $\badX$ if any of the sub-components are $\badX$ or there are more than eleven $X$ failures total. Similarly define $\badZ$ for $Z$ failures.  

All events are conditioned on the successful verification of both the $\lket{0}$ and $\lket{+}$ ancillas. We have
\begin{equation}\begin{split}
  \Pr[\badX \vert \accept^{(1,2)}]
  &\leq \sum_{j=1}^2 \Pr[\badX^{(j)} \vert \accept^{(j)}] + \sum_{j=3}^4 \Pr[\badX^{(j)}] + \Pr[K_X > 11] \enspace .
\end{split}\end{equation}
Here, $\accept^{(j)}$ means that \emph{all} verification tests, $X$ and $Z$, within that subcomponent have passed.  

Consider first $X$ errors.  For the two leading error corrections, we are concerned only with the case in which the incoming error syndrome is zero.  Nonzero syndromes on the input may result in a (undetectable) logical error at the output.  However, as noted in \secref{sec:counting-EC}, this has no impact on either the output syndrome or the correctness of the $1$-Rec.  The probability of an $X$ error $x$ at the output is expressed as $\Pr[\chiOut=x, \good \vert \chiIn \equiv 0]$.  

For the two trailing error corrections, we must consider all possible inequivalent errors on the input.  However, we do not need to compute the probability of each individual error at the output.  Rather, we care only about the the probability of an uncorrectable error.  Let $\mathcal{E}$ be the set of correctable errors on a single block and $\bar{\mathcal{E}}$ be the set of uncorrectable errors, and for an error~$e$, let 
\begin{equation}
\label{eq:correctable-error-identifier}
 D(e) = 
 \begin{cases}
    1 & \text{if } e \in \bar{\mathcal{E}} \\
    0 & \text{if } e \in \mathcal{E}
 \end{cases}
 \enspace .
\end{equation} 
We use, for $d \in \{0,1\}$, 
\begin{align}
  \Pr[D(\chiOut) = d, \goodX \vert \chiIn=\xIn] 
  &\leq \sum_{x : \, D(x)=d} \Pr[\chiOut=x, \goodX \vert \chiIn=\xIn] \enspace .
\end{align}

The terms of the sum may be expanded as usual by partitioning $k$ failures among the sub-components of the EC.  Define $\out(x) := \{ \vec{x} : \xIn x_1 x_3' x_4' \text{Corr}(\xIn x_1 x_2 x_3' x_4'') \equiv x \}$, where $\text{Corr}(\cdot)$ gives the classically computed correction for the syndrome of its argument.  Then 
\begin{equation}
  \label{eq:prEC-xIn-given-xOut}
  \Pr[\chiOut=x, K_X \leq k, \goodX \vert \chiIn=\xIn]
  =\!\!\!\! \sum_{ \substack{
            \vec{x} \in \out(x) \\
            \abs{\vec{k}} \leq 11
        }}
        \prod_{j=1}^4 \Pr[\chi_j=x_j, K_{X,j}=k_j, \goodX^{(j)}]
  \enspace ,
\end{equation}
which can be upper bounded using quantities from~\appref{sec:Z-verification-details}.  Calculations for $Z$ errors are analogous except using $\out(z) := \{ \vec{z} : \zIn z_2 z_3' z_4' \text{Corr}(\zIn z_1 z_3'') \equiv z \}$.

\subsection{exRec}
\label{sec:exRec-detail}

\begin{figure}
\centering
\includegraphics[scale=1]{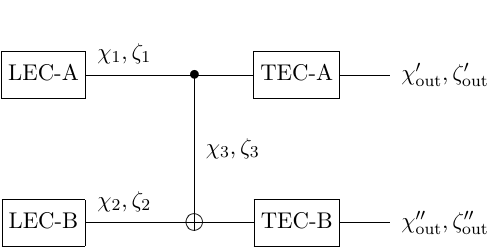}
\caption{The CNOT exRec consists of two leading error corrections, a transversal CNOT with controls on block A and targets on block B, and two trailing error corrections. Incremental errors for the LECs and the CNOT are labeled, as are the total errors at the output of each block.}
\label{fig:exRec-detail}
\end{figure}

The exRec is divided into five sub-components: the leading error correction on block A (LEC-A), the leading error correction on block B (LEC-B), the transversal CNOT from block A to block~B, the trailing error correction on block A (TEC-A) and the trailing error correction on block B (TEC-B).  See \figref{fig:exRec-detail}.  

The $\badX$ event for the error correction sub-components is defined in \appref{sec:EC-detail}.  The $\badX$ event for the transversal CNOT occurs when it contains more than two $X$ failures.
The $\badX$ event for the exRec (and analogously the $\badZ$ event) occurs when any of the following are true: 
\begin{itemize}
  \item Any of the sub-components are $\badX$.  
  \item There are more than $25$ $X$ failures in the exRec.  
  \item There is more than one $X$ failure in the transversal CNOT \emph{and} there are more than than three $X$ failures in each of the two leading ECs.  
\end{itemize}

The last condition eliminates faults that are particularly difficult to count. The time required to count an exRec fault is proportional to the product of the number of unique syndromes that can result at the output of the two leading ECs and the transversal CNOT.  The number of unique syndromes that can result from the transversal CNOT with two $X$ failures is $\binomial{23}{2} 3^2 = 2277$, while the number of unique syndromes with one $X$ failure is $23 \cdot 3 = 69$.  The numbers of unique syndromes at the output of the leading ECs are $24$, $277$ and $2048$ for one, two, and three $X$ failures respectively.  So, for example, the event $K_{X,1}=2, K_{X,2}=3, K_{X,3}=1$ ($277 \cdot 2048 \cdot 69 \approx 4 \cdot 10^7$) requires far less time than the event $K_{X,1}=2, K_{X,2}=3, K_{X,3}=2$ ($277 \cdot 2048 \cdot 2277 \approx 1 \cdot 10^{9}$).  In particular, we would like to avoid counting faults for which $K_{X,3} = 2$.  

The probability of the $\badX$ event, conditioned on acceptance of all $X$-error and $Z$-error verifications is 
\begin{equation}\begin{split}
  \Pr[\badX \vert \accept^{(1,2,4,5)}] 
  &\leq \!\!\!\!  \sum_{j\in \{1,2,4,5\}} \Pr[\badX^{(j)} \vert \accept^{(j)}] + \Pr[\badX^{(3)}] + \Pr[K_X > 25] \\
  &\qquad + \Pr[K_{X,3}>1] \prod_{j=1}^2 \Pr[K_{X,j} > 3 \vert \accept^{(j)}] \enspace .
\end{split}
\label{eq:prBadExRec}
\end{equation}
Computed upper bounds for this quantity are plotted in \figref{fig:prBad-cnot-upper-bounds}.  Na{\"i}vely one might expect bounds for partial exRecs---those for which one or more TECs have been removed---to be lower than bound for the full exRec by as much as a factor of two.
However,~\eqnref{eq:prBadExRec} is dominated by either the transversal CNOT ($\Pr[\bad_X^{(3)}]$) or the condition involving the transversal CNOT and the two LECs ($\Pr[K_{X,3}>1] \prod_{j=1}^2 \Pr[K_{X,j} > 3 \vert \accept^{(j)}]$) over most of the domain of $p$.  Thus removing the TECs has little impact on the probability that the exRec is bad.  

Recall from \eqnref{eq:correctable-error-identifier} the definition of $D:(\mathcal{E} \cup \bar{\mathcal{E}}) \rightarrow \{0,1\}$.  For $\vec x = (x_1, x_2, x_3, x_\out', x_\out'')$ and $\vec z = z_1, z_2, z_3, z_\out', z_\out'')$, we can then define the malignant $X$- and $Z$-error events for the CNOT $1$-Rec as 
\begin{equation}\begin{split} \label{eq:PrMalig}
  \malig_{IX}(\vec x) &:= \left[D(x_1) = D(x_\out') \right] \wedge 
                  \left[D(x_1) \oplus D(x_2) \neq D(x_\out'') \right] \\
  \malig_{XI}(\vec x) &:= \left[D(x_1) \neq D(x_\out') \right] \wedge 
                  \left[D(x_1) \oplus D(x_2) = D(x_\out'') \right] \\
  \malig_{XX}(\vec x) &:= \left[D(x_1) \neq D(x_\out') \right] \wedge 
                  \left[D(x_1) \oplus D(x_2) \neq D(x_\out'') \right] \\
  \malig_{IZ}(\vec z) &:= \left[ D(z_1) \oplus D(z_2) = D(z_\out') \right] \wedge 
                  \left[ D(z_2) \neq D(z_\out'') \right] \\
  \malig_{ZI}(\vec z) &:= \left[ D(z_1) \oplus D(z_2) \neq D(z_\out') \right] \wedge
                  \left[ D(z_2) = D(z_\out'') \right] \\
  \malig_{ZZ}(\vec z) &:= \left[ D(z_1) \oplus D(z_2) \neq D(z_\out') \right] \wedge
                  \left[ D(z_2) \neq D(z_\out'') \right]  \enspace .
\end{split}\end{equation}
These are the events for which the behavior of the $1$-Rec differs from the behavior of an ideal decoder followed by an ideal (level-zero) CNOT gate, i.e., the $1$-Rec is incorrect.  The subscripts denote the logical error introduced by the exRec.  For example, $\malig_{IX}$ is the event in which the action of the exRec followed by an ideal decoder is the same as that of an ideal decoder followed by an ideal CNOT gate plus the two-qubit error error $I\otimes X$.  

For each error event~$E \in \{ IX, XI, XX \}$, we are interested in the probability that $\malig_E(\vec \chi)$ occurs along with the $\goodX$ event. Define $G := \{ \vec{k} : \abs{\vec{k}} \leq 25, k_3 \leq 1 \text{ if } k_1 \geq 4 \text{ and } k_2 \geq 4 \}$.  Then, letting $\chiIn'$ and $\chiIn''$ be the errors input to the two TECs, 
\begin{equation}\begin{split} 
  \label{eq:PrMaligExRec}
  \Pr[\malig_E(\vec \chi), \goodX] &= \!\!\!\! 
  \sum_{\substack{
        \vec x : \, \malig_E(\vec x) \\
        \vec{k} \in G
     }}
  \left[
  \begin{split}
       \prod_{j=1}^3 \Pr[\chi_j = x_j, K_{X,j}=k_j, \goodX^{(j)}]  \\
  \cdot \Pr[\chiOut' = \xOut', K_{X,4}=k_4, \goodX^{(4)} \vert \chiIn' \equiv x_1 x_3'] \\
  \cdot \Pr[\chiOut'' = \xOut'',K_{X,5}=k_5, \goodX^{(5)} \vert \chiIn'' \equiv x_1 x_2 x_3'']
  \end{split}
  \right]
  \enspace ,
\end{split}\end{equation}
which may be upper bounded using quantities from~\appref{sec:EC-detail}.  Calculation of $\Pr[\malig_E(\vec \zeta), \goodZ]$ for $E \in \{ IZ, ZI, ZZ \}$ is analogous.

\begin{figure}
\centering
\subfigure[]{
\includegraphics[width=6.79cm]{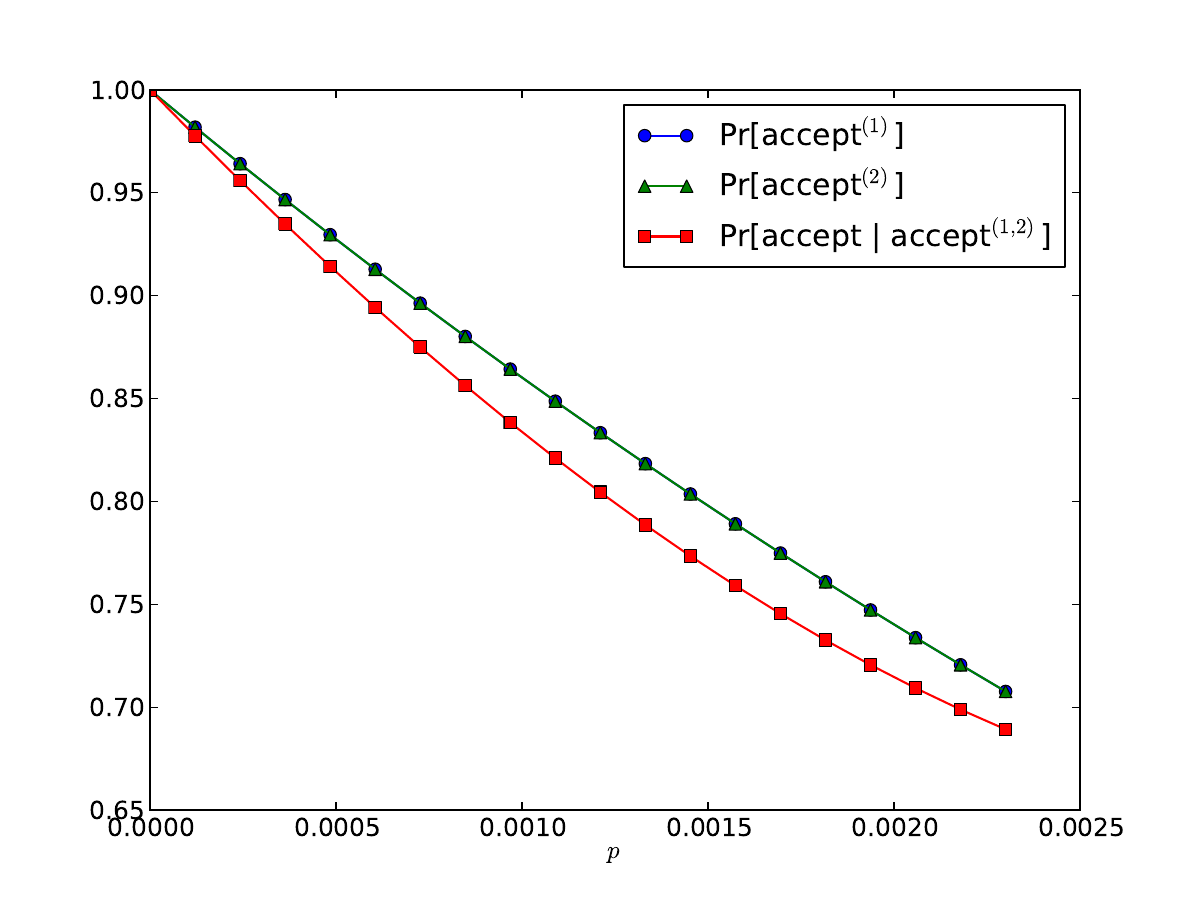}
\label{fig:pr-accept-lower-bounds}
}
\subfigure[]{
\includegraphics[width=6.79cm]{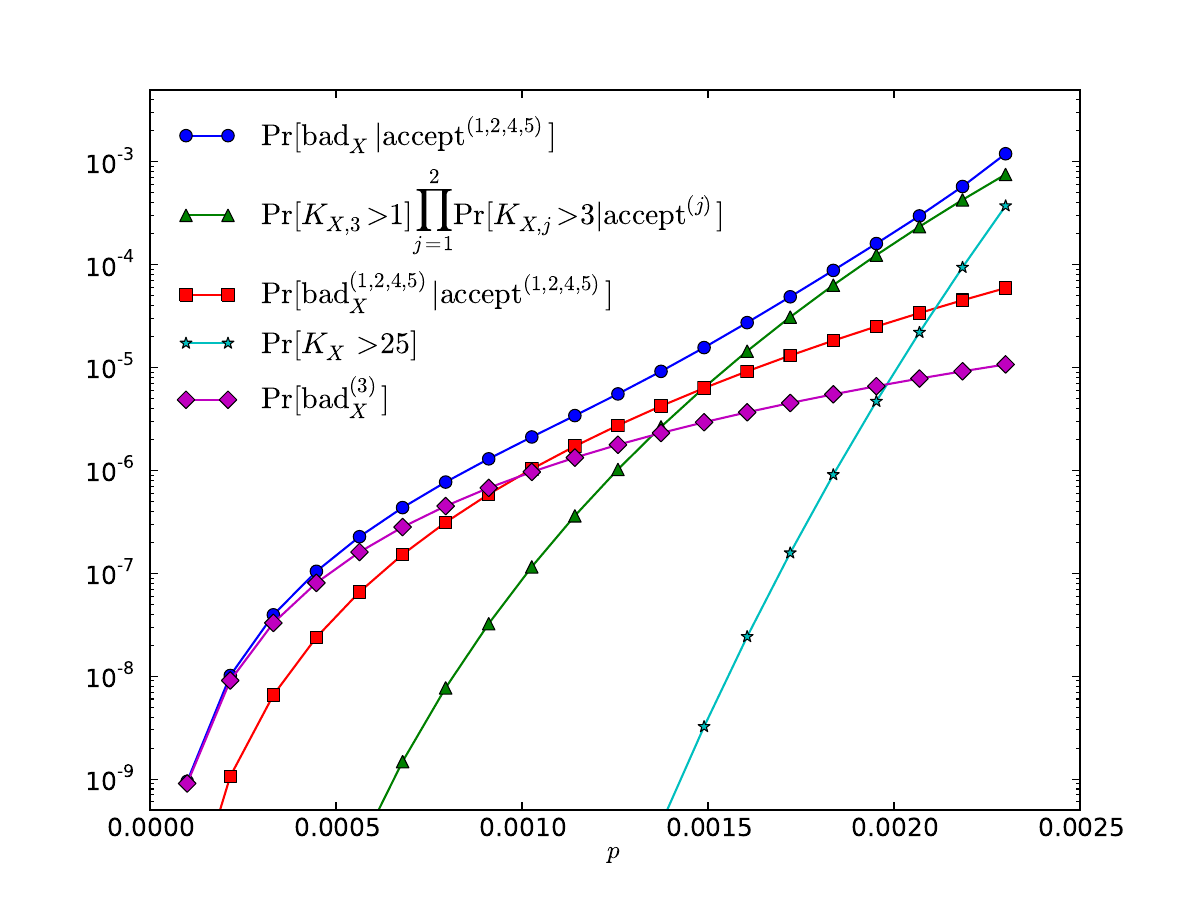}
\label{fig:prBad-cnot-upper-bounds}
}
\caption{Plotted in (a) are lower bounds on the Overlap-$4$ acceptance probabilities for the two $X$-error verifications ($\accept^{(1)}$ and $\accept^{(2)}$) and for the $Z$-error verification ($\accept$) conditioned on success of the $X$-error verifications.  The plot in~(b) shows upper bounds on~\eqnref{eq:prBadExRec}, the probability that the CNOT exRec is bad.
}
\end{figure}

\section{Implementation of component counting}  
\label{sec:computer-analysis-details}

Equations~\eqnref{eq:prBadExRec} and~\eqnref{eq:PrMaligExRec} in \appref{sec:counting-detail} are conceptually straightforward and easy to compute numerically for a fixed $\gamma$.  However, we would like to compute {exact} bounds that hold for a range of~$\gamma$.  In this appendix, we will specify a few of the implementation details that allow for maintaining the bounds as polynomials with integer coefficients.  

The ultimate goal is to compute upper bounds on the probabilities of malignant events at the outermost layer of the exRec.  That is, we want to compute \eqnref{eq:prBadExRec}, \eqnref{eq:PrMaligExRec} and combine them to~get 
\begin{equation}
  \Pr[\malig_E(\vec \chi) \vert \accept] \leq \Pr[\malig_E(\vec{\chi}), \goodX \vert \accept] + \Pr[\badX \vert \accept] \enspace .
\end{equation}
The right-hand side of this inequality decomposes into sums of individual component quantities of the form
\begin{equation}\begin{split} 
  \Pr[\chi=x,K_X=k]
  & = \sum_{\abs{\vec{k}}=k}
      \Pr[\chi=x, \vec{K}_X=\vec{k}] \enspace ,
\end{split}
\label{eq:ComponentQuantity}
\end{equation}
where $\vec{k} = (k_c, k_r, k_p, k_m)$ expresses the number of failing CNOT, rest, $\ket 0$ preparation and $Z$-basis measurements, respectively.  

For each term in the sum, the number of failures for each type of location is fixed, but the particular locations on which those failures occur are not fixed, nor are the errors that occur at those locations.  Let $L(\vec k) := \{ (\vec{l}_c, \vec{l}_r, \vec{l}_p, \vec{l}_m) : (\abs{\vec{l}_c}, \abs{\vec{l}_r}, \abs{\vec{l}_p}, \abs{\vec{l}_m}) = \vec{k} \}$ be the set of all possible tuples of failing locations consistent with $\vec k$. Also, let $E(\vec{l})$ be the set of all possible tuples of $X$ errors consistent with failures at locations $\vec l$.  To fix the locations and the errors, use
\begin{equation}\begin{split}
  \Pr[\chi=x, \vec{K}_X=\vec{k}]
  & = \sum_{\substack{ \vec{l} \in L(\vec{k}),
                       \vec{e} \in E(\vec{l})
                    }}
      \Pr[\chi=x, \vec{E}=\vec{e}] \\
  & = \sum_{\substack{ \vec{l} \in L(\vec{k}),
                       \vec{e} \in E(\vec{l})
                    }}
      \mathcal{F}(x, \vec{e}) \Pr[\vec{E}=\vec{e}]
\end{split}
\label{eq:LocationSum}
\end{equation}
where in the second line we have made the substitution $\mathcal{F}(x, \vec{e}) = \Pr[\chi=x \vert \vec{E}=\vec{e}]$.

The boolean function $\mathcal{F}(x, \vec{e})$ takes value one if the component produces the error $x$ for a given ``configuration'' of errors $\vec{e}$ and value zero otherwise.  The error configuration $\vec{e}$ fully specifies the the locations that have failed and the error at each failing location. Let $n_c, n_r, n_p, n_m$ be the total number of CNOT, rest, $\ket{0}$ preparations and $Z$-basis measurements in the component, respectively.  Then from the marginal noise model discussed in \secref{sec:NoiseModel} we have
\begin{equation}\begin{split}
  \Pr[\vec{E}=\vec{e}]
  &= (1-4\gamma)^{n_p + n_m} (1-8\gamma)^{n_r} (1-12\gamma)^{n_c} \\
  &\qquad \cdot
     \Big( \frac{4\gamma}{1-4\gamma} \Big)^{k_p + k_m}
     \Big( \frac{8\gamma}{1-8\gamma} \Big)^{k_r}
     \Big( \frac{4\gamma}{1-12\gamma} \Big)^{k_c} \\
  &\leq A_{\vec{n}}
     4^k 2^{k_r}
     \Big( \frac{\gamma}{1-12\gamma} \Big)^{k}
      \enspace ,
\end{split}
\label{eq:PrConfig}
\end{equation}
where $A_{\vec{n}}$ is defined as in \eqnref{eq:component-AB}.  This inequality is a reasonable approximation for the range $\gamma \leq \frac{2 \times 10^{-3}}{15}$ with which we are concerned.  It allows us to move $\gamma$ into a prefactor in front of the sum of \eqnref{eq:ComponentQuantity}, and permits an integer representation in the computer analysis.  Indeed, substituting back into equation \eqnref{eq:ComponentQuantity} gives 
\begin{equation}
  \label{eq:Component-expression-X}
  \Pr[\chi=x,K_X=k]
  \leq A_{\vec{n}}
    \Big( \frac{\gamma}{1-12\gamma} \Big)^{k}
    \sum_{\substack{ \abs{\vec{k}} = k \\
                     \vec{l} \in L(\vec{k}),
                     \vec{e} \in E(\vec{l})
         }}
    4^k 2^{k_r}
    \mathcal{F}(x, \vec{e})
  \enspace .
\end{equation}

$X$- and $Z$-error verification require corrections that involve counting $X$ and $Z$ errors together (\appref{sec:X-verification-details}). In that case, the probability of an error configuration is computed according to \defref{def:depolarizing-noise} and we require a lower bound.  We have 
\begin{equation}\begin{split}
  \Pr[\vec{E}=\vec{e}]
  &= (1-4\gamma)^{n_p' + n_m'} (1-12\gamma)^{n_r} (1-15\gamma)^{n_c} \\
  &\qquad \cdot
     \Big( \frac{4\gamma}{1-4\gamma} \Big)^{k_p + k_m}
     \Big( \frac{4\gamma}{1-12\gamma} \Big)^{k_r}
     \Big( \frac{\gamma}{1-15\gamma} \Big)^{k_c} \\
  &\geq A'(\vec{n}')
     4^{k_p + k_m + k_r}
     \Big( \frac{\gamma}{1-4\gamma} \Big)^{k} \\
  &= A_{\vec{n}}
\Big( \frac{A'(\vec{n'})}{A_{\vec{n}}} \Big)
\Big( \frac{\gamma}{1-12\gamma} \Big)^{k}
\Big( \frac{1-12\gamma}{1-4\gamma} \Big)^k
     4^{k_p + k_m + k_r} \\
  &\geq A_{\vec{n}}
     \Big( \frac{\gamma}{1-12\gamma} \Big)^{k}
     \mathcal{S}
     4^{k_p + k_m + k_r}
  \enspace .
  \label{eq:XZ-conversion}
\end{split}
\end{equation}
Here, $n_p'$ and $n_m'$ are the total numbers of preparation and measurement locations (including now $\ket +$ preparations and $X$-basis measurements), and $A'(\vec n') = (1-4\gamma)^{n_p' + n_m'} (1-12\gamma)^{n_r} (1-15\gamma)^{n_c}$.  The scaling factor $\mathcal{S} = \big\lfloor \frac{A'(\vec{n}')}{A_{\vec{n}}} \big( \frac{1-12\gamma_{\text{max}}}{1-4\gamma_{\text{max}}} \big)^k \big\rfloor$ converts the $XZ$ probability into a form that is compatible with $X$-only and $Z$-only probabilities (Eq.~\eqnref{eq:PrConfig}) while maintaining the lower bound and integer representation. The constant $\gamma_\text{max}$ is chosen so that it is higher than the expected threshold value.  Now, \eqnref{eq:pr-z-good-acceptX-corrected} and \eqnref{eq:pr-x-good-acceptZ-corrected} can be rewritten so that the sums do not depend on $\gamma$, and terms with corrections such as~\eqnref{eq:correction-k-term} can be represented by a single integer.

Another advantage of counting component probabilities as likelihoods, is that the counts compose nicely.  If we apply \eqnref{eq:ComponentQuantity} to itself and combine with \eqnref{eq:Component-expression-X}, we end up with
\begin{equation}\begin{split}
  \label{eq:prX-recursive}
  \Pr[\chi=x,K_X=k]
  &= \sum_{\substack{ \abs{\vec{k}} = k \\
                     \vec{x} \in \out(x)
          }}
     \prod_j \Pr[\chi_j=x_j, K_{X,j}=k_j] \\
  &\leq A_{\vec{n}}
    \Big( \frac{\gamma}{1-12\gamma} \Big)^k
    \Bigg[
    \sum_{\substack{ \abs{\vec{k}} = k \\
                     \vec{x} \in \out(x)
          }}
    \prod_j
    \sum_{\substack{ \abs{\vec{k}_j} = k_j \\
                     \vec{l} \in L(\vec{k}_j), \vec{e} \in E(\vec{l})
         }}
    4^{k_j} 2^{k_{j,r}}
    \mathcal{F}(x_j, \vec{e})
    \Bigg] \enspace .
\end{split}\end{equation}

The substitution made in the first line can be applied successively for each sub-component $j$.  Once the lowest level component is reached, we use \eqnref{eq:Component-expression-X} to push dependence on~$\gamma$ outside of the sum.  The integer value inside of the brackets is the discrete convolution of weighted counts from the sub-components summed over all possible failure partitions $\vec k$ of size $k$.  It is a weighted count of all possible ways to produce error $x$ with an order $k$ fault.  

The primary task of the Python program is to compute $\mathcal{F}$ for each (good) error configuration, starting with the lowest level component, and to store the resulting weighted sums 
\begin{equation}
  \sum_{\substack{ \abs{\vec{k}} = k \\
                     \vec{l} \in L(\vec{k}), \vec{e} \in E(\vec{l})
         }}
    4^{k} 2^{k_{r}}
    \mathcal{F}(x, \vec{e})
\end{equation}
(or equivalent) for use in the counting for larger components.  At each level, counts for the sub-components are convolved to generate new counts.  The prefactor $A_{\vec{n}} \big( \frac{\gamma}{1-12\gamma} \big)^{k}$ need only be computed at the end, when calculating the threshold.

\section{Monotonicity of malignant event upper bounds} \label{sec:monotonicity}

We now show how to prove that the level-one malignant event polynomials constructed by our counting method are monotone non-decreasing over the interval $\gamma \in [0,1.8 \times 10^{-3}]$, which encompasses our threshold values.  Monotonicity of level-one upper bounds is not strictly required for the proof of \thmref{thm:threshold}.  However, it is useful constructing the transformed noise model (see \appref{sec:bounding-polys}) and in finding the maximum $\gamma$ that satisfies $\mathcal{P}^{(2)}_E(\Gamma^{(1)}) \leq \alpha_E \Gamma^{(1)}$.  Monotonicity of upper bounds for level-two and above follow from \claimref{clm:P2-multiplicative} which depends only on the construction defined by our counting method and not on the actual counting results---see \appref{sec:P2-multiplicative-proof}.  Level-one polynomials, however, can include terms that decrease with $\gamma$. Monotonicity statements for level-one bounds, therefore, depend on coefficients---i.e., weighted counts---computed by our Python implementation.  

Recall that the upper bound $\mathcal{P}_E$ for malignant event $\malig_E$ as defined by \appref{sec:counting-detail} is of the form
\begin{equation}
  \mathcal{P}_E \geq \frac{\Pr[\malig_E,\good]}{\Pr[\accept]} + \Pr[\bad \vert \accept]
  \label{eq:malig-event-poly}
  \enspace .
\end{equation}
Consider first the $\Pr[\bad \vert \accept]$ term.  This term is expressed as sums and products of $\Pr[\bad]$ terms, some of which contain $\Pr[\accept]$ terms in the denominator.  The $\Pr[\bad]$ and $\Pr[\accept]$ terms are, in turn, expressed as sums and products of polynomials $\poly$ of the form
\begin{equation}
   \poly(\gamma)
   = A_{\vec{n}}(\gamma)
     \sum_{k=\kMin}^{\kMax} c(k) \Big( \frac{\gamma}{1-12\gamma} \Big)^k
   \enspace ,
\label{eq:pr-general-form}
\end{equation}
where $\poly(\gamma) \geq 0$ for all $\gamma \geq 0$, and integer coefficients $c(k)$ do not depend on $\gamma$.
The quantity $\Pr[\malig_E,\good]$ is also expressed in this way.  Our goal then is to use \eqnref{eq:pr-general-form} to show that $\Pr[\accept]$ is monotone non-increasing, and that $\Pr[\bad]$ and $\Pr[\malig_E,\good]/\Pr[\accept]$ are monotone non-decreasing over the desired range.  Note that $\Pr[\malig_E,\good]$ is, in fact, not monotone over our chosen range. 

The derivative of $\poly$ is a sum of two terms.  The first term can be lower-bounded by using
\begin{equation}\begin{split}
  \frac{d A_{\vec{n}}}{d\gamma} \sum_{k=\kMin}^{\kMax} c(k) \Big( \frac{\gamma}{1-12\gamma} \Big)^k
  &= \Big( \frac{-12 n_c}{1-12\gamma} - \frac{8 n_r}{1-8\gamma} - \frac{4 n_{pm}}{1-4\gamma} \Big)
     A_{\vec{n}} \sum_{k} c(k) \Big( \frac{\gamma}{1-12\gamma} \Big)^k \\
  &\geq -(12 n_c + 8 n_r + 4 n_{pm}) \frac{\poly(\gamma)}{1-12\gamma} 
  \enspace .
\end{split}
\end{equation}
If all coefficients $c(k)$ are non-negative, then the second term may be lower-bounded as
\begin{align}
  A_{\vec{n}} \sum_{k} c(k) \frac{d}{d\gamma} \Big( \frac{\gamma}{1-12\gamma} \Big)^k
  &\geq \frac{\kMin}{\gammaMax} \frac{\poly(\gamma)}{1-12\gamma}
  \enspace .
\end{align}

In the case of $\Pr[\bad]$, all of the coefficients $c(k)$ are, indeed, non-negative.  Using $\gammaMin=0$, and $\kMin = k_\good + 1$ we obtain
\begin{equation}\begin{split}
  \frac{d\Pr[\bad]}{d\gamma}
  &\geq \Big( \frac{k_\good+1}{\gammaMax} - 12 n_c - 8 n_r - 4 n_{pm} \Big)
        \frac{\Pr[\bad]}{1-12\gamma}
  \enspace ,
  \label{eq:dprBad-lowerBound-monotone}
\end{split}\end{equation}
which is non-negative over $[0,1.8\times 10^{-3}]$ for all of our components.  

We would like to upper bound the denominator quantities $\Pr[\accept]$ using a condition analogous to \eqnref{eq:dprBad-lowerBound-monotone}.  Such a condition is insufficient, however, because for $X$-error verification $\kMin=0$, and for $Z$-error verification some coefficients may be negative due to low-order $XZ$ corrections.
Instead, we bound the second derivative using the following inequality due to Markov.
\begin{lemma}[see, e.g., \cite{Hazewinkel1990} pp.~100]
 Let $\poly$ be a univariate polynomial of degree at most $n$.  Then the $m$-th order derivative $\poly^{(m)}$ is bounded by
 \begin{equation}
   \max_{\gamma \in [\gammaMin,\gammaMax]} \abs{\poly^{(m)}(\gamma)} 
   \leq \frac{2^m \prod_{k=0}^{m-1} (n^2 - k^2)}{(\gammaMax-\gammaMin)^m (2m-1)!!} 
        \max_{\gamma \in [\gammaMin,\gammaMax]} \abs{\poly(\gamma)}
   \enspace .
 \end{equation}
  \label{lem:markov-ineq}
\end{lemma}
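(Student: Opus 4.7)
The plan is to treat this as the classical V.~A.~Markov inequality from approximation theory and sketch its derivation from properties of Chebyshev polynomials; in the body of the paper we would invoke it as a black box from the cited reference.

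First I would reduce to the canonical interval $[-1,1]$ by the affine substitution $\gamma = \gammaMin + (\gammaMax-\gammaMin)(x+1)/2$. Setting $q(x) := \poly(\gamma(x))$ yields a polynomial of degree at most $n$ with $\max_{x \in [-1,1]} \abs{q(x)} = \max_{\gamma \in [\gammaMin,\gammaMax]} \abs{\poly(\gamma)}$, and the chain rule gives $q^{(m)}(x) = ((\gammaMax-\gammaMin)/2)^m \poly^{(m)}(\gamma(x))$. This reduction absorbs the factor $2^m/(\gammaMax-\gammaMin)^m$ in the statement and reduces the claim to
\begin{equation*}
\max_{[-1,1]} \abs{q^{(m)}} \le C_{n,m} \max_{[-1,1]} \abs{q}, \qquad C_{n,m} := \frac{\prod_{k=0}^{m-1}(n^2-k^2)}{(2m-1)!!}.
\end{equation*}

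Next I would identify the extremal polynomial. A direct computation using $T_n(\cos\theta) = \cos(n\theta)$, differentiated repeatedly and evaluated at $\theta = 0$, shows that the Chebyshev polynomial of the first kind satisfies $T_n^{(m)}(1) = C_{n,m}$, while $\abs{T_n(x)} \le 1$ on $[-1,1]$ with equality at the $n+1$ Chebyshev extrema. It therefore suffices to prove the extremal statement that $T_n$ maximizes $\abs{q^{(m)}(\pm 1)}$ among polynomials $q$ of degree at most $n$ with $\norm{q}_\infty \le 1$ on $[-1,1]$, and to then propagate this endpoint bound to all of $[-1,1]$ (which follows, once the endpoint case is known, since $q^{(m)}$ is itself a polynomial of degree $\le n-m$ and its sup norm on $[-1,1]$ is controlled by its values at $\pm 1$ via a parallel Chebyshev argument).

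The main obstacle is establishing this extremal property. For $m=1$ it is A.~A.~Markov's theorem, which follows from Chebyshev's equioscillation theorem by a perturbation argument: if some $q$ attained $\abs{q'(x_0)} > n^2$ at an interior point $x_0$, one could add a small multiple of $T_n$ to produce a polynomial whose signed deviations violate the characterization of best approximations by the Chebyshev alternation count. For general $m$ the result is V.~A.~Markov's 1892 theorem; the standard proof expands $q = \sum_{j=0}^n a_j T_j$ in the Chebyshev basis, bounds the coefficients $\abs{a_j}$ using the orthogonality of the $T_j$ with respect to the Chebyshev weight $(1-x^2)^{-1/2}$, and exploits the monotonicity of $\abs{T_j^{(m)}}$ on $[1,\infty)$ together with the fact that $T_j^{(m)}(1) \le T_n^{(m)}(1)$ for $j \le n$ to aggregate the contributions. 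This classical but nontrivial calculation is the reason the statement is cited from~\cite{Hazewinkel1990}; we use the inequality only as a tool for the monotonicity argument that follows, so no self-contained proof is given here.
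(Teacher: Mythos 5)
The paper offers no proof of this lemma---it is quoted directly as the classical V.~A.~Markov inequality from the cited reference---and your proposal likewise defers to that reference after correctly identifying the statement, so the two treatments agree in substance. Your supporting sketch (affine reduction to $[-1,1]$ absorbing the factor $2^m/(\gammaMax-\gammaMin)^m$, the extremal role of the Chebyshev polynomial with $T_n^{(m)}(1)=\prod_{k=0}^{m-1}(n^2-k^2)/(2m-1)!!$, and the perturbation argument for $m=1$) is accurate; the only loose point is the parenthetical suggestion that the interior bound follows easily from the endpoint case, which in V.~A.~Markov's theorem is in fact the delicate part, but since you explicitly invoke the result as a cited black box this does not affect the argument.
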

In our case, $n$ is bounded by the number of locations in the corresponding component. The maximum of $\poly$ is obtained by separating the positive and negative coefficients and upper bounding~by 
\begin{equation}\begin{split} 
  \max_{\gamma \in [\gammaMin,\gammaMax]} \poly(\gamma)
  &\leq  A_{\vec{n}}(\gamma) \sum_{k=\kMin}^{\kMax} \frac{\max \{c(k),0\} \gammaMax^{k}}{(1-12\gammaMax)^{k}} + 
                                                      \frac{\min \{c(k),0\} \gammaMin^{k}}{(1-12\gammaMin)^{k}}
  \enspace .
  \label{eq:pr-upperBound-monotone}
\end{split} \end{equation}
If $\Delta$ is the bound obtained from \lemref{lem:markov-ineq}, then the first derivative can be bounded using
\begin{equation}
  \max_{\gamma \in [\gammaMin,\gammaMax]} \poly^{(1)} \leq \poly^{(1)}(\gammaMin) + \Delta (\gammaMax - \gammaMin)
  \enspace .
  \label{eq:d1-interval-bound}
\end{equation}
Depending on the values of the coefficients $c(k)$, bounding the first derivative below zero may require dividing up the interval into smaller sub-intervals and successively applying \eqnref{eq:d1-interval-bound}.  

Analysis for $\poly = \Pr[\malig_E,\good]/\Pr[\accept]$ is similar.  Monotonicity over the range $[\epsilon,\gammaMax]$ for small constant $\epsilon$ can be shown by using the lower bound equivalent of \eqnref{eq:d1-interval-bound}.  The maximum of $\poly$ is calculated by using \eqnref{eq:pr-upperBound-monotone} on $\Pr[\malig_E,\good]$ and evaluating $\Pr[\accept]$ at $\gammaMax$.  Lower bounding the first derivative in this way is not adequate for $[0,\epsilon]$, however, because the first derivative vanishes at $\gamma=0$. Due to the strict fault-tolerance of our circuits, coefficients $c(k)$ of $\Pr[\malig_E,\good]$ are zero for $0 \leq k \leq 3$ and so the derivatives up to third order are also zero. To show monotonicity over $[0, \epsilon]$ we evaluate the fourth derivative at $\gamma=0$ and then use \lemref{lem:markov-ineq} to bound the fifth derivative.

\section{The transformed noise model} \label{sec:transformed-noise-model}

\subsection{Construction of the model} \label{sec:transformed-noise-model-construction}

The transformed noise model uses upper bounds on the level-one malignant event probabilities to model each $1$-Rec as a single effective ``location'' in the level-two simulation. The construction here considers only $X$-error malignant events. Construction for $Z$-error events is nearly identical, and the upper bounds obtained from level-one counting contain no information about $X$ and $Z$ correlations so $X$ and $Z$ errors are not considered together at level-two.

From level-one counting, we have upper bounds on $\Pr[\malig_{IX}]$, $\Pr[\malig_{XI}]$, and $\Pr[\malig_{XX}]$ of the transversal CNOT, $\Pr[\malig_X^{\text{prep}}]$ of encoded $\ket{0}$ preparation, $\Pr[\malig_X^{\text{meas}}]$ of transversal $Z$-basis measurement and $\Pr[\malig_X^{\text{rest}}]$ of the transversal rest. Each of these bounds is a polynomial in $\gamma$, the probability of a given error on a physical CNOT (see \figref{fig:malig-events-x}).  Denote the upper bound polynomial for each event $\malig_E$ by $\mathcal{P}_E$.  Then let $\Gamma_X(\gamma)$ be a polynomial defined over the interval $0 \leq \gamma \leq \gamma_{\text{max}}$ such that, for all $\malig_E$, 
\begin{equation}
  \mathcal{P}_E(\gamma) \leq \alpha_E \Gamma_X(\gamma)
\label{eq:Gamma-definition}
\end{equation}
where $\alpha_E \geq 1$ is a constant.  The procedure for obtaining such a polynomial $\Gamma_X$ and constants $\alpha_E$ is outlined in the \secref{sec:bounding-polys}.

Now consider the level-two simulation. Level-two rectangles are composed of many level-one rectangles.  Following~\cite{AliferisGottesmanPreskill05} we replace each $1$-Rec with an implementation of the corresponding (level-zero) gate, starting with the right-most $1$-Recs and moving left.  If a $1$-Rec is correct, then it is replaced with an ideal gate.  If a $1$-Rec is incorrect, then the entire exRec is replaced with a faulty version of the gate.  Unlike~\cite{AliferisGottesmanPreskill05}, however, exRecs containing incorrect rectangles are replaced according to the malignant event that occurred.  For example, a CNOT $1$-Rec that is $\malig_{IX}$ is replaced with an ideal CNOT gate followed by the error $I\otimes X$. A $1$-Rec ($A$) that precedes an incorrect $1$-Rec ($B$) is replaced with a faulty gate only if $A$ is still incorrect after the ECs shared with the exRec containing $B$ have been removed.  Such an incorrect $1$-Rec ($A$) is said to be ``independently'' incorrect.

Let $K_1, K_2, K_3, K_4, K_5$ be the number of level-one exRecs that are independently $\malig_{IX}$, $\malig_{XI}$, $\malig_{XX}$, $\malig_X^{\text{prep}}$, $\malig_X^{\text{meas}}$ and $\malig_X^{\text{rest}}$, respectively.  Then the probability $\Pr[\vec{K}=\vec{k}]$ of this event is bounded by
\begin{equation}
    \label{eq:prK-level-2}
    \Pr[\vec{K}=\vec{k}] 
    \leq \Gamma_X(\gamma)^\abs{\vec{k}} \prod_{i=1}^6 \lceil \alpha_i \rceil^{k_i}
\end{equation}
where the ceiling is taken to allow integer representation in the computer analysis. Thus, characterization and analysis of the level-two components is similar to that used for the depolarizing noise model in \secref{sec:counting-detail} except that the weights associated with each type of error are different.  There is also no prefactor $A_{\vec{n}}$ in \eqnref{eq:prK-level-2} as there is in \eqnref{eq:PrConfig}.  This is because the error probabilities are now specified as upper bounds rather than equalities.  We do not have a proper upper bound, other than one, for the probability that a location ($1$-Rec) does \emph{not} fail.

\begin{figure}
\centering
\subfigure{
\includegraphics[width=6.79cm]{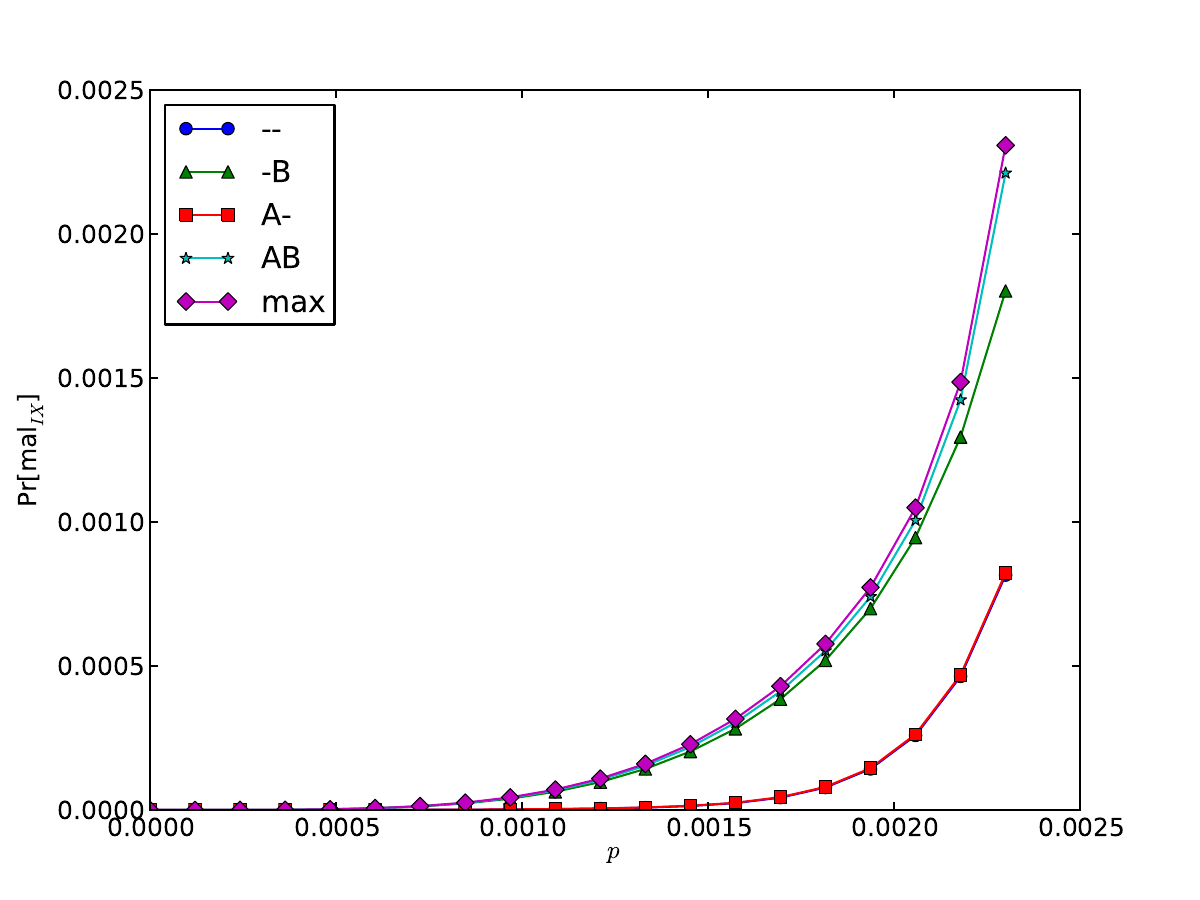}
}
\subfigure{
\includegraphics[width=6.79cm]{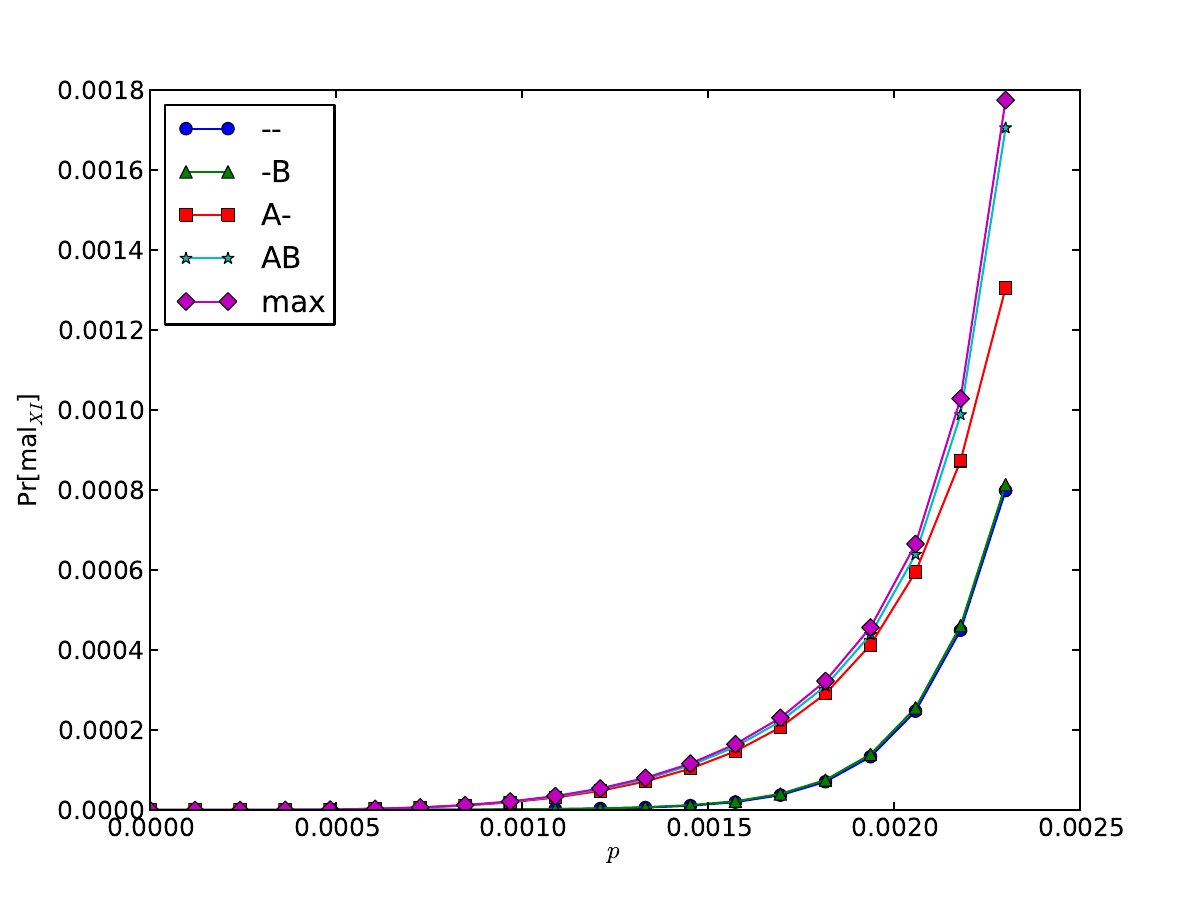}
}
\subfigure{
\includegraphics[width=6.79cm]{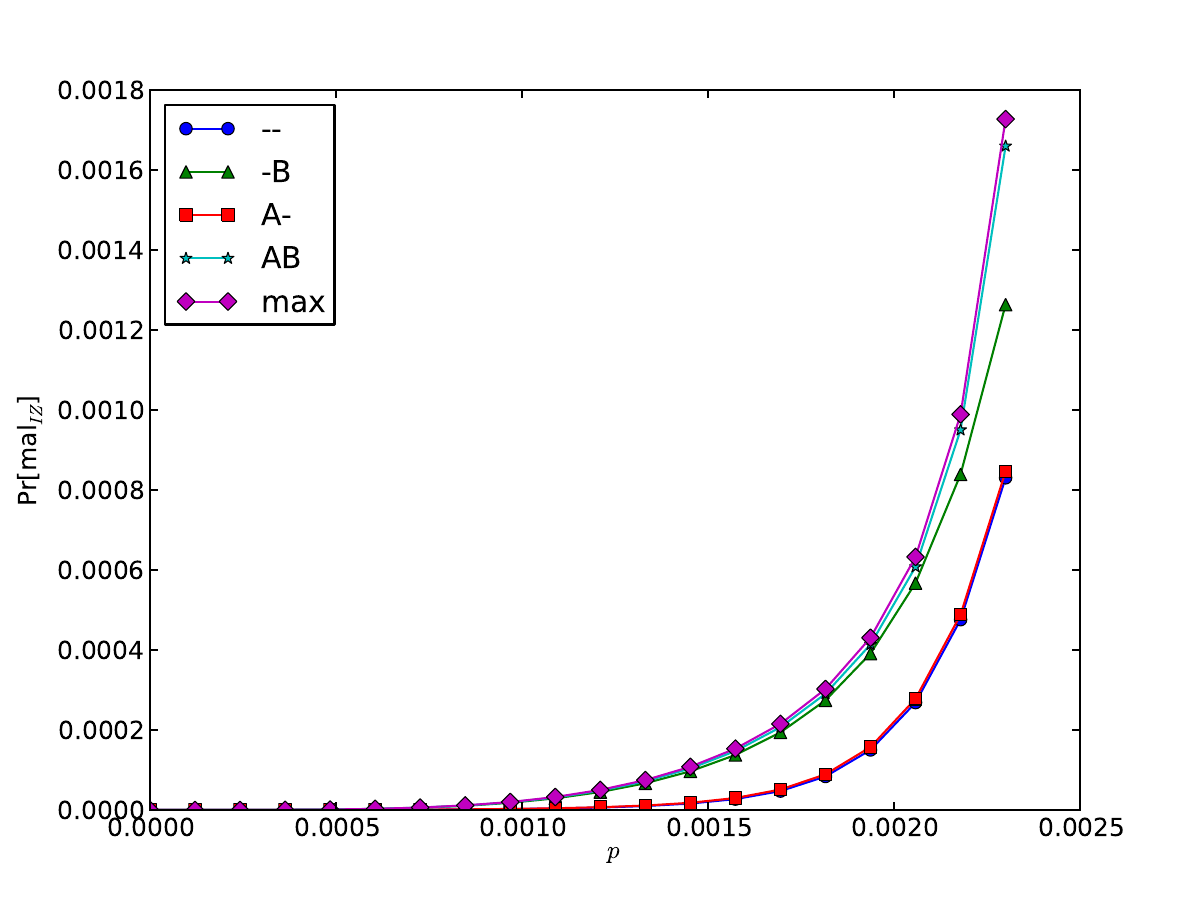}
}
\subfigure{
\includegraphics[width=6.79cm]{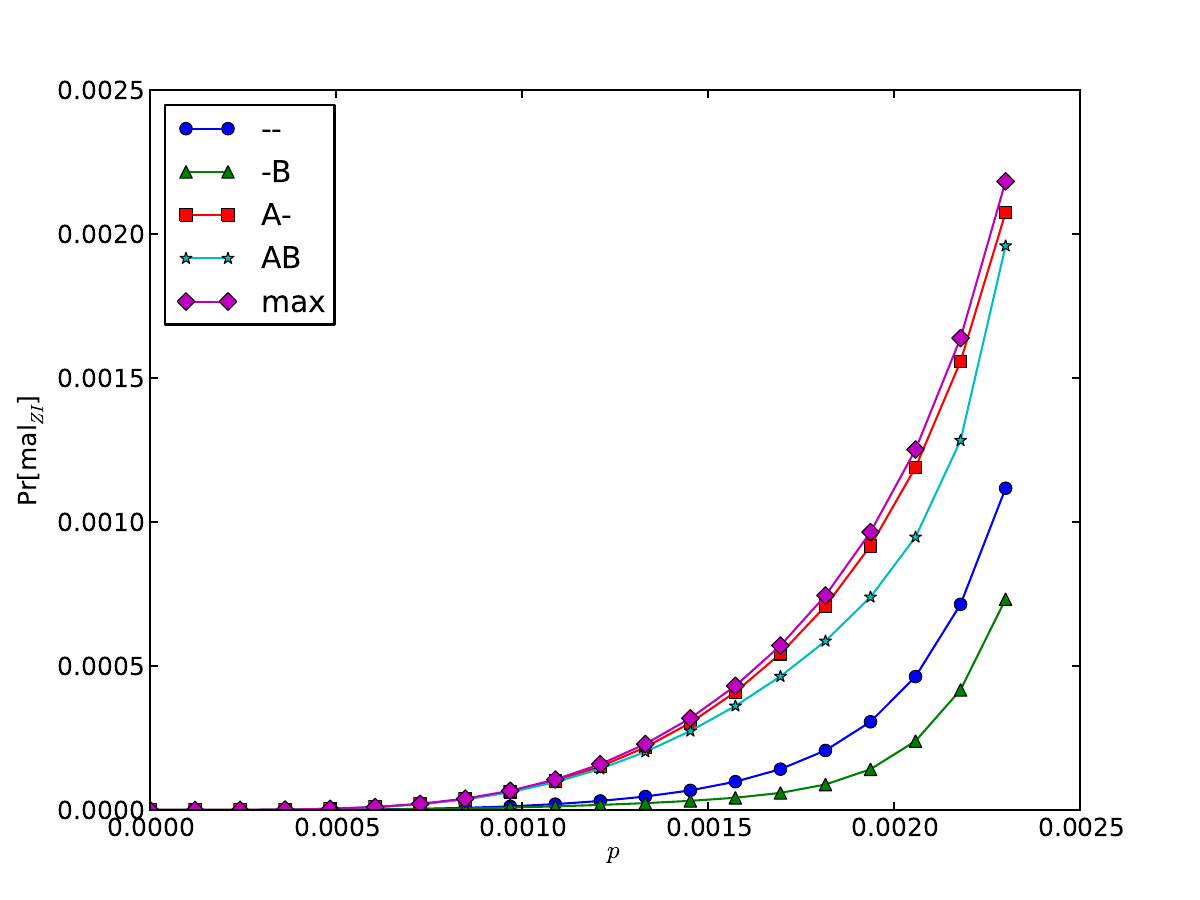}
}
\subfigure{
\includegraphics[width=6.79cm]{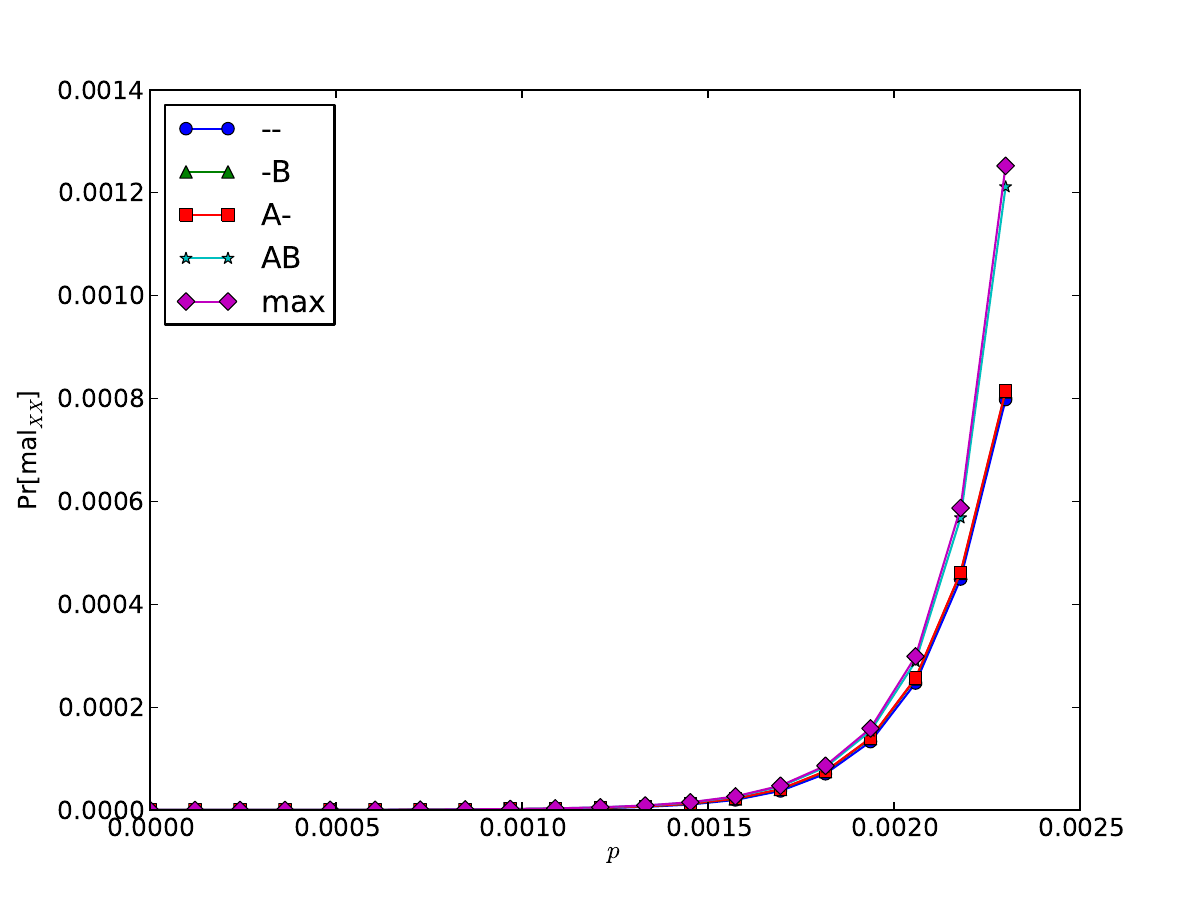}
}
\subfigure{
\includegraphics[width=6.79cm]{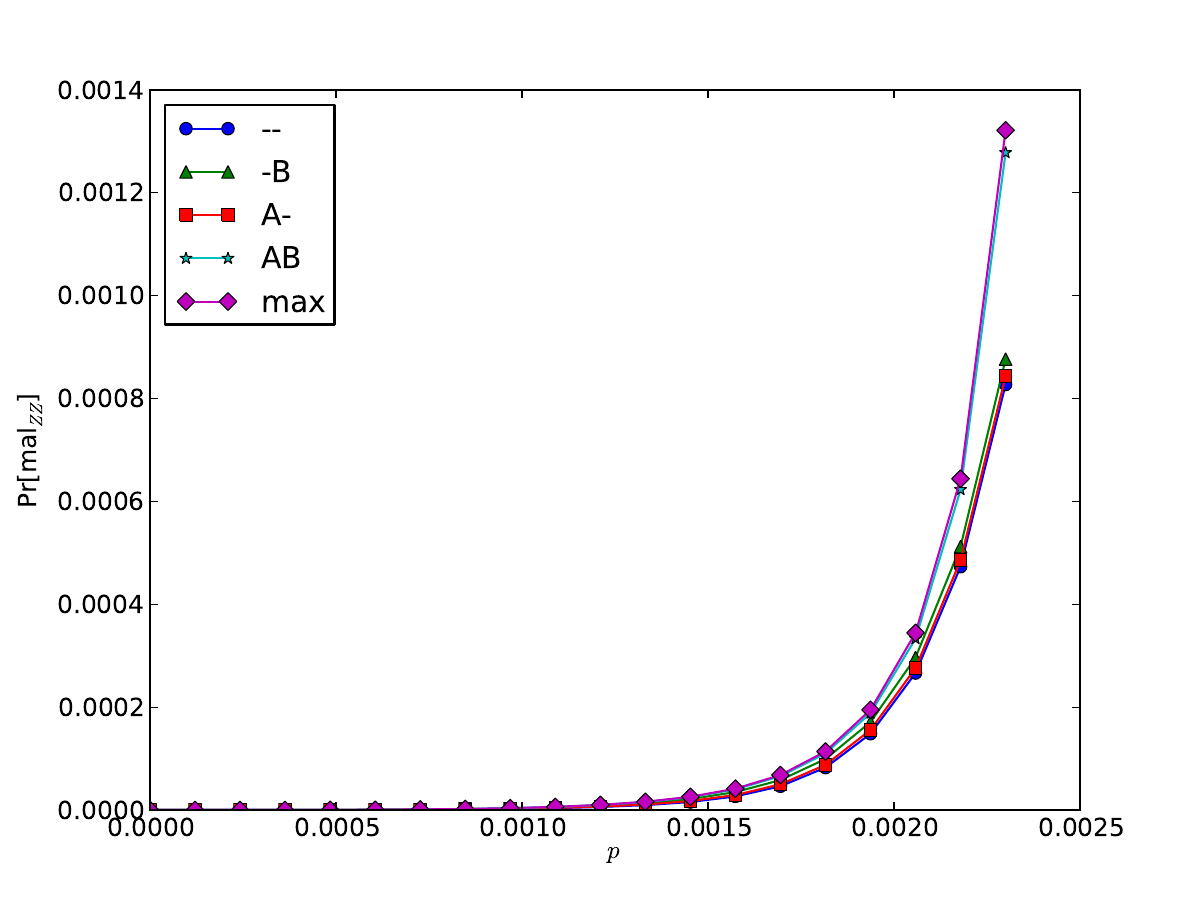}
}
\caption{The above plots show upper bounds on the probability of malignant events for the level-one CNOT exRec where the error corrections are based on Overlap-$4$ verification schedule. Upper bounds are plotted separately for each of four different CNOT exRecs: the full exRec (labeled AB), and the three incomplete exRecs in which the TEC on block~A---the control block---has been removed (-B), the TEC on block B---the target block---has been removed (A-) or the TEC on both blocks have been removed~(\texttt{--}).
Also shown is the polynomial ``max'' used to simultaneously upper bound all four possibilities (see \secref{sec:asymptotic-threshold}).}
\label{fig:cnot-malig-events}
\end{figure}

\subsection{Bounding malignant event polynomials} \label{sec:bounding-polys}

Construction of the transformed noise model requires bounding of several sets of polynomials in two different ways.  The first case compares polynomials of a fixed malignant event for each possible partial CNOT exRec (see \figref{fig:cnot-malig-events}).  In this case, we require only a single polynomial $\mathcal{P}^*$ which is strictly greater than or equal to all other polynomials in the set $P := \{\mathcal{P}_E \}$ over the interval $[0, \gammaMax]$.  The monotonicity of each of the polynomials (see \secref{sec:monotonicity}) over this interval means that constructing a $\mathcal{P}^*$ is relatively simple. First, choose some reasonably small $\gammaMin > 0$ and fix some $\Delta > 0$. Then a sufficient condition for $\mathcal{P}^*$ to be greater than all polynomials in $P$ over the interval $[\gammaMin, \gammaMax]$ is
\begin{equation}
  \mathcal{P}^*(\gammaMin + n \Delta) \geq \mathcal{P}_E(\gammaMin + (n+1) \Delta)
  \label{eq:maximality-condition}
\end{equation}
for all $\mathcal{P}_E \in P$ and all integers $0 \leq n \leq \lceil (\gammaMax-\gammaMin) / \Delta \rceil$.  $\mathcal{P}^*$ can be constructed by taking the $\mathcal{P}_E$ with the largest value at $\gammaMax$ (say), and adding a constant offset of at least $\max_E \mathcal{P}_E(\gammaMin)$ so that \eqnref{eq:maximality-condition} is satisfied.  Maximality over the remaining interval $[0,\gammaMin]$ follows by monotonicity.

In the second case, we compare malignant events from different types of exRecs.   We need to construct $\Gamma$ and determine values $\alpha_E$ for which the upper bound $\mathcal{P}_E \leq \alpha_E \Gamma$ in \eqnref{eq:Gamma-definition} is satisfied.  Construction of $\Gamma$ is similar to that of $\mathcal{P}^*$ from above. Let $\mathcal{P}_j$ be the polynomial with the largest derivative at $\gammaMax$.  Take $\mathcal{P}_j$ and divide by some appropriately large value $r$. Then add a constant offset $\epsilon := \max_{i\neq j} \mathcal{P}_i(\gammaMin)$ so that $\Gamma = \mathcal{P}_j/r + \epsilon$.  Finally, for each $\malig_E$, find the minimum value of $\alpha_E$ such that $\mathcal{P}^* := \alpha_E \Gamma$ satisfies condition \eqnref{eq:maximality-condition}.

In practice, the quality of the resulting bounds depends on the choice of $\gammaMin$ and the number of plotted points $n$.  We find that a value of $\gammaMin = \gammaMax/10$ or $\gammaMin=\gammaMax/100$, and $n=1000$ works well.  More sophisticated methods can also be used.  For example, the value of $\Delta$ could vary over the interval to better capture exponential behavior of the polynomials.

\subsection{Proof of \texorpdfstring{\claimref{clm:P2-multiplicative}}{Claim~\ref{clm:P2-multiplicative}}}\label{sec:P2-multiplicative-proof}

We conclude our analysis of the transformed error model by proving \claimref{clm:P2-multiplicative} that the level-two malignant event upper bounds decrease exponentially with $\gamma$.

%\begin{proof}
\proof{
From \appref{sec:counting-detail} and \appref{sec:transformed-noise-model-construction} we see that $\mathcal{P}^{(2)}_E$ is expressed as
  \begin{equation}
    \frac{\Pr[\malig_E,\good]}{\Pr[\accept]} + \Pr[\bad \vert \accept]
    \enspace .
  \end{equation}
  The $\Pr[\malig_E,\good]$ term is expressed as a sum of the form
  \begin{equation}
    \sum_{k=0}^{\kMax} c(k) \Gamma^k
    \label{eq:pr-malig-form}
  \end{equation}
  where all of the coefficients $c(k)$ are non-negative (because there are no $XZ$ corrections) and it is understood that $\Gamma$ is a function of $\gamma$.  The $\Pr[\accept]$ term in the denominator is a product of terms of the form
  \begin{equation}
    1 - \sum_{k=0}^{\kMax} c(k) \Gamma^k
    \label{eq:pr-accept-form}
  \end{equation}
  where, again, all $c(k)$ are non-negative.  $\Pr[\bad \vert \accept]$ is a sum of terms similar to \eqnref{eq:pr-malig-form}, some of which contain \eqnref{eq:pr-accept-form} terms in the denominator. 
  
  Due to the strict fault-tolerance of our circuits, the coefficients $c(k)$ of \eqnref{eq:pr-malig-form} and the numerator coefficients of $\Pr[\bad \vert \accept]$ are zero for $k \leq 3$. Therefore, for $0\leq \epsilon \leq 1$, $\mathcal{P}^{(2)}_E(\epsilon \Gamma)$ is a sum of non-negative terms of the form
  \begin{align}
    \frac{\sum_{k=0}^{\kMax} c(k) (\epsilon \Gamma)^k}{1 - \sum_{k=0}^{\kMax} c(k) (\epsilon \Gamma)^k}
    \leq \frac{\epsilon^4 \sum_{k=4}^{\kMax} c(k) \Gamma^k}{1 - \sum_{k=0}^{\kMax} c(k) \Gamma^k}
  \end{align}
  which completes the proof.
}
%\end{proof}

\end{document}